\definecolor{citecolor}{HTML}{0000C0}
\definecolor{urlcolor}{HTML}{000080}
\newtheorem{theorem}{Theorem}
\newtheorem{lemma}[theorem]{Lemma}
\newtheorem{corollary}[theorem]{Corollary}
\theoremstyle{remark}
\newtheorem{remark}[theorem]{Remark}
\newcommand{\namedref}[2]{\hyperref[#2]{#1~\ref*{#2}}}
\newcommand{\sectionref}[1]{\namedref{Section}{#1}}
\newcommand{\theoremref}[1]{\namedref{Theorem}{#1}}
\newcommand{\corollaryref}[1]{\namedref{Corollary}{#1}}
\newcommand{\figureref}[1]{\namedref{Figure}{#1}}
\newcommand{\lemmaref}[1]{\namedref{Lemma}{#1}}
\newcommand{\tableref}[1]{\namedref{Table}{#1}}
\newcommand{\remarkref}[1]{\namedref{Remark}{#1}}
\renewcommand{\vec}[1]{\mathbf{#1}}
\newcommand{\totaln}{\eta}
\DeclareMathOperator{\polylog}{polylog}
\DeclareMathOperator{\E}{\mathbf E}
\newenvironment{mycover}
               {\list{}{\listparindent 0pt
                        \itemindent    \listparindent
                        \leftmargin    0pt
                        \rightmargin   0pt
                        \parsep        0pt}%
                \raggedright
                \item\relax}
               {\endlist}
\begin{document}

\hypersetup{
    pdfauthor={Christoph Lenzen, Joel Rybicki, Jukka Suomela},
    pdftitle={Efficient Counting with Optimal Resilience},
}

\vspace*{2cm}
\begin{mycover}
{\LARGE \textbf{Efficient Counting with Optimal Resilience\footnote[1]{The manuscript is an extended and revised version of two preliminary conference reports~\cite{lenzen15towards,lenzen15efficient} that appeared in the Proceedings of the 34th Annual ACM Symposium on Principles of Distributed Computing (PODC 2015) and in the Proceedings of the 29th International Symposium on Distributed Computing (DISC 2015).}}\par}

\bigskip
\bigskip

\medskip
\textbf{Christoph Lenzen}

\smallskip
{\small Department of Algorithms and Complexity, \\
Max Planck Institute for Informatics\par}

\bigskip
\textbf{Joel Rybicki}

\smallskip
{\small Helsinki Institute for Information Technology HIIT, \\
Department of Computer Science, Aalto University\par}

\medskip
{\small Department of Algorithms and Complexity, \\
Max Planck Institute for Informatics\par}

\bigskip
\textbf{Jukka Suomela}

\smallskip
{\small Helsinki Institute for Information Technology HIIT, \\
Department of Computer Science, Aalto University\par}

\end{mycover}

\bigskip
\paragraph{Abstract.}

Consider a complete communication network of $n$ nodes, where the nodes receive a common clock pulse. We study the synchronous $c$-counting problem: given any starting state and up to $f$ faulty nodes with arbitrary behaviour, the task is to eventually have all correct nodes labeling the pulses with increasing values modulo $c$ in agreement. Thus, we are considering algorithms that are self-stabilising despite Byzantine failures. In this work, we give new algorithms for the synchronous counting problem that (1)~are deterministic, (2)~have optimal resilience, (3)~have a linear stabilisation time in $f$ (asymptotically optimal), (4)~use a small number of states, and consequently, (5) communicate a small number of bits per round. Prior algorithms either resort to randomisation, use a large number of states and need high communication bandwidth, or have suboptimal resilience. In particular, we achieve an \emph{exponential} improvement in both state complexity and message size for deterministic algorithms. Moreover, we present two complementary approaches for reducing the number of bits communicated during and after stabilisation.

\thispagestyle{empty}
\setcounter{page}{0}
\newpage

\section{Introduction}\label{sec:intro}

In this work, we design space- and communication-efficient, self-stabilising, Byz\-antine
fault-tolerant algorithms for the \emph{synchronous counting problem}. We are
given a complete communication network on $n$ nodes, with arbitrary initial
states. There are up to $f$ faulty nodes. The task is to synchronise the nodes
so that all non-faulty nodes will count rounds modulo~$c$ in agreement. For
example, here is a possible execution for $n = 4$ nodes, $f = 1$ faulty node,
and counting modulo $c = 3$; the execution stabilises after $t = 5$ rounds:
\begin{center}
    \begin{tabular}{@{}l@{\ \quad}lllllllllll@{\ \ }ll@{}}
    & \multicolumn{5}{@{}l}{Stabilisation}
    & \multicolumn{7}{l}{Counting} \\
    \cmidrule[\heavyrulewidth](r){2-6}
    \cmidrule[\heavyrulewidth](lr){7-13}
    Node 1: & 2 & 2 & 0 & 2 & 0 & 0 & 1 & 2 & 0 & 1 & 2 & \ldots\! \\
    Node 2: & 0 & 2 & 0 & 1 & 0 & 0 & 1 & 2 & 0 & 1 & 2 & \ldots\! \\
    Node 3: & \multicolumn{11}{@{}l}{\emph{faulty node, arbitrary behaviour}} & \ldots\! \\
    Node 4: & 0 & 0 & 2 & 0 & 2 & 0 & 1 & 2 & 0 & 1 & 2 & \ldots\! \\
    \end{tabular}
\end{center}

Synchronous counting is a coordination primitive that can be used e.g.\ in large
integrated circuits to synchronise subsystems to easily implement
\emph{mutual exclusion} and \emph{time division multiple access} in a
fault-tolerant manner. Note that in this context, it is natural to assume that a
synchronous clock signal is available, but the clocking system usually does not
provide explicit round numbers. Solving synchronous counting thus yields 
highly dependable round counters for subcircuits.

If we neglect communication, counting and consensus are essentially
equivalent~\cite{dolev07actions,dolev13counting,dolev15counting}. In particular,
many lower bounds on (binary) consensus directly apply to the counting
problem~\cite{dolev85bounds,pease80reaching,fischer82lower}. However, the known
generic reduction of counting to consensus incurs a factor-$f$ overhead in space
and message size. In this work, we present techniques that reduce the
number of bits nodes broadcast in each round to $O(\log^2 f + \log c)$.

\subsection{Contributions}

Our contributions constitute of two parts. First, we give novel space-efficient deterministic algorithms for synchronous counting with optimal resilience and fast stabilisation time. Second, we show
how to extend these algorithms in a way that reduces the number of communicated bits \emph{during} and \emph{after} stabilisation.

\paragraph{Space-efficient counting algorithms.} In this work, we take the following approach for devising \emph{communication-efficient} counting algorithms: we first design \emph{space-efficient} algorithms, that is, algorithms in which each node stores only a few bits between consecutive rounds. Space-efficient algorithms are particularly attractive from the perspective of fault-tolerant systems: if we can keep the number of state bits small, we can also reduce the overall complexity of the system, which in turn makes it easier to use highly reliable components for an implementation.

Once we have algorithms that only need a small number of bits to encode the local state of a node, we also get algorithms that use small messages: the nodes can simply broadcast their entire state to everyone. Our main result is summarised in the following theorem; here \emph{$f$-resilient} means that we can tolerate up to $f$ faulty nodes:

\begin{restatable}{theorem}{mainthm}\label{thm:main-theorem}
For any integers $c,n > 1$ and $f < n/3$, there exists a deterministic $f$-resilient
synchronous $c$-counter that runs on $n$ nodes, stabilises in $O(f)$ rounds, and uses $O(\log^2 f+\log c)$ bits to encode the state of a node. 
\end{restatable}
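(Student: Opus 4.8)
The plan is to build the counter in two layers: a \emph{bootstrapping layer} producing a small-modulus synchronised counter---equivalently, synchronised ``pulses'' spaced $\Theta(f)$ rounds apart---using only $O(\log^2 f)$ state bits, and a \emph{reduction layer} turning such pulses into a full $c$-counter at an additional cost of $O(\log c)$ bits. I would establish the reduction layer first, as it is conceptually cleaner. Assume we already have an $f$-resilient synchronised counter modulo some $B = \Theta(f)$ with $B \geq f+2$ (in fact a bit larger), stabilising in $O(f)$ rounds. Between two consecutive wrap-arounds of this counter there are $\Omega(f)$ rounds, which is exactly the budget needed to run one instance of deterministic Byzantine multivalued consensus with optimal resilience $f < n/3$ (a phase-king / Berman--Garay-style protocol), using $O(\log c)$ bits per round and $O(f)$ rounds. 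Each node feeds into this consensus its current local estimate of ``the count at the next wrap-around''; once the small counter has stabilised, the consensus instance is reset correctly at every wrap-around, so within $O(f)$ rounds all correct nodes agree on a common value $m$, and thereafter each simply increments $m$ modulo $c$ every round and re-runs the consensus at each subsequent wrap-around (which is harmless and keeps the argument uniform under self-stabilisation). The resulting state is the small counter's state, the consensus protocol's $O(\log f)$-bit internal state, and the $O(\log c)$-bit agreed value, for a total of $O(\log^2 f + \log c)$ bits and $O(f)$ stabilisation.

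For the bootstrapping layer I would construct the small-modulus counter recursively, where each recursive level adds $O(\log f)$ state bits and reduces the task to an instance that must tolerate only a smaller fraction of faults (roughly halving the fault count handled by the inner protocol). Concretely, from algorithms that work under at most $\lfloor f/2 \rfloor$ faults one builds an $f$-resilient counter by wrapping them in a ``vote-and-filter'' layer of $O(\log f)$ bits that reconciles the outputs of the lower-resilience instances and damps the influence of the additional faulty nodes; the layer also produces the modulus-$\Theta(f)$ wrap-around signal needed above. Unrolling gives $O(\log f)$ levels, each contributing $O(\log f)$ bits, hence $O(\log^2 f)$ bits total; the stabilisation times across levels form a geometric series dominated by the top level, yielding $O(f)$. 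The base case is a counter for constant $f$, obtained by brute-force reduction to consensus on $O(1)$ faults at a cost of $O(1)$ bits.

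The main obstacle is the recursive step: one must push the resilience all the way to the optimal threshold $f < n/3$ --- naive majority/median reconciliation only survives up to $f < n/4$ --- while simultaneously (i) keeping the per-level state increment $O(\log f)$ rather than $\Theta(f)$ (the latter being exactly the overhead the generic reduction to consensus incurs), (ii) keeping stabilisation linear through all $O(\log f)$ levels, and (iii) preserving full self-stabilisation, so that no level may assume anything about the initial states of the levels below it and every reset event (of a consensus instance or a sub-counter) fires correctly even when the triggering counter has only just stabilised. I expect the heart of the proof to be an invariant showing that, once all faulty behaviour is ``seen'' consistently by enough correct nodes, the filtering layer forces agreement and correct counting within $O(f)$ rounds irrespective of the lower levels' configurations.
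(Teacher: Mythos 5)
Your high-level architecture matches the paper's: a recursive resilience-boosting scheme with $O(\log f)$ levels, each level running lower-resilience sub-counters with small moduli $\Theta(f)$ and one instance of a phase-king-style consensus clocked by those sub-counters, with the full modulus $c$ paid only once at the top level. Your ``geometric series'' argument for stabilisation time, the observation that each level costs $O(\log f)$ state bits, the base case, and the use of phase king for the reduction layer are all essentially what \theoremref{thm:twoblocks} and its recursive application do.

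However, the step you honestly flag as the main obstacle --- the ``vote-and-filter'' layer that boosts resilience from $\approx f/2$ to $f < n/3$ --- is exactly where the key idea lives, and your description does not contain it. Naive reconciliation of multiple low-resilience instances running on the \emph{same} node set cannot work, because with $f > f/2$ actual faults none of those instances has any guarantee. The paper's resolution is to split the $n$ nodes into \emph{two disjoint blocks} of size roughly $n/2$, each running its own counter with resilience $f_i \approx (f-1)/2$ where $f_0 + f_1 + 1 = f$: by pigeonhole, at least one block has few enough faults for its counter to stabilise. The remaining difficulty is that nodes do not know which block is the correct one, and the faulty block may look correct locally and inconsistently across nodes. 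The paper solves this with three ingredients: (i) a threshold-vote (strong-majority) filter so that all correct nodes see a \emph{common} candidate value or $\bot$ for each block's counter; (ii) per-node cooldown counters so that a node that once saw an inconsistency distrusts that block for a fixed window, making ``who trusts block $i$'' itself consistent modulo timing; and (iii) a \emph{leader-pointer} scheme in which block $0$'s pointer switches every $\tau$ rounds (via a $2\tau$-counter) and block $1$'s pointer switches every $3\tau$ rounds (via a $6\tau$-counter) --- the harmonically related periods guarantee that, within $O(\tau)$ rounds, \emph{both} pointers simultaneously point at the correct block for $\tau$ consecutive rounds. Only then do you have a ``weak'' counter that is consistent long enough to clock a \emph{single} phase-king instance to agreement, and \lemmaref{lemma:maintain_agreement} ensures the agreement persists even after the weak counter later misbehaves. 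Without the disjoint-block partition and the two-period leader-pointer trick (or an equivalent mechanism), the invariant you gesture at --- that the filtering layer ``forces agreement within $O(f)$ rounds irrespective of the lower levels' configurations'' --- is precisely what remains unproved.
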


Our main technical contribution is a recursive construction that shows how to ``amplify''
the resilience of a synchronous counting algorithm. Given a synchronous counter
for some values of $n$ and $f$, we will show how to design synchronous counters
for larger values of $n$ and $f$, with a very small increase in time and state
complexity. This has two direct applications:
\begin{enumerate}
    \item From a practical perspective, we can apply existing computer-designed algorithms (e.g.\ $n = 4$ and $f = 1$) as a building block in order to design efficient deterministic algorithms for a moderate number of nodes (e.g., $n = 36$ and $f = 7$).
    \item From a theoretical perspective, we can design deterministic algorithms for synchronous counting for any $n$ and any $f < n/3$, with a stabilisation time of $\Theta(f)$, and with only $O(\log^2 f)$ bits of state per node.
\end{enumerate}
The state complexity and message size is an \emph{exponential} improvement over prior work, and
the stabilisation time is asymptotically optimal for deterministic algorithms~\cite{fischer82lower}. 

\paragraph{Reducing communication after stabilisation.} In our deterministic algorithms, each node only needs to store a few number of bits between consecutive rounds, and thus, a node can e.g. afford to broadcast its entire state to all other nodes in each round. Moreover, we present a technique to reduce the number of communicated bits further.

We give a deterministic construction in which \emph{after}
stabilisation each node broadcasts $O(1 + B\log B)$ bits every $\kappa$ rounds, where $B = O(\log c / \log \kappa)$, for an
essentially unconstrained choice of $\kappa$, at the expense of additively
increasing the stabilisation time by $O(\kappa)$. In particular, for the special
case of optimal resilience and polynomial counter size, we obtain the following
result.

\begin{restatable}{corollary}{SILENCE}\label{coro:SILENCE!}
For any $n > 1$ and $c = n^{O(1)}$ that is an integer multiple of $n$,
there exists a synchronous $c$-counter that runs on $n$ nodes, has
optimal resilience $f=\lfloor (n-1)/3\rfloor$, stabilises in $\Theta(n)$
rounds, requires $O(\log^2 n)$ bits to encode the state of a node, and
for which after stabilisation correct nodes broadcast aysmptotically optimal
$O(1)$ bits per $\Theta(n)$ rounds.
\end{restatable}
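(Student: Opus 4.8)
The plan is to derive \corollaryref{coro:SILENCE!} by instantiating \theoremref{thm:main-theorem} at optimal resilience and then post-processing the resulting counter with the communication-reduction transformation sketched above. First I would set $f := \lfloor (n-1)/3 \rfloor$, the largest integer satisfying the hypothesis $f < n/3$ of \theoremref{thm:main-theorem} and the information-theoretic optimum for Byzantine agreement-type problems; note $f = \Theta(n)$, and since $c$ is a positive multiple of $n$ we have $n \le c = n^{O(1)}$, so $\log c = \Theta(\log n)$. Feeding $n$, this $f$, and the modulus $c$ into \theoremref{thm:main-theorem} yields a deterministic $f$-resilient synchronous $c$-counter on $n$ nodes that stabilises in $O(f)=O(n)$ rounds and uses $O(\log^2 f + \log c) = O(\log^2 n)$ bits of state. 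Together with the $\Omega(f)$ stabilisation lower bound for deterministic counters, this already pins the stabilisation time to $\Theta(n)$ and establishes every claimed property except the sparse-communication guarantee.

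Next I would apply the transformation that trades broadcast frequency for stabilisation time, choosing the window parameter $\kappa := n$. The hypothesis that $c$ is an integer multiple of $n$ is exactly what makes this choice admissible: it guarantees $\kappa \mid c$, so the value mod $c$ decomposes cleanly into a ``coarse'' component of range $c/\kappa$ that advances once per window (on which the nodes perform a robust agreement using the $O(1+B\log B)$-bit broadcasts) and a ``fine'' component of range $\kappa$ that each node advances locally between broadcasts. With $\kappa=\Theta(n)$ we get $\log\kappa = \Theta(\log n) = \Theta(\log c)$, hence $B = O(\log c/\log\kappa) = O(1)$, so after stabilisation each correct node broadcasts $O(1+B\log B) = O(1)$ bits every $\kappa = \Theta(n)$ rounds. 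The stabilisation time grows only additively by $O(\kappa) = O(n)$ and thus stays $\Theta(n)$, and one checks that the extra state the transformation needs — the coarse counter ($O(\log c) = O(\log n)$ bits), a mod-$\kappa$ phase counter ($O(\log\kappa) = O(\log n)$ bits), and the $O(1)$ auxiliary chunks and flags it manipulates per window — is dominated by the $O(\log^2 n)$ already spent by the base counter.

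It remains to justify the word ``asymptotically optimal'', for which I would record the matching lower bound: at optimal resilience, correct nodes cannot stay globally silent for $\omega(n)$ rounds at a time without an adversary being able to prevent the correction of a seeded disagreement, so $\Omega(1)$ bits per node per $\Theta(n)$ rounds are unavoidable; I would either cite the relevant bound or give this short argument explicitly. The step I expect to be the real obstacle is not any of the parameter bookkeeping above but verifying that the communication-reduction transformation genuinely composes with the \emph{recursively built} counter of \theoremref{thm:main-theorem} as a black box: one has to ensure that the once-per-window agreement on the $O(1)$ coarse chunks inherits optimal resilience and $O(\kappa)$-time stabilisation from the underlying counter, and that the locally advanced fine counter is re-synchronised correctly at each window boundary even when the execution starts from an arbitrary configuration with up to $f$ Byzantine nodes.
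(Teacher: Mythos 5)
Your proposal follows essentially the same route as the paper: instantiate \theoremref{thm:SILENCE!} (which already packages \theoremref{thm:main-theorem} with the silencing wrapper from \sectionref{sec:silence}) at optimal resilience with $\kappa = \Theta(n)$, use $n \mid c$ to satisfy $\kappa \mid c$, observe $B = O(\log c/\log\kappa) = O(1)$, and supply a lower bound for the optimality of $O(1)$ bits per $\Theta(n)$ rounds. The composition you flag as the ``real obstacle'' is already established in black-box form by \theoremref{thm:SILENCE!} (through Corollaries~\ref{coro:stabilisation} and~\ref{coro:communication} and \lemmaref{lemma:bits}), so the paper's proof is just this one-line instantiation together with the lower-bound remark --- which the paper phrases via every correct node needing to hear from $f+1 = \Theta(n)$ others every $\Omega(n)$ rounds to verify its counter, rather than your global-silence formulation.
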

We remark that in the above result we simply reduce the frequency of communication and the size of messages instead of e.g.\ bounding the number of nodes communicating in any given round (known as broadcast efficiency)~\cite{takimoto12wireless}. In our work, we exploit synchrony after stabilisation to schedule communication, and thus, our approach is to be contrasted with attempting to reduce the total number of communication partners or communicating nodes after stabilisation~\cite{delporte-gallet07robust,devismes09communication,takimoto12wireless}.

\paragraph{Reducing the number of messages.}

To substantiate the conjecture that finding algorithms with small state complexity may lead to highly communication-efficient solutions, we proceed to consider a slightly stronger synchronous \emph{pulling model}. In this model, a node may send a request to another node and receive a response in a single round, based on the state of the responding node at the beginning of the round. The cost for the exchange is then attributed to the pulling node; in a circuit, this translates to each node being assigned an energy budget that it uses to ``pay'' for the communication it triggers. In this model, it is straightforward to combine our recursive construction used in \theoremref{thm:main-theorem} with random sampling to obtain the following results:
\begin{enumerate}
    \item We can achieve the same asymptotic running time and state complexity as the deterministic algorithm from \theoremref{thm:main-theorem} with each node pulling only $\polylog n$ messages in each round. The price is that the resulting algorithm retains a probability of $n^{-\polylog n}$ to fail in each round even after stabilisation and that the resilience is $f < n/(3+\gamma)$ for any constant $\gamma > 0$.
    \item If the failing nodes are chosen independently of the algorithm, we can fix the random choices. This results in a pseudorandom algorithm which stabilises with a probability of $1-n^{-\polylog n}$ and in this case keeps counting correctly.
\end{enumerate}

\subsection{Our Approach}

Most prior deterministic algorithms for synchronous counting and closely-related problems utilise consensus protocols~\cite{hoch06digital,dolev07actions}. Indeed, if we ignore space and communication, reductions exist both ways showing that the problems are more or less equivalent~\cite{dolev15survey}; see \sectionref{sec:related} for further discussion on prior work.

However, to construct fast space- and communication-efficient counters, we are facing a chicken-and-egg problem:
\begin{itemize}
    \item \textbf{From counters to consensus:} If the correct nodes could agree on a counter, they could jointly run a \emph{single} instance of synchronous consensus.
    \item \textbf{From consensus to counters:} If the nodes could run a consensus algorithm, they could agree on a counter.
\end{itemize}
A key step to circumvent this obstacle is the following observation:
\begin{itemize}
    \item \textbf{From unreliable counters to consensus:} If the correct nodes can agree on a counter \emph{at least for a while}, they can jointly run a single instance of consensus.
    \item \textbf{From consensus to reliable counters:} Consensus can be then used to facilitate agreement on the output counter, and it is possible to maintain agreement even if the underlying unreliable counters fail later on.
\end{itemize}
The task of constructing counters that are correct only once in a while is easier;
in particular, it does not require that we solve consensus in the process. As our main technical result,
we show how to ``amplify'' the resilience $f$, at a cost of losing some guarantees:
\begin{itemize}
    \item \textbf{Input:} Two counters with a small $f$; guaranteed to work permanently after stabilisation.
    \item \textbf{Output:} A counter with a large $f$; guaranteed to work only once in a while.
\end{itemize}
This can be then used to jointly run a single instance of consensus and stabilise the output. We show how to obtain such a counter based on simple local consistency checks, timeouts, and threshold voting.

In the end, a recursive application of this scheme allows us to build space-efficient counting algorithms for any $n$ with optimal resilience. At each level of recursion, we only need to run a single instance of consensus. As there will be $O(\log f)$ levels of recursion, in total each node participates in only $O(\log f)$ consensus instances.

\subsection{Structure}

\sectionref{sec:related} reviews prior work on impossibility results and counting algorithms. \sectionref{sec:preliminaries} provides a formal description of the basic model of computation and the synchronous counting problem. \sectionref{sec:construction} gives the main technical result on resilience boosting, and \sectionref{sec:recursion} applies it to construct fast and communication-efficient algorithms. \sectionref{sec:silence} shows how to reduce the number of bits communicated during and after stabilisation. \sectionref{sec:random} discusses the pulling model and randomised sampling.

\section{Related Work}\label{sec:related}

In this section, we first overview impossibility results related to counting, and then discuss both deterministic and randomised algorithms for the counting problem.

\paragraph{Impossibility results.}

As mentioned, counting is closely related to consensus as reductions exist both ways~\cite{dolev15survey}: consensus can be solved in time $O(T)$ tolerating $f$ faults if and only if counting can be solved in time $O(T)$ tolerating $f$ faults.

With this equivalence in mind, several impossibility results for consensus directly hold for counting as well. First, consensus cannot be solved in the presence of $n/3$ or more Byzantine failures~\cite{pease80reaching}. Second, any deterministic $f$-resilient consensus algorithm needs to run for at least $f+1$ communication rounds~\cite{fischer82lower}. Third, it is known that the connectivity of the communication network must be at least $2f+1$~\cite{dolev82byzantine}. Finally, any consensus algorithm needs to communicate at least $\Omega(nf)$ bits in total~\cite{dolev85bounds}. 

In terms of communication complexity, no better bound than $\Omega(nf)$ on the \emph{total} number of communicated bits is known. While non-trivial for consensus, this bound turns out to be trivial for deterministic counting algorithms: a self-stabilising algorithm needs to verify its output, and to do that, each of the $n$ nodes needs to receive information from at least $f+1 = \Omega(f)$ other nodes to be certain that some other non-faulty node has the same output value. Similarly, no non-constant lower bounds on the number of \emph{state bits} nodes are known; however, a non-trivial constant lower bound for the case $f=1$ is known~\cite{dolev15counting}. 

\paragraph{Prior algorithms.}

\begin{table}[t!]
\centering
\begin{tabular}{@{}l@{\quad\ \ }l@{\quad\ \ }l@{\quad\ \ }l@{\quad\ \ }l@{}}
  \toprule
  resilience & stabilisation time & state bits & deterministic & reference \\
  \midrule
  $f<n/3$ & $O(1)$ & $n^{O(1)}$ & no & \cite{ben-or08fast}\ $^{(*)}$ \\
  $f<n/3$ & $O(f)$ & $O(f \log f)$ & yes & \cite{dolev07actions} \\ 
  $f<n/3$ & $2^{2(n-f)}$ & 2 & no & \cite{dolev00self-stabilization,dolev04clock-synchronization} \\
  $f<n/3$ & $\min \bigl\{ 2^{2f+2} + 1,\, 2^{O(f^2/n)} \bigr\}$ & 1 & no & \cite{dolev15counting} \\
  $f=1$, $n\geq 4$ & 7 & 2 & yes & \cite{dolev15counting} \\
  $f = n^{1-o(1)}$ & $O(f)$ & $O(\log^2 f/\log \log f)$ & yes &
  \cite{lenzen15towards}\\
  \midrule
  $f<n/3$ & $O(f)$ & $O(\log^2 f)$ & yes & this work \\
  \bottomrule
\end{tabular}
\caption{Summary of counting algorithms for the case $c=2$. For randomised
algorithms, we list the expected stabilisation time. $^{(*)}$~The solution
from~\cite{ben-or08fast} relies on a shared coin---details vary, but all known shared coins with large resilience require large
states and messages.}\label{table:algorithms}
\end{table}

There are several algorithms to the synchronous counting problem, with different trade-offs in terms of resilience, stabilisation time, space complexity, communication complexity, and the use of random bits. For a brief summary, see \tableref{table:algorithms}.

Designing space-efficient \emph{randomised} algorithms for synchronous counting
is fairly
straightforward~\cite{dolev00self-stabilization,dolev04clock-synchronization,dolev15counting}:
for example, the nodes can simply choose random states until a clear majority of
nodes has the same state, after which they start to follow the majority.
Likewise, given a shared coin, one can quickly reach agreement by defaulting to
the coin whenever no clear majority is observed~\cite{ben-or08fast}. However,
existing highly-resilient shared coins are very inefficient in terms of communication or need additional assumptions, such as private communication links between correct nodes.
Less resilient shared coins are easier to obtain: resilience $\Theta(\sqrt{n})$ is achieved by each node announcing the outcome of an independent coin flip and locally outputting the (observed) majority value. In addition, $\Omega(n / \log^2 n)$-resilient Boolean functions give fast communication-efficient coins~\cite{ajtai93influence}.
Designing quickly stabilising algorithms that are both communication-efficient and
space-efficient has turned out to be a challenging
task~\cite{dolev07actions,dolev13counting,dolev15counting}, and it remains open
to what extent randomisation can help in designing such algorithms.

In the case of \emph{deterministic} algorithms, algorithm synthesis has been used for
computer-aided design of optimal algorithms with resilience $f=1$, but the
approach does not scale due to the extremely fast-growing space of possible
algorithms~\cite{dolev15counting}. In general, many fast-stabilising algorithms build on a
connection between Byzantine consensus and synchronous counting, but require a
large number of states per node \cite{dolev07actions} due to, e.g., running a
large number of consensus instances in parallel. 
Recently, in one of the preliminary conference reports~\cite{lenzen15towards} this paper is based on, we outlined a recursive approach where each node needs to participate in only $O(\log f / \log \log f)$ parallel instances of consensus. However, this approach resulted in suboptimal resilience of $f = n^{1-o(1)}$. 

Finally, we note that while counting algorithms are usually designed for the case of a fully-connected communication topology, the algorithms can be extended to use in a variety of other graph classes with high enough connectivity~\cite{dolev15counting}.

\paragraph{Related problems.} Boczkowski et al.~\cite{boczkowski17minimizing} study the synchronous $c$-counting problem (under the name self-stabilising clock synchronisation) with $O(\sqrt{n})$ Byzantine faults in a stochastic communication setting that resembles the pulling model we consider in \sectionref{sec:random}. However, their communication model is much more restricted: in every round, each node interacts with at most constantly many nodes which are chosen uniformly at random. Moreover, nodes only exchange messages of size $O(\log c)$ bits. 

Without Byzantine (or other types of permanent) faults, self-stabilising counters and digital clocks have been studied as the \emph{self-stabilising unison problem}~\cite{gouda90unison,arora91maintaining,boulinier08unison}. However, unlike in the fully-connected setting considered in this work, the underlying communication topology in the unison problem is typically assumed to be an arbitrary graph. In our model, in absence of permanent faults the problem becomes trivial, as nodes may simply reproduce the clock of a predetermined leader. The unison problem has also been studied in asynchronous models~\cite{boulinier08unison,dubois12unison}; this variant is also known as self-stabilising synchronisers~\cite{awerbuch07time-optimal}. 

\section{Preliminaries}\label{sec:preliminaries}

In this section, we define the model of computation and the counting problem.

\subsection{Model of Computation}

We consider a fully-connected synchronous message-passing network. That is, our
distributed system consists of a network of $n$ nodes, where each node is a
state machine and has communication links to all other nodes in the network. All
nodes have a unique identifier from the set $[n] = \{0, 1, \ldots, n-1\}$. The
computation proceeds in synchronous communication rounds. In each round, all
nodes perform the following in a lock-step fashion: 
\begin{enumerate}[noitemsep]
 \item \emph{broadcast} a single message to all nodes,
 \item \emph{receive} messages from all nodes, and
 \item \emph{update} the local state.
\end{enumerate}

We assume that the initial state of each node is arbitrary and there are up to
$f$ Byzantine nodes. A Byzantine node may have arbitrary behaviour, that
is, it can deviate from the protocol in any manner. In particular, the Byzantine
nodes can collude together in an adversarial manner and a single Byzantine
node can send \emph{different} messages to different correct nodes.

\paragraph{Algorithms.}

Formally, we define an algorithm as a tuple $\vec A = \langle X, g, p \rangle$,
where $X$ is the set of all states any node can have, $g \colon [n] \times X^n
\to X$ is the \emph{state transition function}, and $p \colon [n] \times X \to
[c]$ is the \emph{output function}. At each round when node $v$
receives a vector $\vec x = \langle x_0, \ldots, x_{n-1} \rangle \in X^n$ of messages,
node $v$ updates it state to $g(v, \vec x)$ and outputs $p(v, x_v)$. 
As we consider $c$-counting algorithms, the set of output values is the
set $[c] = \{0,1,\ldots,c-1\}$ of counter values. 

The tuples passed to the state transition function $g$ are
ordered according to the node identifiers. Put otherwise, the nodes can identify the sender of a message---this is frequently referred to as source authentication. Moreover, in the basic model, we assume that all nodes simply broadcast their state to all other nodes. Thus, the set of messages is the same as the set of possible states. 

\paragraph{Executions.}

For any set of $\mathcal{F} \subseteq [n]$ of faulty nodes, we define a projection
$\pi_\mathcal{F}$ that maps any state vector $\vec x \in X^n$ to a
\emph{configuration} $\pi_F(\vec x) = \vec e$, where $e_v = *$ if $v \in
\mathcal{F}$ and $e_v = x_v$ otherwise. That is, the values given by Byzantine
nodes are ignored and a configuration consists of only the states of correct
nodes. A configuration $\vec d$ is \emph{reachable} from configuration $\vec e$
if for every correct node $v \notin \mathcal{F}$ there exists some $\vec x \in
X^n$ satisfying $\pi_\mathcal{F}(\vec x) = \vec e$ and $g(v, \vec x) = d_v$. An \emph{execution} of an algorithm $\vec A$ is an infinite
sequence of configurations $\xi = \langle \vec e_0, \vec e_1 \ldots, \rangle$
where configuration $\vec e_{r+1}$ is reachable from configuration $\vec e_r$.

\subsection{Synchronous Counters and Complexity Measures}

We say that an execution $\xi = \langle \vec e_0, \vec e_1 \ldots, \rangle$ of a counting algorithm $\vec A$ \emph{stabilises} in time $T$ if
there is some $k \in [c]$ such that for every correct node $v \in [n] \setminus \mathcal{F}$ it holds that
\[
p(v, e_{T+r,v}) = r - k \bmod c \ \text{ for all } r \ge 0,
\]
where $e_{T+r,v} \in X$ is the state of node $v$ in round $T+r$. 

An algorithm $\vec A$ is said to be a \emph{synchronous $c$-counter} with
\emph{resilience} $f$ that stabilises in time $T$, if for every $\mathcal{F}
\subseteq [n]$, $|\mathcal{F}| \le f$, all executions of algorithm $\vec A$
stabilise within $T$ rounds. In this case, we say that the \emph{stabilisation
time} $T(\vec A)$ of $\vec A$ is the minimal such $T$ that all executions of $\vec A$ stabilise in $T$ rounds. The \emph{state
complexity} of $\vec A$ is $S(\vec A) = \lceil \log |X| \rceil$, that is, the
number of bits required to encode the state of a node between subsequent rounds. For brevity, we will
often refer to $\mathcal{A}(n,f,c)$ as the family of synchronous $c$-counters
over $n$ nodes with resilience $f$. For example, $\vec A \in \mathcal{A}(4, 1,
2)$ denotes a synchronous 2-counter (i.e.\ a binary counter) over 4 nodes tolerating one failure.

\section{Boosting Resilience}\label{sec:construction}

In this section, we show how to use existing ``small'' synchronous counters to construct new ``large'' synchronous counters with a higher resilience $f$ and a larger number of nodes $n$; we call this \emph{resilience boosting}. We will then apply the idea recursively, with trivial counters as a base case.

\subsection{Road Map}

The high-level idea of resilience boosting is as follows. We start with counters that have a low resilience $f'$ and use a small number of nodes $n'$. We use such counters to construct a new ``weak'' counter that has a higher resilience $f > f'$ and a large number of nodes $n > n'$ but only needs to behave correctly \emph{once in a while} for sufficiently long. Once such a weak counter exists, it can be used to provide consistent round numbers for long enough to execute a \emph{single} instance of a high-resilience consensus protocol. This can be used to reach agreement on the output counter.

\paragraph{Constructing the Weak Counter.}

For clarity, we will use here the term \emph{strong counter} to refer to a self-stabilising fault-tolerant counter in the usual sense, and the term \emph{weak counter} to refer to a counter that behaves correctly once in a while. We assume that $f'$-resilient strong counters for all $f'<f$ already exist, and we show how to construct an $f$-resilient weak counter that behaves correctly for at least $\tau$ rounds. Put slightly more formally, a weak $\tau$-counter satisfies the following property: there exists a round $r$ such that for all correct nodes $v,w \in V \setminus F$ satisfy
\begin{itemize}[noitemsep]
 \item $d(v,r) = d(w,r)$ and
 \item $d(v,r') = d(v,r'-1) +1 \mod \tau$ for all $r' \in \{ r+1, \ldots, r + \tau - 1 \}$,
\end{itemize}
where $d(v,r)$ denotes the value of the weak counter at node $v$ in round $r$. That is, eventually there will be $\tau$ consecutive rounds during which the (weak) counter values agree and are incremented by one modulo $\tau$ every round. However, after these $\tau$ rounds, the counters can behave arbitrarily.

Let $f_0+f_1+1=f$ and $n_0+n_1=n$. We take an $f_0$-resilient strong $2\tau$-counter ${\vec A}_0$ with $n_0$ nodes and an $f_1$-resilient strong $6\tau$-counter ${\vec A}_1$ with $n_1$ nodes, and use them to construct an $f$-resilient weak counter with $n$ nodes.

We partition $n$ nodes in disjoint ``blocks'': block 0 runs ${\vec A}_0$ with $n_0$ nodes and block 1 runs ${\vec A}_1$ with $n_1$ nodes.
At least one of the algorithms will eventually stabilise and count correctly.
The key challenge is making sure that eventually all correct nodes (in both blocks!) will follow the same correct counter, at least for $\tau$ rounds.

To this end, each block maintains a \emph{leader pointer}.
The leader pointers are changed regularly:
block $0$ changes its leader pointer every $\tau$ rounds, and
block $1$ changes its leader pointer every $3\tau$ rounds.
If the leader pointers behave correctly, there will be regularly periods of $\tau$ rounds such that both of the leader pointers point to the same correct block.

If we had reliable counters, block $i$ could simply use the current value of counter ${\vec A}_i$ to determine the current value of its leader pointer.
However, one of the counters might misbehave.
As a remedy, each node $v$ of block $i$ checks if the output variable of counter ${\vec A}_i$ increases by $1$ in each round. If not, it will consider ${\vec A}_i$ faulty for $\Theta(\tau)$ rounds. 
The final output of a node is determined as follows:
\begin{itemize}
    \item If node $v$ in block $i$ thinks that ${\vec A}_i$ is faulty, it outputs the current value of counter ${\vec A}_{1-i}$.
    \item Otherwise, it uses the current value of ${\vec A}_i$ to construct the leader pointer $\ell \in \{0,1\}$, and it outputs the current value of counter ${\vec A}_{\ell}$.
\end{itemize}
Note that the counter ${\vec A}_i$ might seem to be behaving in a faulty manner if there has not been enough time for $\vec A_i$ to stabilise. However, each node $v$ of block $i$ will consider a block to be faulty at most $\Theta(\tau)$ rounds before checking again whether the output of $\vec A_i$ behaves consistently. Thus, if $\vec A_i$ eventually stabilises, then eventually node $v$ stops considering $\vec A_i$ as faulty for good (at least until the next transient failure).

The above consistency check almost cuts it---except that two nodes $w\neq v$ of block $i$ may have
different opinions on the current value of ${\vec A}_i$. We clear this final hurdle
by enlisting the help of \emph{all} nodes for a majority vote on what
the current value of ${\vec A}_i$ actually is. Essentially, we use
threshold voting; this way all nodes that think that ${\vec A}_i$
behaves correctly will agree on a globally unique counter value $\alpha_i$ for ${\vec A}_i$.

If, for example, block $0$ contains at most $f_0$ faulty nodes, all of this eventually entails the following:
\begin{enumerate}
    \item Counter ${\vec A}_0$ stabilises, counts correctly, and all correct nodes agree on its counter value $\alpha_0$.
    \item All correct nodes of block $0$ think that block $0$ is counting correctly. They use $\alpha_0$ to derive the value of the leader pointer. Once in $2\tau$ rounds, when the $2\tau$-counter $\alpha_0$ wraps around to $0$, the pointer switches to $0$, and the nodes will output the counter value $\alpha_0$ for $\tau$ rounds.
    \item Some correct nodes of block $1$ may think that block $1$ is counting correctly for $\Theta(\tau)$ rounds. While this is the case, all of them agree on a value $\alpha_1$ that increases by $1$ in each round. This value is used to derive the leader pointer of block $1$. Once in $6\tau$ rounds, when the $6\tau$-counter $\alpha_1$ wraps around to $0$, the pointer will switch to $0$, and the nodes will output the value of $\alpha_0$ for $3\tau$ rounds (as the leader pointer does not change for $3\tau$ rounds).
    \item Some correct nodes of block $1$ may detect that block $1$ is faulty. Such nodes will output the value of $\alpha_0$ for $\Theta(\tau)$ rounds.
    \item In summary, eventually there will be $\tau$ consecutive rounds during which  all correct nodes output the same counter value $\alpha_0$.
\end{enumerate}
The other case (block $1$ has at most $f_1$ faulty nodes) is analogous.

\paragraph{Using the Weak Counter.}

Now we have constructed a counter that will eventually produce a consistent output for at least $\tau$ rounds.
We leverage this property to execute the \emph{phase king}
consensus protocol~\cite{berman89consensus} to stabilise the output counters.
The protocol will have the following crucial property: if all nodes agree on the output,
then even if the round counter becomes inconsistent, the agreement on the output persists.
Thus, it suffices for us that $\tau$ is large enough to enable the nodes to consistently execute
the phase king algorithm once to reach agreement; $\tau = O(f)$ will do.

The stabilisation time on each level is the maximum of
the stabilisation times of counters ${\vec A}_i$ plus $O(\tau) = O(f)$;
by choosing $f_1\approx f_2 \approx f/2$, we can
thus ensure an overall stabilisation time of $O(f)$, irrespectively of the
number of recursion levels. Formally, we prove the following theorem:

\begin{restatable}{theorem}{twoblocks}\label{thm:twoblocks}
Let $c, n > 1$ and $f <n/3$. Define $n_0=\lfloor n/2 \rfloor$, $n_1 = \lceil
n/2 \rceil$, $f_0 = \lfloor (f-1)/2 \rfloor$, $f_1 = \lceil (f-1)/2 \rceil$, and
$\tau = 3(f+2)$. If for $i \in \{0,1\}$ there exist synchronous counters $\vec
A_i \in \mathcal{A}(n_i, f_i, c_i)$ such that $c_i = 3^i \cdot 2\tau$, then
there exists a synchronous $c$-counter $\vec B \in \mathcal{A}(n,f,c)$ that
 \begin{itemize}[noitemsep]
  \item stabilises in $T(\vec B) = \max \{ T(\vec A_0),T(\vec A_1) \} + O(f)$ rounds, and
  \item has state complexity of $S(\vec B) = \max \{ S(\vec A_0),S(\vec A_1) \} + O(\log f + \log c)$ bits.
 \end{itemize}
\end{restatable}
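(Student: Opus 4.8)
The plan is to turn the informal "road map" into a concrete algorithm $\vec B$ and then verify the two claimed bounds. I would set up $\vec B$ as follows. Partition $[n]$ into two blocks $V_0$ (size $n_0 = \lfloor n/2\rfloor$) and $V_1$ (size $n_1 = \lceil n/2\rceil$); block $i$ runs its given strong counter $\vec A_i \in \mathcal{A}(n_i, f_i, c_i)$ with $c_i = 3^i \cdot 2\tau$, producing at each node $v \in V_i$ a local output $a_i(v,r) \in [c_i]$. On top of this, every node $v$ maintains: (1) a consistency-check timeout $\mathrm{to}(v) \in \{0,1,\ldots,\Theta(\tau)\}$ that is reset to its maximum whenever $v$ observes that $\vec A_i$'s output failed to increment by $1$ modulo $c_i$, and is decremented otherwise; (2) the majority-vote machinery, in which all $n$ nodes broadcast what they believe block $i$'s current output is, and each node computes $\alpha_i$ as the value supported by more than $(n+f_i)/2$ — actually, more than $n_i/2 + f_i/2$, the precise threshold needs care — of the claimed block-$i$ values, defaulting to something fixed otherwise; (3) a running instance of the phase king protocol of~\cite{berman89consensus}, clocked by the weak counter's output, that overwrites the final output register once it terminates. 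The leader pointer at $v \in V_i$ is $\ell = 0$ if $\alpha_i \in [\tau)$ within its current $\tau$-block (for $i=0$) or $3\tau$-block (for $i=1$), interpreted so that each block points to block $0$ for a $\tau$-window once per $2\tau$ (resp. $6\tau$) rounds; the weak output is $\alpha_{1-i}$ if $\mathrm{to}(v) > 0$ and $\alpha_\ell$ otherwise.

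The correctness argument proceeds in the order the road map suggests. First, since $f = f_0 + f_1 + 1$ (one more than $\lfloor(f-1)/2\rfloor + \lceil(f-1)/2\rceil$), at least one block, say block $j$, contains at most $f_j$ faulty nodes, so $\vec A_j$ is a genuine $f_j$-resilient strong $c_j$-counter and stabilises within $T(\vec A_j)$ rounds. Second, I would show the threshold vote yields a globally unique value: because block $j$ has at least $n_j - f_j$ correct nodes all agreeing on $a_j$ after stabilisation, and because $n_j - f_j > (n_j + f_j)/2 \geq$ (the number of false claims any adversary can inject), every correct node computes the same $\alpha_j = a_j$; here I will need the arithmetic $n_j > 3 f_j$, which follows from $f < n/3$ and the floor/ceiling definitions — this is one of the routine but essential calculations. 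Third, once $\vec A_j$ is stabilised, a correct node in block $j$ sees consistent increments, so its timeout drains to $0$ within $\Theta(\tau)$ rounds and stays there; thus all correct block-$j$ nodes eventually use $\alpha_j$ to drive their leader pointer. Fourth — the crux of the "once in a while" guarantee — I argue that within every window of $O(\tau)$ rounds there is a sub-window of $\tau$ consecutive rounds in which the leader pointers of \emph{both} blocks point to block $j$: block $j$'s own pointer does so by construction once per its cycle, and block $1-j$'s correct nodes either have a live timeout (and then output $\alpha_j$ directly) or have a pointer that, because block $1-j$'s cycle length ($2\tau$ or $6\tau$) is an appropriate multiple, points to block $j$ for a stretch of $\tau$ (resp. $3\tau$) rounds overlapping block $j$'s window. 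The case split on $j=0$ versus $j=1$ and the choice $\tau = 3(f+2)$ make the nesting work; this synchronisation-of-periods bookkeeping is where I expect the main obstacle — one must check the worst alignment of the two (possibly phase-shifted) periodic pointer sequences and confirm a common $\tau$-window always materialises, including the transient where block $1-j$'s counter has not stabilised and its nodes' timeouts are re-arming.

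Finally, having established the weak-$\tau$-counter property with $\tau = 3(f+2) = O(f)$, I invoke the phase king protocol: $\tau$ rounds suffice to run one instance to completion ($\Theta(f)$ rounds for $\Theta(f)$ phases — I'd pin down the constant so it fits in $\tau$), and its persistence property (agreement, once reached, is maintained even if later rounds are misclocked, since the output register is only touched by the protocol) gives a stabilised $c$-counter. For the complexity bounds: the stabilisation time is $\max\{T(\vec A_0), T(\vec A_1)\}$ (to stabilise the surviving $\vec A_j$) plus $O(\tau)$ to drain timeouts, plus $O(\tau)$ for the first good window to arrive, plus $O(\tau)$ to run phase king — all $O(f)$ — giving $T(\vec B) = \max\{T(\vec A_0), T(\vec A_1)\} + O(f)$. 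For state: a node stores its block's $\vec A_i$-state ($\leq \max\{S(\vec A_0), S(\vec A_1)\}$ bits), a timeout in $[0, \Theta(\tau)]$ ($O(\log f)$ bits), the two vote values $\alpha_0, \alpha_1$ and its own output in $[c_i] \cup [c]$ ($O(\log \tau + \log c) = O(\log f + \log c)$ bits), and phase king's internal state, which for one instance on $c$-valued data is $O(\log f + \log c)$ bits; summing gives $S(\vec B) = \max\{S(\vec A_0), S(\vec A_1)\} + O(\log f + \log c)$, as claimed.
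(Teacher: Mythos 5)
Your proposal follows essentially the same route as the paper: partition into two blocks, at least one of which is correct; a two-stage vote (plurality within the block, then a strong-majority vote over all $n$ nodes) combined with a cooldown timer to extract a perceived counter value for each block; derive a leader pointer from those perceived values with nested periods $2\tau$ and $6\tau$; and feed the resulting weak $\tau$-counter into a persistence-modified phase king. The arithmetic $n_j > 3f_j$, the time budget, and the state accounting are all correct and match the paper.

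The genuine gap is exactly the step you flag as "the main obstacle," and it is not merely bookkeeping: you need a lemma asserting that even for the \emph{possibly faulty} block $1-j$, all correct nodes whose cooldown has expired perceive a \emph{single} counter value that increments by $1$ mod $c_{1-j}$ over a window of $\Theta(c_1)$ rounds. Your argument talks about "the two (possibly phase-shifted) periodic pointer sequences" as if block $1-j$ induces one well-defined sequence, but nothing you wrote establishes that: the majority vote (whatever the exact threshold) only guarantees agreement among nodes that observe a strong majority \emph{in the same round}; it does not relate a node that trusts block $1-j$ in round $r$ to one that trusts it in round $r' \neq r$, nor does it rule out different nodes trusting it at disjoint times with incompatible values. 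The missing ingredient is the cooldown-based argument (the paper's \lemmaref{lemma:voting_general}): if $v$ has cooldown $0$ in round $r$, then for every round $r' \in \{r - 2c_1 + 1, \ldots, r\}$ the cooldown never hit its maximum, so $M_{1-j}(v,\cdot)$ incremented consistently all the way from $r'$ to $r$; combined with the single-round majority uniqueness this pins every non-$\bot$ value of $d_{1-j}(\cdot,\cdot)$ in that window to one virtual counter. Only then is the nesting of periods $2\tau \mid 6\tau$ meaningful, and only then can you conclude that a $\tau$-window with both $\ell_0 = \ell_1 = j$ exists at every correct node simultaneously. Your sketch, as written, could produce different correct nodes following different pointer schedules for block $1-j$, which would break the once-in-a-while guarantee. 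Secondarily, the threshold you wrote for the vote is murky (you mix $(n+f_i)/2$, $n_i/2 + f_i/2$, and an unstated global threshold); the construction really needs the second-stage vote to be a strong majority of $n-f$ over all $n$ participants so that the standard $f < n/3$ uniqueness argument applies, independently of which block is being voted on.
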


We fix the notation of this theorem for the remainder of this section.
Moreover, for notational convenience we abbreviate $T = \max \{ T(\vec A_0),T(\vec A_1) \}$ and $S = \max \{ S(\vec A_0),S(\vec A_1) \}$.

\subsection{Agreeing on a Common Counter (Once in a While)\label{ssec:common}}

In this part, we construct a counter that will eventually count consistently
at all nodes for $\tau$ rounds. The $\tau$-counter then will be used as a
common clock for executing the phase king algorithm.

We partition the set of nodes $V = V_0 \cup V_1$ such that $V_0\cap V_1=\emptyset$,
$|V_0| = n_0$ and $|V_1| = n_1$. We refer to the set $V_i$ as
\emph{block $i$}. For each $i \in \{0,1\}$, the nodes in set $V_i$ execute the
algorithm $\vec A_i$. In case block $i$ has more than $f_i$ faults, we call the
block $i$ faulty. Otherwise, we say that block $i$ is correct. By construction,
at least one of the blocks is correct. Hence, there is a correct block $i$ for
which $\vec A_i$ stabilises within $T$ rounds, that is, nodes in block $i$ output a
consistent $c_i$-counter in rounds $r\geq T$.

\begin{lemma}\label{lemma:stable-block}
For some $i \in \{0,1\}$, block $i$ is correct.
\end{lemma}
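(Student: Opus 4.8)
The plan is straightforward: this lemma is essentially a pigeonhole argument on the distribution of faulty nodes across the two blocks. First I would recall the setup: the set of nodes $V = V_0 \cup V_1$ is partitioned so that $|V_0| = n_0 = \lfloor n/2 \rfloor$ and $|V_1| = n_1 = \lceil n/2 \rceil$, the adversary corrupts a set $\mathcal{F}$ with $|\mathcal{F}| \le f$, and the faults are split between the blocks as $\mathcal{F}_0 = \mathcal{F} \cap V_0$ and $\mathcal{F}_1 = \mathcal{F} \cap V_1$. By definition, block $i$ is correct exactly when $|\mathcal{F}_i| \le f_i$, where $f_0 = \lfloor (f-1)/2 \rfloor$ and $f_1 = \lceil (f-1)/2 \rceil$.

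The key step is the following counting observation: since $\mathcal{F}_0$ and $\mathcal{F}_1$ partition $\mathcal{F}$, we have $|\mathcal{F}_0| + |\mathcal{F}_1| = |\mathcal{F}| \le f$. On the other hand, $f_0 + f_1 = \lfloor (f-1)/2 \rfloor + \lceil (f-1)/2 \rceil = f - 1$. So if \emph{both} blocks were faulty, we would have $|\mathcal{F}_0| \ge f_0 + 1$ and $|\mathcal{F}_1| \ge f_1 + 1$, hence $|\mathcal{F}| = |\mathcal{F}_0| + |\mathcal{F}_1| \ge f_0 + f_1 + 2 = f + 1$, contradicting $|\mathcal{F}| \le f$. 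Therefore at least one block is correct. (Note this is precisely why the theorem sets $f_0 + f_1 + 1 = f$ rather than $f_0 + f_1 = f$: it leaves exactly the slack needed to force one block below its threshold.)

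I would also verify the side condition that the recursive call is well-posed, namely that $f_i < n_i / 3$ so that $\vec A_i$ can exist, though strictly this belongs to the proof of \theoremref{thm:twoblocks} rather than to \lemmaref{lemma:stable-block} itself; the lemma statement only asserts the existence of a correct block and follows purely from the arithmetic above. There is essentially no obstacle here — the only thing to be careful about is the floor/ceiling bookkeeping in checking $f_0 + f_1 = f - 1$, which holds because $\lfloor x \rfloor + \lceil x \rceil = 2x$ whenever $2x$ is an integer, applied with $x = (f-1)/2$. This lemma then serves as the foundation for the rest of \sectionref{sec:construction}: having fixed a correct block $i$, one argues that $\vec A_i$ stabilises within $T$ rounds and drives the leader-pointer mechanism so that all correct nodes eventually agree for $\tau$ consecutive rounds.
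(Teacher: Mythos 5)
Your proof is correct and takes exactly the same approach as the paper's: the paper notes that $f = f_0 + f_1 + 1$ and concludes by pigeonhole that one block has at most $f_i$ faults, which is precisely your contradiction argument spelled out in more detail.
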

\begin{proof}
By choice of $f_i$, we have $f=f_0+f_1+1$. Hence, at least one of the sets $V_i$ will contain at most $f_i$ faults.
\end{proof}

Next, we apply the typical threshold voting mechanism employed by most Byzantine
tolerant algorithms in order to filter out differing views of counter values that
are believed to be consistent. This is achieved by broadcasting candidate counter
values and applying a threshold of $n-f$ as a consistency check, which guarantees
that at most one candidate value from the set $[c]$ can remain. In case the threshold
check fails, a fallback value $\bot \notin [c]$ is used to indicate an inconsistency.
This voting scheme is applied for both blocks concurrently, and all
nodes participate in the process, so we can be certain that fewer than one third
of the voters are faulty.

In addition to passing this voting step, we require that the counters also have
behaved consistently over a sufficient number of rounds; this is verified by the
obvious mechanism of testing whether the counter increases by $1$ each round and
counting the number of rounds since the last inconsistency was detected.

In the following, nodes frequently examine a set of values, one broadcast
by each node, and determine majority values. Note that \emph{Byzantine nodes may send
different values to different nodes}, that is, it may happen that correct nodes
output different values from such a vote. We refer to a \emph{strong majority}
as at least $n-f$ nodes supporting the same value, which is then called the
\emph{majority value}. If a node does not see a strong majority, it outputs the
symbol $\bot$ instead.  Clearly, this procedure is well-defined for
$f<n/2$.

We will refer to this procedure as a \emph{majority vote}, and slightly abuse
notation by saying ``majority vote'' when, precisely, we should talk of ``the
output of the majority vote at node $v$''. Since we require that $f<n/3$, the
following standard argument shows that for each vote, there is a unique value
such that each node either outputs this value or $\bot$.
\begin{lemma}\label{lemma:majority}
If $v,w\in V\setminus {\cal F}$ both observe a strong majority, they output the
same majority value.
\end{lemma}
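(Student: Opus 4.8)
The plan is to argue by contradiction using a simple counting argument on the overlap of the two supporting sets. Suppose $v$ outputs majority value $a$ and $w$ outputs majority value $b$ with $a \neq b$; both $a,b \in [c]$ since neither node output $\bot$. By definition of a strong majority, at least $n-f$ nodes sent value $a$ to $v$, and at least $n-f$ nodes sent value $b$ to $w$. Let $S_a$ be the set of senders that reported $a$ to $v$ and $S_b$ the set that reported $b$ to $w$; so $|S_a| \ge n-f$ and $|S_b| \ge n-f$.

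First I would bound the intersection: by inclusion–exclusion, $|S_a \cap S_b| \ge |S_a| + |S_b| - n \ge 2(n-f) - n = n - 2f$. Since $f < n/3$, we have $n - 2f > n - 2n/3 = n/3 > f \ge |\mathcal{F}|$, so $|S_a \cap S_b| > |\mathcal{F}|$, and hence $S_a \cap S_b$ contains at least one correct node $u \notin \mathcal{F}$. But a correct node broadcasts the \emph{same} message to all nodes (in the basic model, it broadcasts its state), so $u$ sent the same value to both $v$ and $w$; thus $a = b$, contradicting $a \neq b$. Therefore $v$ and $w$ output the same majority value.

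The only mild subtlety — and the place one should be careful — is to invoke the broadcast semantics correctly: the argument genuinely relies on correct nodes being unable to equivocate (a single broadcast received identically by all recipients, with source authentication so that $v$ and $w$ attribute the value to the same sender $u$). This is exactly what the model in Section~\ref{sec:preliminaries} guarantees, so the step is immediate; there is no real obstacle. Note also that the statement is about the \emph{value}, not about whether both nodes see a strong majority at all: it is entirely possible that $v$ sees a strong majority while $w$ outputs $\bot$ (for instance if the Byzantine nodes split their reports so that $w$'s count for every value falls just below $n-f$), and the lemma says nothing in that case — it only pins down the value whenever two correct nodes both do see a strong majority.
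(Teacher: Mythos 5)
Your proof is correct and takes essentially the same approach as the paper's: both are the standard quorum-intersection argument, resting on the inequality $2(n-f)-n = n-2f > f$ (valid when $f<n/3$) together with the fact that a correct node broadcasts one value to everyone. The paper packages it by fixing a set $A$ of $n-f$ correct nodes and showing each value must have $n-2f$ supporters inside $A$, whereas you bound $|S_a\cap S_b|$ directly and extract a correct node from the intersection; the two presentations are interchangeable.
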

\begin{proof}
Fix any set $A$ of $n-f$ correct nodes. For $v$ and $w$ to observe strong majorities for different values, for each value $A$ must contain $n-2f$ nodes supporting it. However, as correct nodes broadcast the same value to each node, this leads to the contradiction that $|A|\geq 2(n-2f)=n-f+(n-3f)>n-f=|A|$.
\end{proof}

We now put this principle to use. In the following, we will use the notation $x(v,r)$ to refer to the value of local variable $x$ of node $v$ in round $r$. As we consider self-stabilising algorithms, the nodes themselves are not aware of what is the value of $r$. We introduce the following local variables for
each node $v \in V$, block $i \in \{0,1\}$, and round $r>0$ (see Tables \ref{tab:variables} and \ref{tab:variables2}):
\begin{itemize}
 \item $m_i(v,r)$ stores the most frequent counter value in block $i$ in round $r$, which is determined from the broadcasted output variables of ${\vec A}_i$ with ties broken arbitrarily,
 \item $M_i(v,r)$ stores the majority vote on $m_i(v,r-1)$,
 \item $w_i(v,r)$ is a cooldown counter which is reset to $2c_1$ whenever
 the node perceives the counter of block $i$ behaving inconsistently, that is,
 $M_i(v,r)\neq M_i(v,r-1)+1\bmod c_i$. Note that this test will automatically fail
 if either value is $\bot$. Otherwise, if the counter behaves consistently, $w_i(v,r)=\max\{w_i(v,r-1)-1,0\}$.
\end{itemize}
Clearly, these variables can be updated based on the local values from the
previous round and the states broadcasted at the beginning of the current round.
This requires nodes to store $O(\log c_i)=O(\log f)$ bits. 

\begin{table}
\centering
\begin{tabular}{@{}l@{\qquad}l@{\qquad}l@{}}
\toprule
Variable & Range & Description \\
\midrule
$m_i(v,r)$ & $[c_i]$ & the most frequent value observed for the $\vec A_i$ counter of block $i$ \\
$M_i(v,r)$ & $[c_i] \cup \{\bot\}$ & the result of majority vote on $m_i(\cdot,r-1)$ values \\
$w_i(v,r)$ & $[c_1+1]$ & ``cooldown counter'' that is reset if block $i$ behaved inconsistently \\
\midrule
$d_i(v,r)$ & $[c_i] \cup \{\bot\}$ & observation on what seems to be the counter output of block $i$ \\
$\ell_i(v,r)$ & $\{0,1,\bot\}$ & the value of the ``leader pointer'' for block $i$ \\
$\ell(v,r)$ & $\{0,1,\bot\}$ & leader pointer used by node $v$ \\
$d(v,r)$ & $[\tau]$ & once-in-a-while round counter for clocking phase king \\
\midrule
$a(v,r)$ & $[c] \cup \{\infty\}$ & the output of the new $c$-counter we are constructing \\
$b(v,r)$ & $\{0,1\}$ & helper variable for the phase king algorithm \\
\bottomrule
\end{tabular}
\caption{The local state variables used in the boosting construction.}\label{tab:variables}
\end{table}

\begin{table}
\centering
\begin{tabular}{@{}l@{\qquad}l@{\qquad}l@{}}
\toprule
Variable & Block $i$ is correct & Block $i$ is faulty \\
\midrule
$m_i(v,r)$ & consistent counter & arbitrary values \\
$M_i(v,r)$ & consistent counter & $\bot$ or some consistent value \\
$d_i(v,r)$ & consistent counter & $\bot$ or some consistent counter \\
$\ell_i(v,r)$ & consistent pointer & $\bot$ or some consistent pointer \\
\bottomrule
\end{tabular}
\caption{Behaviour of local state variables; pointers switch once in $3^i\tau$ rounds.}\label{tab:variables2}
\end{table}

Furthermore, we define the
following derived variables for each $v \in V$, block $i \in \{0,1\}$, and round $r$ (see Tables \ref{tab:variables} and \ref{tab:variables2}):
\begin{itemize}[noitemsep]
  \item $d_i(v,r)=M_i(v,r)$ if $w_i(v,r)=0$, otherwise $d_i(v,r)=\bot$,
  \item $\ell_i(v,r)=\bigl\lfloor d_i(v,r)/(3^i\tau)\bigr\rfloor$ if
  $d_i(v,r)\neq \bot$, otherwise $\ell_i(v,r) = \bot$,
  \item for $v\in V_i$, $\ell(v,r)=\ell_i(v,r)$ if $\ell_i(v,r)\neq \bot$,
  otherwise $\ell(v,r)=\ell_{1-i}(v,r)$, and
  \item $d(v,r)=d_{\ell(v,r)}(v,r)\bmod \tau$ if $\ell(v,r)\neq \bot$, otherwise
  $d(v,r)=0$.
\end{itemize}
These can be computed locally, without storing or communicating additional
values. The variable $\ell(v,r)$ indicates the block that node $v$ currently
considers leader. Note that some nodes may use $\ell_0(\cdot,r)$ as the leader pointer
while some other nodes may use $\ell_1(\cdot,r)$ as the leader pointer, but this is fine:
\begin{itemize}[noitemsep]
    \item all nodes $v$ that use $\ell(v,r) = \ell_0(v,r)$ observe the same value $\ell_0(\cdot,r) \ne \bot$,
    \item all nodes $w$ that use $\ell(w,r) = \ell_1(w,r)$ observe the same value $\ell_1(\cdot,r) \ne \bot$,
    \item eventually $\ell_0(\cdot,r)$ and $\ell_1(\cdot,r)$ will point to the same correct block for $\tau$ rounds.
\end{itemize}

We now verify that $\ell(v,r)$ indeed has the desired properties. To this end, we
analyse $d_i(v,r)$. We start with a lemma showing that eventually a correct
block's counter will be consistently observed by all correct nodes.
\begin{lemma}\label{lemma:voting_correct}
Suppose block $i\in \{0,1\}$ is correct. Then for all $v,w\in V\setminus {\cal
F}$, and rounds $r\geq R=T+O(f)$ it holds that $d_i(v,r)=d_i(w,r)$ and
 $d_i(v,r)=d_i(v,r-1)+1\bmod c_i$.
\end{lemma}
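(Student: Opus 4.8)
The plan is to unwind the definition of $d_i$ one layer at a time, tracing it back to the output of $\vec A_i$, absorbing an additive $O(f)$ term into the round index at each step. Since block $i$ is correct, $\vec A_i \in \mathcal A(n_i,f_i,c_i)$ stabilises within $T$ rounds: there is a function $\alpha_i$ with $\alpha_i(r)=\alpha_i(r-1)+1\bmod c_i$ for every $r>T$ such that every correct node of $V_i$ outputs $\alpha_i(r)$ in every round $r\ge T$.

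First I would establish that $m_i(v,r)=\alpha_i(r)$ for \emph{every} correct node $v\in V$ (not merely those in $V_i$) and every $r\ge T$. At the start of round $r$ the at least $n_i-f_i$ correct nodes of block $i$ all report the output $\alpha_i(r)$; by the choice of parameters $n_i>2f_i$, so no other value is reported by as many as $n_i-f_i$ nodes and $\alpha_i(r)$ is the strict plurality, regardless of how the at most $f_i$ faulty members of $V_i$ equivocate. Next, for $r\ge T+1$ all $\ge n-f$ correct nodes broadcast $m_i(\cdot,r-1)=\alpha_i(r-1)$, so every correct $v$ sees a strong majority; by \lemmaref{lemma:majority} the observed majority value is the same for all correct nodes, hence $M_i(v,r)=\alpha_i(r-1)$ for every correct $v$ and every $r\ge T+1$.

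Next I would show the cooldown counter settles. For $r\ge T+2$ we have $M_i(v,r)=\alpha_i(r-1)$ and $M_i(v,r-1)=\alpha_i(r-2)$ with $\alpha_i(r-1)=\alpha_i(r-2)+1\bmod c_i$, so the consistency test $M_i(v,r)=M_i(v,r-1)+1\bmod c_i$ succeeds; thus from round $T+2$ onwards $w_i$ is never reset, and since otherwise $w_i(v,r)=\max\{w_i(v,r-1)-1,0\}$ and $w_i$ starts from a value bounded by its range $O(\tau)=O(f)$, we get $w_i(v,r)=0$ for all correct $v$ once $r\ge T+O(f)$. Taking $R=T+O(f)$ large enough (also so that $r-1$ is past this threshold as well), for $r\ge R$ the definition of $d_i$ yields $d_i(v,r)=M_i(v,r)=\alpha_i(r-1)$, which is the same value for all correct $v$ and satisfies $d_i(v,r)=\alpha_i(r-2)+1\bmod c_i=d_i(v,r-1)+1\bmod c_i$ --- exactly the two claimed properties.

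The argument is a chain of elementary observations, so the only real obstacle is getting the very first step right: Byzantine members of block $i$ may report inconsistent counter values, and ruling that out needs the \emph{block-local} plurality bound $n_i-f_i>f_i$ rather than the global bound $n-f>2f$ used everywhere else. The remaining care is purely bookkeeping --- keeping the three sources of slack (one round to pass from $m_i$ to $M_i$, one more for the increment test, and $\Theta(\tau)$ for draining $w_i$) from inflating into anything worse than a single $R=T+O(f)$.
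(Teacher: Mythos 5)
Your proposal is correct and follows essentially the same chain of reasoning as the paper's own proof: block-local plurality fixes $m_i$, Lemma~\ref{lemma:majority} fixes $M_i$, the consistency test then stops resets of $w_i$, and the cooldown drains in $O(c_1)=O(f)$ rounds, at which point $d_i(v,r)=M_i(v,r)$. You are somewhat more explicit than the paper about the block-local bound $n_i-f_i>f_i$ needed for the plurality step (the paper just invokes $f_i<n_i/3$) and about pushing $R$ one round further so that $d_i(v,r-1)\neq\bot$ as well, but these are elaborations of the same argument rather than a different route.
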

\begin{proof}
Since block $i$ is correct, algorithm ${\vec A}_i$ stabilises within $T({\vec A}_i)$ rounds.
As $f_i < n_i/3$, we will observe correctly
$m_i(v,r+1)=m_i(v,r)+1\bmod c_i$ for all $r\geq T({\vec A}_i)$.
Consequently, $M_i(v,r+1)=M_i(v,r)+1\bmod c_i$ for all $r\geq T({\vec A}_i)+1$.
Therefore, $w_i(v,r)$ cannot be reset in rounds $r\geq T({\vec A}_i)+2$,
yielding that $w_i(v,r)=0$ for all $r\geq T({\vec A}_i)+2+2c_1=T+O(f)$. 
The claim follows from the definition of variable $d_i(v,r)$.
\end{proof}

The following lemma states that if a correct node $v$ does not detect an error in a
block's counter, then any other correct node $w$ that considers the
block's counter correct \emph{in any of the last $2c_1$ rounds}
has a counter value that agrees with $v$.
\begin{lemma}\label{lemma:voting_general}
Suppose for $i \in \{0,1\}$, $v\in V\setminus {\cal F}$, and $r \geq 2c_1 = O(f)$ it
holds that $d_i(v,r) \neq \bot$. Then for each $w \in V \setminus {\cal F}$ and
each $r' \in \{r-2c_1+1,\ldots,r\}$ either 
\begin{itemize}[noitemsep]
 \item $d_i(w,r') = d_i(v,r) - (r-r') \bmod c_i$, or 
 \item $d_i(w,r')=\bot$.
\end{itemize}
\end{lemma}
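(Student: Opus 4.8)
The plan is to trace the definitions of $d_i$, $\ell_i$, $d_i$, and above all the cooldown variable $w_i$ back to the majority-vote variables $M_i$, and then invoke \lemmaref{lemma:majority}. The hypothesis $d_i(v,r)\neq\bot$ unpacks, by definition, to $w_i(v,r)=0$ and $M_i(v,r)=d_i(v,r)\neq\bot$. The key consequence of $w_i(v,r)=0$ is retrospective: since $w_i$ is reset to $2c_1$ whenever an inconsistency $M_i(\cdot,s)\neq M_i(\cdot,s-1)+1\bmod c_i$ is perceived, and otherwise decreases by exactly one, the fact that it has reached $0$ by round $r$ means that node $v$ perceived no inconsistency in any of the rounds $r-2c_1+1,\ldots,r$. (Here is where $r\geq 2c_1$ is used: without it, $w_i$ could be $0$ merely because of an arbitrary initial value rather than a genuine $2c_1$-round consistency window.) Hence $M_i(v,r')\neq\bot$ and $M_i(v,r')=M_i(v,r)-(r-r')\bmod c_i$ for every $r'\in\{r-2c_1+1,\ldots,r\}$.

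Next I would fix an arbitrary correct node $w$ and an arbitrary round $r'$ in that window, and argue the dichotomy for $d_i(w,r')$. By definition, $d_i(w,r')$ is either $\bot$ (if $w_i(w,r')\neq 0$) or $M_i(w,r')$; in the former case we are in the second bullet and done, so assume $d_i(w,r')=M_i(w,r')$. Recall that $M_i(w,r')$ and $M_i(v,r')$ are both the output of the \emph{same} majority vote — the vote on the values $m_i(\cdot,r'-1)$ broadcast at the start of round $r'$. We have just established that node $v$ saw a strong majority in this vote, with value $M_i(v,r')\in[c_i]$. If $d_i(w,r')\neq\bot$, then $w$ also saw a strong majority in this vote, so \lemmaref{lemma:majority} applies and gives $M_i(w,r')=M_i(v,r')$. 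Combining with $M_i(v,r')=M_i(v,r)-(r-r')\bmod c_i=d_i(v,r)-(r-r')\bmod c_i$ yields exactly the first bullet.

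I expect no serious obstacle here: the lemma is essentially bookkeeping on the cooldown mechanism plus one application of \lemmaref{lemma:majority}. The one point that warrants care is making the ``retrospective'' inference about $w_i$ fully rigorous, i.e.\ confirming that $w_i(v,r)=0$ with $r\geq 2c_1$ forces $M_i(v,\cdot)$ to have incremented consistently through the whole window, rather than, say, having been reset and then run down to $0$ again within fewer than $2c_1$ rounds — the latter is impossible precisely because a reset sets $w_i$ to $2c_1$ and it decreases by at most one per round. A secondary subtlety is that the claim also implicitly asserts $d_i(v,r-(r-r'))$-type values at $v$ itself are consistent; taking $w=v$ in the argument above covers this, since then $M_i(v,r')$ is trivially equal to itself and the increment relation is exactly what we derived from $w_i(v,r)=0$. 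Once these observations are in place, the statement follows.
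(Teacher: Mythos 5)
Your proof is correct and essentially matches the paper's: both unwind the cooldown counter backwards from $w_i(v,r)=0$ to conclude that no reset occurred in the window $\{r-2c_1+1,\ldots,r\}$, hence $M_i(v,\cdot)$ is non-$\bot$ and increments consistently there, and both finish by applying \lemmaref{lemma:majority} to identify $M_i(w,r')$ with $M_i(v,r')$ whenever the former is non-$\bot$. The only difference is cosmetic ordering---the paper invokes \lemmaref{lemma:majority} first and then rules out $M_i(v,r')=\bot$ via the cooldown bound, whereas you establish the full increment chain at $v$ before invoking it.
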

\begin{proof}
Suppose $d_i(w,r')\neq \bot$. Thus, $d_i(w,r')=M_i(w,r')\neq \bot$. By
\lemmaref{lemma:majority}, either $M_i(v,r')=\bot$ or $M_i(v,r')=M_i(w,r')$.
However, $M_i(v,r')=\bot$ would imply that $w_i(v,r')=2c_1$ and thus
\begin{equation*}
w_i(v,r)\geq w_i(v,r') + r' - r= 2c_1 + r' - r >0,
\end{equation*}
contradicting the assumption that $d_i(v,r)\neq \bot$. Thus,
$M_i(v,r')=M_i(w,r')=d_i(w,r')$. More generally, we get from $r-r'<2c_1$ and
$w_i(v,r)=0$ that $w_i(v,r'')\neq 2c_1$ for all $r''\in \{r',\ldots,r\}$.
Therefore, we have that $M_i(v,r''+1)=M_i(v,r'')+1\bmod c$ for all $r''\in
\{r',\ldots,r-1\}$, implying 
\begin{equation*}
d_i(v,r)=M_i(v,r)=M_i(v,r')+r-r'=d_i(w,r')+r-r',
\end{equation*}
proving the claim of the lemma.
\end{proof}

The above properties allow us to prove a key lemma:
within $T+O(f)$ rounds, there will be $\tau$ consecutive rounds during which the
variable $\ell(v,r)$ points to the same correct block for all correct nodes.
\begin{lemma}\label{lemma:leader}
Let $R$ be as in \lemmaref{lemma:voting_correct}. There is a round $r\leq
R+O(f)=T+O(f)$ and a correct block $i$ so that for all $v\in V\setminus {\cal
F}$ and $r'\in \{r,\ldots,r+\tau-1\}$ it holds that $\ell(v,r')=i$.
\end{lemma}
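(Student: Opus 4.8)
The plan is to identify a correct block~$i$ (which exists by \lemmaref{lemma:stable-block}) and argue that shortly after round~$R$ from \lemmaref{lemma:voting_correct}, there is a window of $\tau$ consecutive rounds in which every correct node uses $\ell(v,r')=i$. First I would fix a correct block~$i$ and recall from \lemmaref{lemma:voting_correct} that for all $r \geq R$ and all correct $v$ we have $d_i(v,r)\neq\bot$, $d_i(v,r)=d_i(w,r)$, and $d_i(v,r)$ increments by one modulo $c_i = 3^i\cdot 2\tau$ each round. Hence for $r\geq R$ the value $\ell_i(v,r)=\lfloor d_i(v,r)/(3^i\tau)\rfloor$ is the same at all correct nodes and, since $d_i$ is a consistent $(3^i\cdot2\tau)$-counter, $\ell_i(\cdot,r)$ equals $0$ for $3^i\tau$ consecutive rounds and $1$ for the next $3^i\tau$ rounds, cycling. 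In particular, within at most $c_i = O(f)$ rounds after $R$, there is a block of $3^i\tau \geq \tau$ consecutive rounds, say $\{r,\ldots,r+3^i\tau-1\}$, during which $\ell_i(v,r')=i$ for every correct $v$ — I would pick the phase where $\ell_i$ points to~$i$ itself. (If $i=0$ this is the phase where $d_0$ is in $[0,\tau)$; if $i=1$ it is the phase where $d_1$ is in $[3\tau,6\tau)$. Either way such a phase recurs every $c_i$ rounds, so one starts within $c_i$ rounds of $R$.)

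Next I need to upgrade "$\ell_i(v,r')=i$" to "$\ell(v,r')=i$" for all correct nodes. By the definition of $\ell(v,r')$, a node $v\in V_i$ uses $\ell(v,r')=\ell_i(v,r')=i$ directly, so these nodes are fine. The delicate case is a correct node $w\in V_{1-i}$: it uses $\ell(w,r')=\ell_{1-i}(w,r')$ whenever $\ell_{1-i}(w,r')\neq\bot$, and only falls back to $\ell_i(w,r')$ when $\ell_{1-i}(w,r')=\bot$. So I must show that during the chosen window, for every correct $w\in V_{1-i}$ we have $\ell_{1-i}(w,r')=i$ whenever $\ell_{1-i}(w,r')\neq\bot$. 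Here \lemmaref{lemma:voting_general} does the work: whenever some correct node has $d_{1-i}$ different from $\bot$ in the last $2c_1$ rounds, all correct nodes that have a non-$\bot$ value for $d_{1-i}$ in that window agree on a single consistently-incrementing sequence. Thus the "opinions" of block $1-i$'s counter among correct nodes, over any $2c_1$-round window, are consistent with one global counter value (or $\bot$); consequently $\ell_{1-i}(\cdot,r')$, when not $\bot$, is a single globally-agreed pointer value. The remaining task is to ensure this agreed value is exactly~$i$ during the window — equivalently, that block $1-i$'s (possibly faulty) counter is either stalled (always $\bot$, giving fallback to the correct $\ell_i = i$) or happens to point to~$i$.

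The main obstacle, and the crux of the argument, is handling the case where block $1-i$ is faulty but its majority-vote counter $d_{1-i}$ nonetheless passes the cooldown check and yields $\ell_{1-i}(\cdot,r')=1-i$ (pointing at itself) during part of my intended window — this would make some correct nodes disagree with the $V_i$ nodes. To defeat this I would exploit the mismatch of periods: block $i$'s "self-pointing" phase lasts $3^i\tau$ rounds, while block $(1-i)$'s pointer, if it moves at all, switches every $3^{1-i}\tau$ rounds; because $\{3^i\tau, 3^{1-i}\tau\} = \{\tau, 3\tau\}$, any $\tau$-length subwindow of block~$i$'s self-pointing phase can overlap at most one switch of block~$(1-i)$'s pointer, and within the phase where block~$i$ points to itself there is always a $\tau$-length subwindow during which block~$(1-i)$ either points to~$i$ or is $\bot$ throughout. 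Concretely, when $i=0$ the $2\tau$-long window where $\ell_0=0$ contains a $\tau$-long piece during which the $6\tau$-counter-derived $\ell_1$ (switching every $3\tau$) does not switch; if on that piece $\ell_1 = 0$ we are done, and if $\ell_1 = 1$ or $\bot$ we instead shift to align with block~$0$'s start-of-cycle where one can check $\ell_1$ must have wrapped to $0$. I would make this precise by a short case analysis on the residues of $d_0,d_1$ modulo their periods, using that both are consistent counters with periods in ratio $1:3$, to locate a common $\tau$-round window; all arithmetic here is elementary bookkeeping with the constants $\tau$, $2\tau$, $3\tau$, $6\tau$, and the round offsets are absorbed into the stated $O(f)$ slack since $c_1 = 6\tau = O(f)$. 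Finally I would collect the bound: the window starts at some $r \leq R + O(f) = T + O(f)$, as claimed.
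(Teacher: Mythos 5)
Your plan follows the paper's argument in all of its key ingredients: pick a correct block $i$ via \lemmaref{lemma:stable-block}, use \lemmaref{lemma:voting_correct} to get a consistent $d_i$ after round $R$, invoke \lemmaref{lemma:voting_general} to obtain a single ``virtual'' counter with which all non-$\bot$ observations of $d_{1-i}$ agree over a $2c_1$-round window, and exploit the $1{:}3$ period ratio of $\ell_0$ and $\ell_1$ to locate a $\tau$-round window in which both pointers lie in $\{i,\bot\}$.

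Where the plan breaks is the alignment step for $i=0$. You write that ``the $2\tau$-long window where $\ell_0=0$'' contains a $\tau$-long piece during which $\ell_1$ does not switch; but $\ell_0=\lfloor d_0/\tau\rfloor$ and $d_0$ is a $2\tau$-counter, so the phase with $\ell_0=0$ lasts only $\tau$ rounds, not $2\tau$. A single $\tau$-round window overlaps at most one switch of the $3\tau$-period pointer $\ell_1$, but nothing forces it to overlap the $\ell_1=0$ phase at all---$\ell_1$ can be $1$ throughout that window. Your fallback of ``shifting to block~$0$'s start-of-cycle where one can check $\ell_1$ must have wrapped to $0$'' does not hold as stated: after shifting by $2\tau$ to the next $\ell_0=0$ phase, $\ell_1$ need not have wrapped, and you may have to wait up to $6\tau$ rounds and still pick the correct $\tau$-subwindow. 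The paper avoids this case distinction by always starting from the \emph{slower} pointer: within the interval $I=\{r_0,\ldots,r_0+2c_1-1\}$ (where $r_0$ is the first round $\geq R$ in which some correct node has $d_{1-i}\neq\bot$, and the case where no such round exists in $\{R,\ldots,R+c_i-1\}$ is dispatched separately), the $6\tau$-counter $d_1$ completes a full cycle, yielding a $3\tau$-subinterval $I_1$ on which $\ell_1\in\{i,\bot\}$; inside $I_1$ the $2\tau$-counter $d_0$ then yields a $\tau$-subinterval $I_0$ with $\ell_0\in\{i,\bot\}$ as well, for either value of $i$. The bound $r<r_0+2c_1-1<R+3c_1=T+O(f)$ then falls out directly. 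If you reorder your alignment argument in this way (outer pointer first, inner pointer second), your proof becomes essentially the paper's.
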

\begin{proof}
By \lemmaref{lemma:stable-block}, there exists a correct block $i$. 
Thus by \lemmaref{lemma:voting_correct}, variable $d_i(v,r)$ counts correctly during
rounds $r \geq R$. If there is no round $r\in \{R,\ldots,R+c_i-1\}$ such that
some $v\in V\setminus {\cal F}$ has $\ell_{1-i}(v,r)\neq \bot$, then
$\ell(v,r)=\ell_i(v,r)$ for all such $v$ and $r$ and the claim of the lemma
holds true by the definition of $\ell_i(v,r)$ and the fact that $d_i(v,r)$
counts correctly and consistently.

Hence, assume that $r_{0}\in \{R,\ldots,R+c_i-1\}$ is minimal with the
property that there is some $v\in V\setminus {\cal F}$ so that
$\ell_{1-i}(v,r_{0})\neq \bot$. Therefore, $d_{1-i}(v,r_{0})\neq \bot$
and, by \lemmaref{lemma:voting_general}, this implies for all $w\in V\setminus
{\cal F}$ and all $r\in \{r_{0},\ldots,r_{0}+2c_1-1\}$ that either
$d_{1-i}(w,r)=\bot$ or $d_{1-i}(w,r)=d_{1-i}(v,r_{0})+r-r_{0}$. In other
words, there is a ``virtual counter'' that equals $d_{1-i}(v,r_{0})$ in round
$r_{0}$ so that during rounds $\{r_{0},\ldots,r_{0}+2c_1-1\}$
all $d_{1-i}(\cdot,\cdot)$ variables that are not $\bot$ agree with this counter.

Consequently, it remains to show that both $\ell_i$ and the variable
$\ell_{1-i}$ derived from this virtual counter are equal to $i$ for $\tau$ consecutive
rounds during the interval $I = \{r_{0},\ldots,r_{0}+2c_1-1\}$, as then
$\ell(v,r')=i$ for $v\in V\setminus {\cal F}$ and all such rounds $r'$. 

Clearly, the $c_1$-counter consecutively counts from $0$ to $c_1-1$ at least once during
the interval $I = \{r_{0},\ldots,r_{0}+2c_1-1\}$. Recalling that $c_1=6\tau$, we see
that $\ell_1(v,r)=i$ for all $v\in V\setminus {\cal F}$ with $\ell_1(v,r) \neq
\bot$ for some interval $I_1 \subset I$ of $3\tau$ consecutive rounds.
As $c_0=2\tau$, we have that $\ell_0(v,r)=i$ for all $v\in V\setminus {\cal F}$ with $\ell_0(v,r)\neq \bot$ for $\tau$ consecutive rounds during this subinterval $I_1$. Thus, we have an interval $I_0 = \{r, \ldots, r+\tau-1\} \subseteq I_1$ such that for all $r' \in I_0$ we have $\ell_0(v,r'), \ell_1(v,r') \in \{i, \bot\}$ and $\ell_0(v,r') \neq \bot$ or $\ell_1(v,r') \neq \bot$ yielding $\ell(v,r')=i$ for each correct node. Because $r < r_{0}+2c_1-1< R+3c_1=T+O(f)$, this completes the proof.
\end{proof}

Using the above lemma, we get a counter where all nodes eventually count correctly
and consistently modulo $\tau$ for at least $\tau$ rounds.

\begin{corollary}\label{coro:round-counter}
There is a round $r=T+O(f)$ so that for all $v,w \in V \setminus {\cal F}$ it holds that
\begin{enumerate}
 \item $d(v,r)=d(w,r)$ and 
 \item for all $r'\in \{r+1,\ldots,r+\tau-1\}$ we have $d(v,r')=d(v,r'-1)+1\bmod \tau$.
\end{enumerate}
\end{corollary}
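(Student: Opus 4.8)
The plan is to read the claim off almost directly from \lemmaref{lemma:leader}, using only the definition of the derived variable $d(v,r)$ together with one elementary observation about modular arithmetic. First I would invoke \lemmaref{lemma:leader} to obtain a round $r = T+O(f)$ and a correct block $i \in \{0,1\}$ such that $\ell(v,r') = i$ for every correct node $v$ and every $r' \in \{r,\ldots,r+\tau-1\}$. Chasing through the proof of that lemma, the interval $\{r,\ldots,r+\tau-1\}$ lies inside $\{r_0,\ldots,r_0+2c_1-1\}$ with $r_0 \geq R$ (and in the remaining case the round is directly taken from $\{R,\ldots\}$), so $r' \geq r \geq R$ for all rounds under consideration; this is precisely what lets us apply \lemmaref{lemma:voting_correct} on this interval. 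Also $r \geq R = T+O(f) \geq T$, so $r = T+O(f)$ as required.

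By the definition of $d(v,r')$, since $\ell(v,r') = i \neq \bot$, we have $d(v,r') = d_i(v,r') \bmod \tau$ for all correct $v$ and all $r' \in \{r,\ldots,r+\tau-1\}$. Block $i$ is correct, so \lemmaref{lemma:voting_correct} gives, for all correct $v,w$ and all $r' \geq R$, that $d_i(v,r') = d_i(w,r')$ and $d_i(v,r') = d_i(v,r'-1) + 1 \bmod c_i$. The first of these immediately yields claim~1: $d(v,r) = d_i(v,r) \bmod \tau = d_i(w,r) \bmod \tau = d(w,r)$.

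For claim~2 I would use that $c_i = 3^i \cdot 2\tau$ is an integer multiple of $\tau$. Fix $r' \in \{r+1,\ldots,r+\tau-1\}$ and put $x = d_i(v,r'-1) \in [c_i]$; since $r'-1$ also lies in $\{r,\ldots,r+\tau-1\}$ we have $d(v,r'-1) = x \bmod \tau$. If $x+1 < c_i$, then $d_i(v,r') = x+1$ and hence $d(v,r') = (x+1)\bmod\tau = (x\bmod\tau)+1\bmod\tau = d(v,r'-1)+1\bmod\tau$. If instead $x+1 = c_i$, then $d_i(v,r') = 0$ so $d(v,r') = 0$, while $\tau \mid c_i$ gives $x\bmod\tau = (c_i-1)\bmod\tau = \tau-1$, so $d(v,r'-1)+1\bmod\tau = 0 = d(v,r')$ again. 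In both cases claim~2 holds.

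There is no real obstacle here: the corollary is essentially a restatement of \lemmaref{lemma:leader}. The only point that warrants care is the wraparound case above --- one must check that collapsing a correctly incrementing $c_i$-counter modulo $\tau$ genuinely produces a correctly incrementing $\tau$-counter, which is exactly where the hypothesis $\tau \mid c_i$ (equivalently $c_i \in \{2\tau,6\tau\}$, as fixed in \theoremref{thm:twoblocks}) is used.
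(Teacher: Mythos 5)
Your proof is correct and follows the same route as the paper: invoke \lemmaref{lemma:leader} to get the round $r = T+O(f)$ and the correct block $i$ with $\ell(v,\cdot) = i$ on the interval, then use \lemmaref{lemma:voting_correct} to conclude that $d_i(v,\cdot)$ counts consistently, and pass to $d(v,\cdot) = d_i(v,\cdot) \bmod \tau$. The paper's own proof is a two-sentence appeal to those two lemmas and glosses over the step where a consistently incrementing $c_i$-counter is reduced modulo $\tau$; your explicit wraparound check, which is where the hypothesis $\tau \mid c_i$ (i.e.\ $c_i \in \{2\tau,6\tau\}$) is actually needed, is a worthwhile detail and fills that small gap cleanly.
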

\begin{proof}
By \lemmaref{lemma:leader}, there is a round $r=T+O(f)$ and a correct block $i$
such that for all $v\in V\setminus {\cal F}$ we have $\ell(v,r')=i$ for all
$r'\in \{r,\ldots,r+\tau-1\}$. Moreover, $r$ is sufficiently large to apply
\lemmaref{lemma:voting_correct} to $d_i(v,r')=d(v,r')$ for $r'\in
\{r+1,\ldots,r+\tau-1\}$, yielding the claim.
\end{proof}

\subsection{Reaching Consensus\label{ssec:consensus}}

\corollaryref{coro:round-counter} guarantees that all correct nodes eventually agree on a common counter for $\tau$ rounds, i.e., we have a weak counter. We will now use the weak counter to construct a strong counter.

Our construction uses a non-self-stabilising consensus algorithm. The basic idea is that the weak counter serves as the ``round counter'' for the consensus algorithm. Hence we will reach agreement as soon as the weak counter is counting correctly. The key challenge is to make sure that agreement \emph{persists} even if the counter starts to misbehave. It turns out that a straightforward adaptation of the classic phase king protocol~\cite{berman89consensus} does the job. The algorithm has the following
properties:
\begin{itemize}
 \item the algorithm tolerates $f < n/3$ Byzantine failures, 
 \item the running time of the algorithm is $O(f)$ rounds and it uses $O(\log c)$ bits of state, 
 \item if node $k$ is correct, then agreement is reached if all
 correct nodes execute rounds $3k$, $3k+1$, and $3k+2$ consecutively in this order, 
 \item once agreement is reached, it will persist even if nodes execute \emph{different} rounds in arbitrary order.
\end{itemize}

We now describe the modified phase king algorithm that will yield a $c$-counting algorithm.
Denote by $a(v,r) \in [c] \cup \{ \infty \}$ the output value of the algorithm at round $r$.
Here $\infty$ is used as a ``reset state'' similarly to $\bot$ in the previous section. There is also an auxiliary binary value $b(v, r) \in \{0,1\}$. Define the following short-hand for the increment operation modulo $c$:
\[
x \oplus 1 = \begin{cases}
x + 1 \bmod c & \text{if } x \neq \infty,\\
\infty & \text{if } x = \infty.
\end{cases}
\]

\begin{table}
\centering
\begin{tabular}{@{}l@{\qquad}l@{\ \ }l@{}}
\toprule
Set & \multicolumn{2}{@{}l@{}}{Instructions for round $r>0$} \\
\midrule
$I_{3k}$:
& 0a. & If fewer than $n-f$ nodes sent $a(v,r-1)$, set $a(v,r) = \infty$. \\
& 0b. & Otherwise, $a(v,r) = a(v,r-1) \oplus 1$. \\
\midrule
$I_{3k+1}$:
& 1a. & Let $z_j = |\{u \in V : a(u,r-1) = j\}|$ be the number of $j$ values received. \\
& 1b. & If $z_{a(v,r-1)} \ge n-f$, set $b(v,r) = 1$. Otherwise, set $b(v,r) = 0$. \\
& 1c. & Let $z = \min \{ j : z_j > f\}$. \\
& 1d. & Set $a(v,r) = z \oplus 1$. \\
\midrule
$I_{3k+2}$:
& 2a. & If $a(v,r-1) = \infty$ or $b(v,r-1) = 0$, set $a(v,r) = \min\{c-1, a(k,r-1) \} \oplus 1$. \\
& 2b. & Otherwise, $a(v,r) = a(v,r-1) \oplus 1$. \\
& 2c. & Set $b(v,r) = 1$. \\
\bottomrule
\end{tabular}
\caption{The instruction sets for node $v\in V$ in the phase king protocol.}\label{tab:instr}
\end{table}

For $k \in [f+2]$, we define the instruction sets listed in \tableref{tab:instr}. Recall that in the model of computation that we use in this work, in each round all nodes first broadcast their current state (in particular, the current value of $a$), then they receive the messages, and finally they update their local state. The instruction sets pertain to the final part---how to update the local state variables $a$ and $b$ based on the messages received from the other nodes.

First, we show that if the instruction sets are executed in the right order by all correct nodes for a correct leader node $k \in [f+2]$, then agreement on a counter value is established.

\begin{lemma}\label{lemma:establish_agreement}
Suppose that for some correct node $k \in [f+2]$ and a round $r>2$, all non-faulty nodes execute instruction sets $I_{3k}$, $I_{3k+1}$, and $I_{3k+2}$ in rounds $r-2$, $r-1$, and $r$, respectively. Then $a(v,r)=a(u,r) \neq \infty$ for any two correct nodes $u,v\in V$. Moreover, $b(v,r+1)=1$ at each correct node $v \in V$.
\end{lemma}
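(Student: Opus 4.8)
The plan is to run the classical phase-king agreement argument, taking care of the reset symbol $\infty$ and of the $\min\{c-1,\cdot\}$ clamp in step~2a. Write $\rho_0=r-2$, $\rho_1=r-1$, $\rho_2=r$ for the three rounds of the phase that all correct nodes execute in order, and for a node $u$ write $a_t(u)=a(u,\rho_t)$ for its value after the round-$\rho_t$ update; the values $a(u,r-3)$ fed into $I_{3k}$ are arbitrary, as the algorithm is self-stabilising. I will repeatedly use two elementary consequences of $f<n/3$: (i)~if a correct node receives a fixed value $x$ from at least $n-f$ senders, then at least $n-2f>f$ correct nodes actually hold $x$, since correct nodes do not equivocate; and (ii)~two disjoint sets of correct nodes cannot both have size $\ge n-2f$, because $2(n-2f)>n-f$.

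First, I would prove a cleanliness property of $I_{3k}$: there is \emph{at most one} value $x^\star\in[c]$ held by some correct node after $\rho_0$, every other correct node holding $\infty$. Indeed, if correct $v$ has $a_0(v)=x\neq\infty$ then $v$ executed step~0b, so it saw $n-f$ copies of its incoming value $x-1\bmod c$; by~(i), at least $n-2f$ correct nodes held $x-1\bmod c$ at the start of $\rho_0$. Two correct nodes with distinct finite values after $\rho_0$ would give two disjoint such sets, contradicting~(ii). (If no correct node holds a finite value, take ``$x^\star$ undefined'', i.e.\ all correct nodes hold $\infty$ after $\rho_0$; the argument below then makes all correct nodes default in $\rho_2$, which is harmless.)

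Next, I would analyse $I_{3k+1}$. By the cleanliness property the only finite value any correct node can hold after $\rho_0$ is $x^\star$, so at every correct node $v$ the only finite value with more than $f$ supporters is $x^\star$; moreover the $\ge n-f>2f$ correct nodes guarantee that at least one of $z_{x^\star}^{(v)},z_\infty^{(v)}$ exceeds $f$, so $z^{(v)}=\min\{j:z_j^{(v)}>f\}$ is well defined and lies in $\{x^\star,\infty\}$. Hence every correct node gets $a_1(v)\in\{\hat x,\infty\}$ where $\hat x:=x^\star\oplus1$, so the correct nodes already agree on the unique finite candidate. The key secondary claim is: \emph{if a correct node $v$ has $b(v,\rho_1)=1$ and $a_1(v)=\hat x\neq\infty$, then $a_1(k)=\hat x$ as well}. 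Here $b(v,\rho_1)=1$ means $v$ saw $\ge n-f$ copies of $a_0(v)$, while $a_1(v)$ finite forces $z^{(v)}=x^\star$, i.e.\ $z_{x^\star}^{(v)}\ge f+1$; if $a_0(v)$ were $\infty$ we would get $z_\infty^{(v)}+z_{x^\star}^{(v)}\ge(n-f)+(f+1)>n$, impossible, so $a_0(v)=x^\star$ and, by~(i), at least $n-2f$ correct nodes hold $x^\star$ after $\rho_0$. The correct king $k$ sees all of them, so $z_{x^\star}^{(k)}\ge n-2f>f$ while no other finite value reaches $f+1$ supporters at $k$; hence $z^{(k)}=x^\star$ and $a_1(k)=\hat x$.

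Finally, I would conclude from $I_{3k+2}$. If some correct node is non-defaulting (step~2b) then the secondary claim gives $a_1(k)=\hat x\in[c]$, so every defaulting correct node (step~2a) computes $\min\{c-1,\hat x\}\oplus1=\hat x\oplus1$, matching $a_1(v)\oplus1=\hat x\oplus1$ at the non-defaulting ones. If no correct node is non-defaulting, every correct node runs step~2a with the single value $a_1(k)$ broadcast by the king and thus computes the same $a_2(v)=\min\{c-1,a_1(k)\}\oplus1$. In either case all correct nodes agree on a value in $[c]$, hence on a value $\neq\infty$, proving the first statement. For the second, step~2c sets $b(v,r)=1$ at every correct $v$, and in round $r+1$ each correct node executes exactly one instruction set: $b$ is left untouched in an $I_{3k'}$ block, re-set to~$1$ in an $I_{3k'+2}$ block, or set to~$1$ by step~1b in an $I_{3k'+1}$ block, since all $\ge n-f$ correct nodes broadcast the common value $a(v,r)\neq\infty$ so that $z_{a(v,r)}^{(v)}\ge n-f$. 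I expect the main obstacle to be the $I_{3k+1}$ analysis---specifically, ruling out that a correct node carries $b=1$ ``for the wrong reason'' (via $\ge n-f$ copies of $\infty$) while still proposing a finite value, and checking that the $\min\{c-1,\cdot\}$ clamp in step~2a never creates disagreement; the pigeonhole bound $z_{x^\star}^{(v)}+z_\infty^{(v)}\le n$ resolves the first point and the identity $\min\{c-1,\hat x\}=\hat x$ (valid because $\hat x\in[c]$) the second.
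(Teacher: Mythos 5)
Your proof is correct and follows essentially the same three-stage argument as the paper's (cleanliness of values after $I_{3k}$, propagation to $\{\hat x,\infty\}$ in $I_{3k+1}$, and the case split on defaulting versus non-defaulting nodes in $I_{3k+2}$, with the same secondary claim that a non-defaulting correct node pins down the king's value). If anything you are more explicit than the paper: you rule out a non-defaulting node carrying $b=1$ ``via $\infty$'' with the pigeonhole bound $z_{x^\star}+z_\infty\le n$, a point the paper tacitly assumes when writing $a(v,r-2)\neq\infty$, and you supply the round-$(r+1)$ case analysis needed for $b(v,r+1)=1$, whereas the paper simply cites instruction~2c, which only directly yields $b(v,r)=1$.
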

\begin{proof}
This is essentially the correctness proof for the phase king algorithm. Without loss of generality, we can assume that the number of faulty nodes is exactly $f$. Since we have $f<n/3$, it is not possible that two correct nodes $u,v \in V \setminus \mathcal{F}$ both satisfy $a(v,r-2) \neq a(u,r-2)$ and $a(v,r-2), a(u,r-2) \in [c]$: otherwise, on round $r-2$, nodes $u$ and $v$ would have observed different majority values contradicting \lemmaref{lemma:majority}. Therefore, there exists some $x \in [c]$ such that $a(v,r-2) \in \{x,\infty\}$ for all $v \in V \setminus \mathcal{F}$. Checking $I_{3k+1}$ we get that $a(v,r-1) \in \{x+1 \bmod c,\infty\}$, as no node can see values other than $x$ or $\infty$ more than $f$ times when executing instruction 1c.

To prove the claim, it remains to consider two cases when executing instructions in $I_{3k+2}$. In the first case, all non-faulty nodes execute instruction 2a on round $r$. Then $a(u,r) = a(v,r) = \min\{c-1, a(k,r-1)\} \oplus 1 \in [c]$ for any $u,v \in V \setminus \mathcal{F}$. 

In the second case, there is some node $v$ not executing instruction 2a. Hence, $a(v,r-1)\neq \infty$ and $b(v,r-1)=1$, implying that $v$ computed $z_{a(v,r-2)} \geq n-f$ on round $r-1$. Consequently, at least $n-2f>f$ correct nodes $u$ satisfy $a(u, r-2)=a(v, r-2)\neq \infty$. We can now infer that $a(u, r-1) = a(v, r-1)=a(v,r-2)+1\bmod c$ for all correct nodes $u$: instruction 1c must evaluate to $a(v, r-1) \in [c]$ at all correct nodes, because we know that no correct node $u$ satisfies that both $a(u,r-2)\neq a(v,r-2)$ and $a(u,r-2)\neq \infty$. This implies that $a(u,r) = a(v,r) \neq \infty$ for all correct nodes $u$, regardless of whether they execute instruction 2a. Trivially, $b(v,r) = 1$ at each correct node $v$ due to instruction 2c.
\end{proof}

Next, we argue that once agreement is established, it persists---it does not matter any more which instruction sets are executed.
\begin{lemma}\label{lemma:maintain_agreement}
Assume that $a(v,r) = x \in [c]$ and $b(v,r) = 1$ for all correct nodes $v$ in some round~$r$. Then $a(v,r+1) = x+1 \bmod c$ and $b(v,r+1)=1$ for all correct nodes $v$.
\end{lemma}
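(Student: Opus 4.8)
The plan is to prove the lemma by a direct case analysis on which of the three instruction sets $I_{3k}$, $I_{3k+1}$, $I_{3k+2}$ of \tableref{tab:instr} a given correct node $v$ executes in round $r+1$. The subtlety to keep in mind is that the weak round counter $d(\cdot,\cdot)$ is not assumed consistent, so distinct correct nodes may execute distinct instruction sets in round $r+1$; I therefore need the conclusion to hold uniformly over all three possibilities. The shared hypothesis feeding into every case is that at the beginning of round $r+1$ each of the (at least $n-f$) correct nodes broadcasts the value $x$, while locally $a(v,r) = x \in [c]$ and $b(v,r) = 1$.

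First I would dispatch $I_{3k}$: node $v$ receives $x$ from at least $n-f$ senders, so the test of instruction 0a fails and instruction 0b sets $a(v,r+1) = x \oplus 1 = x+1 \bmod c$; since $I_{3k}$ leaves $b$ untouched, $b(v,r+1) = b(v,r) = 1$. Next, for $I_{3k+1}$ I would observe that, with $z_j$ as in instruction 1a, we have $z_x \ge n-f$ whereas $z_j \le f$ for every $j \ne x$ (only the at most $f$ faulty nodes can report something other than $x$). Hence instruction 1b sets $b(v,r+1) = 1$, and in instruction 1c the minimiser $z = \min\{j : z_j > f\}$ is exactly $x$ — this is the one place where $f < n/3$ is used, via $z_x \ge n-f > f$ — so instruction 1d gives $a(v,r+1) = x+1 \bmod c$. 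Finally, for $I_{3k+2}$: the guard of instruction 2a is false because $a(v,r) = x \ne \infty$ and $b(v,r) = 1 \ne 0$, so instruction 2b yields $a(v,r+1) = x+1 \bmod c$ and instruction 2c sets $b(v,r+1) = 1$.

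Assembling the three cases shows $a(v,r+1) = x+1 \bmod c$ and $b(v,r+1) = 1$ for every correct $v$, which is the claim. I do not anticipate a genuine obstacle: the argument is just a careful unwinding of the instruction sets, and the only quantitative ingredient is the standard threshold inequality $n-f>f$ together with the bound of $f$ on the number of faulty nodes. The only point demanding explicit care is the bookkeeping of $b$ in the $I_{3k}$ branch, where I rely on the convention that a state variable not assigned by the executed instruction set keeps its value from the previous round.
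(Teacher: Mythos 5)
Your proposal is correct and follows essentially the same route as the paper's proof: a direct case analysis on which of the three instruction sets a correct node executes in round $r+1$, using that at least $n-f$ correct nodes broadcast $x$ (so the $n-f$ threshold is met, $z_x > f$ while $z_j \le f$ for $j \ne x$, and the guard of 2a fails). Your write-up is a bit more explicit than the paper's about the convention that $I_{3k}$ leaves $b$ unchanged and about why $z=x$ in instruction 1c, but the argument is the same.
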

\begin{proof}
Each node observes at least $n-f$ nodes with counter value $x \in [c]$, and hence at most $f$ nodes with some value $y \neq x$. Let $v$ be a correct node and consider all possible instruction sets it may execute. 

First, consider the case where instruction set $I_{3k}$ is executed. In this case, $v$ increments $x$, resulting in $a(v,r+1) = x+1 \bmod c$ and $b(v,r+1)= 1$. Second, executing $I_{3k+1}$, node $v$ evaluates $z_x \ge n-f$ and $z_y \le f$ for all $y \neq x$. Hence it sets $b(v,r+1)=1$ and $a(v,r+1) = x + 1 \bmod c$. Finally, when executing $I_{3k+2}$, node $v$ skips instruction 2a and sets $a(v,r+1) = x+1\bmod c$ and $b(v,r+1) = 1$.
\end{proof}

\subsection{Proof of Theorem~\ref{thm:twoblocks}}

We now have all the building blocks to devise an $f$-resilient $c$-counter running on $n$ nodes. The idea is as follows: first, we use the construction given in \sectionref{ssec:common} to get a weak $\tau$-counter that eventually counts correctly for $\tau=3(f+2)$ rounds. Concurrently, all nodes execute the modified phase king algorithm given in \sectionref{ssec:consensus} which by \lemmaref{lemma:establish_agreement}~and~\lemmaref{lemma:maintain_agreement} guarantees that all nodes will establish and maintain agreement on the output variable for the $c$-counter.

\twoblocks*
\begin{proof}
First, we apply the construction underlying \corollaryref{coro:round-counter}.
Then we have every node $v \in V$ in each round $r$ execute the instructions for round
$d(v,r)$ of the phase king algorithm from \sectionref{ssec:consensus}. It
remains to show that this yields a correct algorithm~$\vec B$ with stabilisation
time $T(\vec B) = T + O(f)$ and state complexity $S(\vec B) = S + O(\log f +
\log c)$, where $T = \max\{ T(\vec A_i) \}$ and $S = \max \{ S(\vec A_i)\}$.

By \corollaryref{coro:round-counter}, there exists a round $r=T+O(f)$ so that the
variables $d(v,r)$ behave as a consistent $\tau$-counter during rounds $\{r,\ldots,r+\tau-1\}$
for all $v \in V \setminus \mathcal{F}$. 
As there are at most $f$ faulty nodes, there exist at least two correct nodes $v \in [f+2]$. 
Since $\tau = 3(f+2)$, then for at least one correct node $k \in [f+2] \setminus \mathcal{F}$, there is a round $r \le r_k \leq r + \tau -3$ such that $d(w,r_k+h) = 3k + h$ for all $w \in V \setminus {\cal F}$ and $h \in \{0,1,2\}$. 
Therefore, by \lemmaref{lemma:establish_agreement}~and~\lemmaref{lemma:maintain_agreement}, the output variables satisfy $a(v,r') = a(w,r') \in [c]$ for all correct nodes and rounds $r ' \ge r_k+3$.
Thus, the algorithm stabilises in $r_v + 3 \leq r + \tau = r + O(f) = T + O(f)$ rounds.

The bound for the state complexity follows from the facts that, at each node, we
need at most $S$ bits to store the state of ${\vec A}_i$ and
$O(\log \tau + \log c)=O(\log f + \log c)$ bits to store the variables
listed in \tableref{tab:variables}.
\end{proof}

\section{Deterministic Counting}\label{sec:recursion}

In this section, we use the construction given in the previous section to obtain algorithms that only need a small number of state bits. Essentially, all that remains is to recursively apply \theoremref{thm:twoblocks}. Each step of the recursion roughly doubles the resilience in an optimal manner: if we start with an optimally resilient algorithm, we get a new algorithm with higher, but still optimal, resilience. Therefore, to get any desired resilience of $f>0$, it suffices to repeat the recursion for $\Theta(\log f)$ many steps. \figureref{fig:recursion} illustrates how we can recursively apply \theoremref{thm:twoblocks}.

We now analyse the correctness, time and state complexity of the resulting algorithms.

\begin{figure}[t]
 \centering
 \includegraphics{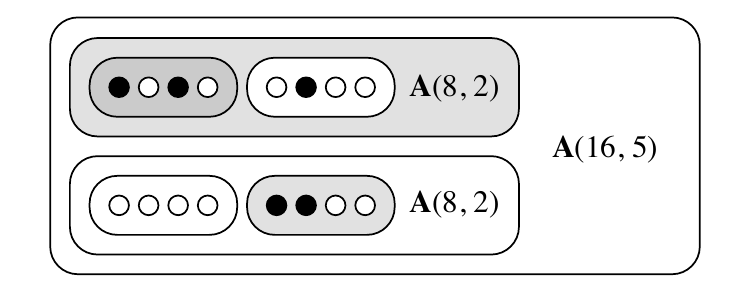}
 \caption{An example on how to recursively construct a 5-resilient algorithm running on 16 nodes. The small circles represent the nodes. Each group of four nodes runs a 1-resilient counter $\vec A(4, 1)$. On top of this, each larger group of 8 nodes runs a 2-resilient counter $\vec A(8, 2)$ attained from the first step of recursion. At the top-most layer, all of the 16 nodes run a 5-resilient counter $\vec A(16,5)$. Faulty nodes are black and faulty blocks are gray.
\label{fig:recursion}}
\end{figure}

\mainthm*
\begin{proof}
We show the claim by induction on $f$. The induction hypothesis is that for all
$f > f' \ge 0$, $c>1$, and $n'>3f'$, we can construct ${\vec B}\in {\cal A}(f',n',c)$
with
\begin{equation*}
  T({\vec B}) \le 1+\alpha f'\sum_{k=0}^{\lceil \log f' \rceil}(1/2)^k \qquad 
  \mbox{and} \qquad S({\vec B}) \le \beta(\log^2 f' +\log c),
\end{equation*}
where $\alpha$ and $\beta$ are sufficiently large constants and for $f'=0$ the sum is
empty, that is, $T({\vec B}) \le 1$. As $\sum_{k=0}^{\infty} (1/2)^k=2$, the time bound will be $O(f')$. 

Note that for $f \ge 0$ it is sufficient to show the claim for
$n(f)=3f+1$, as we can easily generalise to any $n>n(f)$ by running $\vec B$ on
the first $n(f)$ nodes and letting the remaining nodes follow the majority
counter value among the first $n(f)$ nodes executing the algorithm; this increases the
stabilisation time by one round and induces no memory overhead.

For the base case, observe that a $0$-resilient $c$-counter of $n(0)=1$ node is
trivially given by the node having a local counter. It stabilises in $0$ rounds and
requires $\lceil\log c\rceil$ state bits. As pointed out above, this implies a
$0$-resilient $c$-counter for any $n$ with stabilisation time $1$ and $\lceil
\log c\rceil$ bits of state.

For the inductive step to $f$, we apply \theoremref{thm:twoblocks} with the parameters $n_0 = \lfloor n/2 \rfloor$, $n_1 = \lceil n/2 \rceil$, $f_0 = \lfloor(f-1)/2) \rfloor$, $f_1 = \lceil(f-1)/2)\rceil$, $\tau = 3(f+2)$ and $c_i = 3^i\cdot2\tau$. Since $f_i\leq f/2$ and $n_i>3f_i$, for $i \in \{0,1\}$, the induction hypothesis gives us algorithms ${\vec A}_i(n_i,f_i,c_i)$. Now by applying \theoremref{thm:twoblocks} we get an algorithm $\vec B$
with 
\begin{align*}
T(\vec B) &= \max\{ T(\vec A_0), T(\vec A_1) \} + O(f) \\
          &\leq 1+\frac{\alpha f}{2}\sum_{k=0}^{\lceil \log f/2 \rceil}\left(\frac{1}{2}\right)^k+O(f) \\
          &=1+\alpha f \sum_{k=1}^{\lceil \log f \rceil}\left(\frac{1}{2}\right)^k+O(f) \\ 
          &\leq 1+\alpha f \sum_{k=0}^{\lceil \log f \rceil}\left(\frac{1}{2}\right)^k,
\end{align*}
where in the second to last step we use that $\alpha$ is a sufficiently large constant. Since the sum is at most 2, we get that $T(\vec B) = O(f)$. Moreover, the state complexity is bounded by
\begin{align*}
S({\vec B}) &= \max\{ S(\vec A_0), S(\vec A_1) \} + O(\log f + \log c) \\
            &\leq \beta \left(\log^2 \frac{f}{2} + \log \frac{f}{2}\right) + O(\log f + \log c) \\
            &\leq \beta \left(\log^2 f +\log c\right),
\end{align*}
where we exploit that $\beta$ is a sufficiently large constant. Hence, $S(\vec B) = O(\log^2 f + \log c)$, the induction step succeeds, and the proof is complete.
\end{proof}

\section{Reducing the Number of Bits Communicated}\label{sec:silence}

In this section, we discuss how to reduce the number of bits broadcast by a node \emph{after} stabilisation. We consider the following extension of the model of computation: instead of a node always broadcasting its current state, we allow it to broadcast an arbitrary message (including an empty message) each round. Formally, this entails that we extend the definition of an algorithm by (1) introducing a new function $\mu \colon [n] \times X \to \mathcal{M}$ that maps the current state $x$ to a message $\mu(x)$ which is broadcast and (2) modify the state transition function to map the old internal state and the vector of received messages to a new state, that is, the new state transition function has the form $g' : [n] \times X \times \mathcal{M}^n \to X$.

First, we show how to construct counters that only send $O(1 + B \log B)$ bits every $\kappa$ rounds, where $B = O(\log c / \log \kappa)$, while increasing the stabilisation time only by an additive $O(\kappa)$ term, where $\kappa = \Omega(f)$ is a parameter. In particular, we show that for polynomial-sized counters with optimal resilience, the algorithm only needs to communicate an asymptotically optimal number of bits after stabilisation:

\SILENCE*

We start by outlining the high-level idea of the approach,
then give a detailed description of the construction we use, and finally prove 
the main results of this section.

\subsection{High-Level Idea}

The techniques we use are very similar to the ones we
used for deriving \theoremref{thm:main-theorem}. 
Essentially, we devise a ``silencing wrapper'' for algorithms given by \theoremref{thm:main-theorem}. Let $\vec A$ be such a counting algorithm.
The high-level idea and the key ingredients are the following:
\begin{itemize}
  \item The goal is that nodes eventually become \emph{happy}: they assume 
  stabilisation has occured and check for counter consistency only every $\kappa$ rounds (as
  self-stabilising algorithms always need to verify their output).
  \item Happy nodes do not execute the underlying algorithm ${\vec A}$.
  \item Using a cooldown counter with similar effects as shown in
  \lemmaref{lemma:voting_general}, we enforce that all happy nodes output consistent counters.
  \item We override the phase king instruction of ${\vec A}$ if at least
  $n-2f\geq f+1$ nodes claim to be happy and propose a counter value $x$. In that case nodes adjust their counter output to match $x$. If there is
  no strong majority of happy nodes supporting a counter value, either all nodes   become unhappy or all correct nodes reach agreement and start counting correctly.
  \item If all correct nodes are unhappy, they execute ${\vec A}$ ``as is'' reaching agreement eventually.
  \item The counters are used to make all nodes concurrently switch
  their state to being happy, in a
  way that does not interfere with the above stabilisation process.
\end{itemize}

We will show that happy nodes can communicate their counter values
very efficiently in a manner that self-stabilises within $\kappa$ rounds. As
their counter increases by $1$ modulo $c$ every round (or they become unhappy),
they can use multiple rounds to encode a counter value; the recipient simply
counts locally in the meantime.

\subsection{The Silencing Wrapper}

Let $\vec A \in \mathcal{A}(n,f,c)$ be an algorithm given by
\theoremref{thm:main-theorem} and let $c = j\kappa$ for any $j > 0$ and
$\kappa>T(\vec A)$. We use the short-hand $T=T(\vec A)$ throughout this section. Let $a(v,r)$ be the
output of the synchronous counting algorithm for node $v$ in round $r$. Recall that by a
\emph{strong majority} we mean that at least $n-f$ received messages support a value. We now
modify $\vec A$ so that it meets the additional requirement of little
communication after stabilisation.

We introduce two new variables: a cooldown counter $t(v,r) \in [T+1]$ and a
``happiness'' indicator $h(v,r) \in \{0,1\}$. These are updated according to the
following rules in every round $r>0$:
\begin{enumerate}
 \item Set $t(v,r) = T$ if there was no strong majority of nodes $w$ with
 $a(w,r-1)=a(v,r-1)$ or $a(v,r)\neq a(v,r-1)+1\bmod c$.
 Otherwise, decrement the counter, that is, $t(v,r) = \max\{ 0, t(v,r-1) - 1\}$.
 \item Set $h(v,r) = 0$ if $h(v,r-1)=1$, but there was no strong majority of
 nodes $w$ with $h(w,r-1)=1$ and $a(w,r-1)=a(v,r-1)$, or if $t(v,r) > 0$.
 Set $h(v,r)=1$ if $t(v,r-1) = 0$ and $a(v,r-1) = 0 \bmod \kappa$. Otherwise,
 $h(v,r)=h(v,r-1)$.
 \item If $h(v,r) = 0$, execute a single step of $\vec A$ \emph{except} for the phase king instructions given in \tableref{tab:instr}. The counter value $a(v,r+1)$ is updated according to the next rule. 
 \item If received $n-2f$ times a value $a(w,r)=x$ from nodes with
 $h(w,r)=1$, set $a(v,r+1) = x + 1 \bmod c$; if there are two such values $x$,
 it does not matter which is chosen. Otherwise, execute \emph{only} the phase king
 instructions of $\vec A$ given in \tableref{tab:instr} as indicated by the once-in-a-while round counter $d(v,r)$ as usual; in particular, this determines $a(v,r+1)$.
\end{enumerate}	

In the following, we say that a node $v \in V \setminus \mathcal{F}$ with value
$h(v,r)=1$ is \emph{happy} in round $r$ and \emph{unhappy} if $h(v,r)=0$.
Moreover, the counters \emph{converge} in round $r$ if for all $v,w \in V
\setminus \mathcal{F}$, it holds that $a(v,r) = a(w,r)$. The idea is to show
that not only do the counters converge (and then count correctly), but also all
correct nodes become happy. As a happy node that remains happy simply increases
its counter value by $1$ modulo $c$, there is no need to explicitly communicate
this except for verification purposes. It is straightforward to exploit this to
ensure that the algorithm communicates very little (explicitly) once all nodes
are happy; we will discuss this after showing stabilisation of the routine.

\subsection{Proof of Stabilisation}

Let us first establish that if the counters converge, they will keep counting
correctly and correct nodes will become happy within $O(\kappa + T)$ additional
rounds for any parameter $\kappa > T$.

\begin{lemma}\label{lemma:persistence}
If the counters converge in round $r$, then $a(u,r') = a(v,r') =
a(u,r)+(r-r')\bmod c$ for all $u,v\in V\setminus {\cal F}$ and $r' \ge r$.
\end{lemma}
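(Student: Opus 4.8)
The plan is to argue by induction on $r' \ge r$ that all correct nodes share the same counter value $a(u,r') = a(v,r')$ and that this common value advances by exactly $1$ modulo $c$ each round. The base case $r' = r$ is precisely the hypothesis that the counters converge in round $r$. For the inductive step, I would assume that in round $r'$ every correct node holds the common value $x = a(u,r) + (r'-r) \bmod c$, and show that in round $r'+1$ every correct node holds $x + 1 \bmod c$. The key observation is that since at least $n-f$ nodes are correct and all hold value $x$, every correct node receives a strong majority supporting $x$, and at most $f < n-2f$ nodes (the faulty ones) can support any other single value.

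The main work is a case analysis over which branch of update rules 3 and 4 a correct node $v$ executes in round $r'+1$. If $v$ receives $n-2f$ messages $a(w,r') = y$ from \emph{happy} nodes $w$, then necessarily $y = x$: a happy non-faulty node broadcasts its true counter value $x$, and there are at most $f < n-2f$ faulty nodes, so no value other than $x$ can be supported $n-2f$ times by happy senders. Hence $v$ sets $a(v,r'+1) = x + 1 \bmod c$. Otherwise $v$ executes the phase king instructions from \tableref{tab:instr} according to $d(v,r')$, and here I would invoke exactly the reasoning of \lemmaref{lemma:maintain_agreement}: once all correct nodes agree on a value in $[c]$, every phase king instruction set ($I_{3k}$, $I_{3k+1}$, $I_{3k+2}$) preserves agreement and increments by $1$, because each correct node sees $\ge n-f$ copies of $x$ and $\le f$ copies of anything else. (One must also check that $x \neq \infty$ is maintained — it is, since $x \in [c]$ and a strong majority for $x$ prevents the reset in instruction 0a, while $x \oplus 1 = x+1 \bmod c$ stays in $[c]$.) The happiness variable $h$ and the cooldown counter $t$ do not affect this argument: rule 3 only \emph{disables} the phase king instructions when $h(v,r'+1) = 1$, but in that case rule 4's fallback is moot because we are in the first branch of rule 4, and if $h(v,r'+1)=0$ the phase king step runs as analysed.

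The only subtlety — and the place I expect to spend the most care — is making sure the two branches of rule 4 cannot produce \emph{inconsistent} values across different correct nodes in the same round: some correct node might take the happy-majority branch while another takes the phase king branch. But both branches yield $x+1 \bmod c$: the first by the argument above, the second by \lemmaref{lemma:maintain_agreement}. So consistency is preserved regardless of which branch each node picks, and the induction goes through, giving $a(u,r') = a(v,r') = a(u,r) + (r-r') \bmod c$ for all $r' \ge r$ as claimed. (Note the sign: $a(u,r') = a(u,r) + (r'-r) \bmod c$, which matches the statement's $a(u,r)+(r-r') \bmod c$ since addition is modulo $c$.)
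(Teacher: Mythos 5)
Your proof takes essentially the same route as the paper's: both observe that $a$ is always updated by Rule~4, that the happy-majority branch forces the common value since $n-2f>f$, and that the phase-king branch preserves agreement by the reasoning of \lemmaref{lemma:maintain_agreement}. The paper compresses this into three sentences, whereas you spell out the induction, the $\infty$-preservation, and the cross-branch consistency, but the underlying argument is the same.
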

\begin{proof}
Since the counters have converged, there is a strong majority of nodes supporting the same value. Hence, variable $a(u,r')$ is updated according to Rule 4. As all counter values from
correct nodes are identical, it does not matter whether these nodes are happy or
not; either way, the counters are increased by $1$ modulo $c$
(cf.~\lemmaref{lemma:maintain_agreement}).
\end{proof}

\begin{lemma}\label{lemma:happiness}
If the counters converge in round $r$, then for all rounds $r'\geq r+T+\kappa$
and all nodes $v\in V\setminus{\cal F}$ we have $h(v,r')=1$.
\end{lemma}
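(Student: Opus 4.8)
The plan is to show that once the counters converge in round $r$, within $O(T+\kappa)$ additional rounds every correct node is happy and stays happy. The argument proceeds in two stages: first, the cooldown counter $t(v,\cdot)$ drains to $0$ and stays there; second, once $t$ is zero at all correct nodes, the first time the (now globally consistent) counter value hits $0 \bmod \kappa$, Rule~2 forces $h(v,r)=1$ simultaneously at all correct nodes, and then Rule~2's maintenance clause keeps them happy.

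First I would invoke \lemmaref{lemma:persistence}: after convergence in round $r$, the counter values at all correct nodes are identical in every round $r' \ge r$ and increase by $1$ modulo $c$ each round. Consequently, in each round $r' > r$ there is a strong majority (all $n-f$ correct nodes) supporting $a(v,r'-1) = a(v,r'-1)$ and $a(v,r') = a(v,r'-1)+1 \bmod c$, so the ``reset'' condition in Rule~1 never triggers for $r' > r$. Hence $t(v,r') = \max\{0, t(v,r'-1)-1\}$ for all $r' > r$, and since $t(v,\cdot) \in [T+1]$, we get $t(v,r') = 0$ for all $r' \ge r+T$ and all correct $v$. In particular, for $r' \ge r+T$ the clause ``or if $t(v,r) > 0$'' in Rule~2 that would force unhappiness is never active.

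Next I would argue that a correct node becomes happy. Consider the interval of $\kappa$ consecutive rounds starting at round $r+T$. Since the (common) counter value increases by $1$ modulo $c$ each round and $\kappa \mid c$, there is exactly one round $r^\star \in \{r+T, \ldots, r+T+\kappa-1\}$ with $a(v,r^\star-1) \equiv 0 \bmod \kappa$ — wait, I need $a(v,\cdot) = 0 \bmod \kappa$ at the round \emph{before} the setting; in any case some round $r^\star \le r+T+\kappa$ satisfies $t(v,r^\star-1)=0$ and $a(v,r^\star-1) = 0 \bmod \kappa$ simultaneously at \emph{all} correct nodes (the counters agree), so by Rule~2 every correct node sets $h(v,r^\star)=1$. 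Thus all correct nodes become happy in the same round $r^\star \le r+T+\kappa$.

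Finally, I would show happiness persists: once $h(v,r^\star)=1$ at all correct nodes and their $a$-values agree, in round $r^\star+1$ each correct node $v$ sees a strong majority of nodes $w$ (namely all $n-f$ correct ones) with $h(w,r^\star)=1$ and $a(w,r^\star) = a(v,r^\star)$, and by the previous paragraph $t(v,r^\star+1)=0$; so neither clause of Rule~2 that clears happiness applies, and $h(v,r^\star+1)=h(v,r^\star)=1$. Inductively this holds for all subsequent rounds, so $h(v,r')=1$ for all $r' \ge r^\star$, and since $r^\star \le r+T+\kappa$ the claim follows. The only mildly delicate point is lining up the ``$a = 0 \bmod \kappa$'' trigger with ``$t = 0$'': one must be careful that the round $r^\star$ is chosen after $t$ has fully drained (i.e.\ within a $\kappa$-window starting at or after $r+T$), which is exactly why the bound $r+T+\kappa$ appears; everything else is a routine unwinding of the update rules using the global agreement guaranteed by \lemmaref{lemma:persistence}.
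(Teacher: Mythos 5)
Your proof is correct and follows the same route as the paper: invoke \lemmaref{lemma:persistence} to show the cooldown counters $t(v,\cdot)$ drain to zero by round $r+T$ and stay there, identify the round within the next $\kappa$ steps where the globally agreed counter hits $0 \bmod \kappa$ so Rule~2 sets $h(v,\cdot)=1$ at all correct nodes simultaneously, then do a routine induction using the strong majority of happy agreeing nodes to show happiness persists. The brief mid-proof confusion about which round carries the trigger condition is resolved correctly, so nothing is missing.
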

\begin{proof}
By \lemmaref{lemma:persistence}, the agreement on output values will persist once reached. Hence, at all nodes
$v\in V\setminus{\cal F}$ we have $t(v,r')=0$ in all rounds $r'\geq r+T$ by Rule 1. Therefore, there
is a round $r'\leq r+T+\kappa$ so that $t(v,r')=0$ and $a(v,r')=0 \bmod \kappa$ at all such
$v$. Consequently, all correct nodes jointly set $h(v,r'+1)=1$. By induction on
the round number, we see that no such node sets $h(v,r'')=0$ for $r''>r'+1$, as
there is always a strong majority of $n-f$ happy and correct nodes supporting
the (joint) counter value.
\end{proof}

We now proceed to show that the counters converge within $O(\kappa + T)$ rounds.
The first step is to observe that if no correct node is happy, then algorithm ${\vec A}$ is run without modification, and hence, the counters converge in $T$ rounds.

\begin{lemma}\label{lemma:2}
Let $r \ge T$. If for all $v \in V \setminus \mathcal{F}$ and $r' \in \{r-T+1, \ldots, r\}$, we
have $h(v,r') = 0$, then the counters converge in round $r+1$.
\end{lemma}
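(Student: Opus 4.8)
The plan is to observe that when no correct node is happy during the window $\{r-T+1,\ldots,r\}$, Rules~3 and~4 make every correct node behave exactly as in the original algorithm $\vec A$, so the stabilisation guarantee of $\vec A$ applies directly. First I would note that since $h(v,r')=0$ for every correct $v$ and every $r'$ in the window, no correct node ever satisfies the premise of Rule~4 that would let happy nodes override the phase king step: Rule~4 only triggers when $n-2f$ received values carry $h(w,r)=1$, and with at most $f$ faulty nodes there are at most $f < n-2f$ such messages (using $f<n/3$). Hence in each of these rounds every correct node falls through to the ``otherwise'' clause of Rule~4, executing precisely the phase king instruction of $\vec A$ dictated by $d(v,r)$, while Rule~3 ensures the remaining (non--phase-king) part of one step of $\vec A$ is also executed. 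Thus correct nodes run $\vec A$ verbatim for $T$ consecutive rounds.

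Next I would invoke the correctness of $\vec A$ from \theoremref{thm:main-theorem}: $\vec A \in \mathcal{A}(n,f,c)$ stabilises within $T=T(\vec A)$ rounds from \emph{any} starting configuration. Taking the configuration at the start of round $r-T+1$ as the (arbitrary) initial configuration of this simulated run of $\vec A$, we conclude that after $T$ rounds---i.e.\ by round $r+1$---the output counters $a(\cdot,\cdot)$ of all correct nodes agree, which is exactly the statement that the counters converge in round $r+1$.

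The one point needing a little care is the bookkeeping of indices: the ``state'' of the simulated $\vec A$ at each correct node consists of all of $\vec A$'s internal variables (including the once-in-a-while round counter $d$ and the phase king auxiliary $b$), and one must check that the silencing wrapper neither reads nor writes these in a way that diverges from $\vec A$ during an unhappy stretch---Rules~1 and~2 only touch the fresh variables $t$ and $h$, and Rule~3 explicitly runs ``a single step of $\vec A$ except for the phase king instructions,'' with Rule~4's fallback supplying exactly those omitted instructions. So the composition of Rules~3 and~4 is a faithful step of $\vec A$. I expect this index/variable-tracking to be the only mild obstacle; the core argument is simply ``unmodified $\vec A$ runs for $T$ rounds, hence it stabilises.''
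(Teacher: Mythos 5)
Your proof is correct and follows the same route as the paper's: with no correct node happy during the window, the $n-2f$ threshold in Rule~4 cannot be met (since at most $f < n-2f$ faulty nodes can claim $h=1$ under $f<n/3$), so Rules~3 and~4 jointly execute a faithful step of $\vec A$ in each round, and the self-stabilisation guarantee $T(\vec A)=T$ applied to the arbitrary configuration at round $r-T+1$ gives convergence by round $r+1$. If anything, you are a touch more careful with the round indexing than the paper's own one-line conclusion, and your explicit check that the silencing wrapper's extra variables $t,h$ do not perturb $\vec A$'s state is a worthwhile addition.
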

\begin{proof}
Since $h(v,r')=0$, each node $v$ applies Rule 3 in any such round $r'$. As there
are no happy nodes in round $r'$, a node can never receive the same counter value
from more than $f$ nodes that (claim to be) happy. Hence, Rule 4 boils down to
just updating $a(v,r')$ according to the rules of~$\vec A$. As $T = T(\vec
A)$, algorithm $\vec A$ stabilises and thus $a(v,r)=a(w,r)$ for all $v,w\in V\setminus {\cal F}$.
\end{proof}

To deal with the case that some nodes may be happy (which entails that not all
nodes may execute ${\vec A}$ correctly, destroying its guarantees), we argue
that ongoing happiness also implies that the counters converge. To this end, we
first show that the cooldown counters $t(v,r)$ ensure that correct nodes whose
counters are $0$ count correctly and agree on their counter values. This is
shown analogously to \lemmaref{lemma:voting_general}.

\begin{lemma}\label{lemma:3}
Let $r > T$ and $v,w \in V \setminus \mathcal{F}$. If $t(v,r)=t(w,r')=0$ for $r'\in
\{r-T+1,\ldots,r\}$, then $a(v,r)=a(w,r')+r-r' \bmod c$.
\end{lemma}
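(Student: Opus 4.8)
\textbf{Proof plan for Lemma~\ref{lemma:3}.}
The plan is to mirror the argument of \lemmaref{lemma:voting_general}, but now applied to the output variable $a$ together with the cooldown counter $t$ that plays the role of $w_i$. Fix $r > T$ and correct nodes $v, w$ with $t(v,r) = 0$ and $t(w, r') = 0$ for some $r' \in \{r - T + 1, \ldots, r\}$. First I would handle the case $r' = r$: since both $t(v,r) = 0$ and $t(w,r) = 0$, neither node reset its cooldown counter in round $r$, so by Rule~1 both saw a strong majority supporting their own previous value $a(v, r-1)$ resp.\ $a(w, r-1)$. By \lemmaref{lemma:majority} applied to these majority votes (all $n$ nodes participate, and $f < n/3$), the two majority values must coincide, hence $a(v, r-1) = a(w, r-1)$, and since both counters were then incremented by $1$ modulo $c$ we get $a(v,r) = a(w,r)$, matching the claim with $r - r' = 0$.

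For the general case $r' < r$, the key observation is that $t(v,r) = 0$ together with $r - r' \le T - 1$ forces $t(v, r'') < T$ for all $r'' \in \{r', \ldots, r\}$: the cooldown counter decreases by at most one per round, so if it had been reset to $T$ at any round in this window it could not have reached $0$ by round $r$. Consequently, by Rule~1, for each such $r''$ node $v$ observed a strong majority supporting $a(v, r''-1)$ and did not find $a(v,r'') \neq a(v,r''-1)+1 \bmod c$; so the $a(v,\cdot)$ values increment consistently by $1$ modulo $c$ throughout $\{r'-1, \ldots, r\}$, giving $a(v,r) = a(v,r') + (r - r') \bmod c$. It then remains to link $a(v,r')$ to $a(w,r')$. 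Since $t(w,r') = 0$, node $w$ saw a strong majority supporting $a(w,r'-1)$ in round $r'$; since $t(v,r')<T$ as well, node $v$ also saw a strong majority in round $r'$ (supporting $a(v,r'-1)$), so \lemmaref{lemma:majority} again yields $a(v,r'-1) = a(w,r'-1)$, and after the joint increment $a(v,r') = a(w,r')$. Combining, $a(v,r) = a(w,r') + (r-r') \bmod c$.

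The one point that needs a little care is showing that $t(v,r')<T$ really does imply $v$ observed a strong majority in round $r'$; this is because $t(v,r') < T$ means Rule~1 did \emph{not} trigger the reset branch in round $r'$, which by the contrapositive means there \emph{was} a strong majority of nodes $w$ with $a(w,r'-1) = a(v,r'-1)$ \emph{and} $a(v,r') = a(v,r'-1)+1 \bmod c$. So the bookkeeping on the cooldown counter does double duty: it certifies both that a majority was seen and that the increment was clean, in every round of the window. I expect the main (minor) obstacle to be stating the cooldown-window argument cleanly enough that the ``$t$ cannot have been reset'' step and the ``therefore increments are consistent'' step are not conflated; everything else is a direct reprise of \lemmaref{lemma:voting_general} with $(a, t, T)$ in place of $(M_i, w_i, 2c_1)$.
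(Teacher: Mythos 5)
Your proof is correct and follows essentially the same route as the paper's: you establish $t(v,r'') < T$ on the whole window $\{r',\ldots,r\}$ from $t(v,r)=0$ and $r-r'\le T-1$, use this to certify both that $v$ saw a strong majority and that $a(v,\cdot)$ incremented cleanly in each of those rounds, and invoke \lemmaref{lemma:majority} to link $a(v,r'-1)$ to $a(w,r'-1)$. The paper condenses this into the single inequality $t(v,r')\le r-r'<T$ before unrolling the chain, but the underlying argument is identical.
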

\begin{proof}
Since $t(v,r)=0$, by Rule 1 it holds that $t(v,r') \le r-r' < T$. Hence, both $v$ and $w$ saw a strong majority
of nodes $u$ with $a(u,r'-1)=a(v,r'-1)$ and $a(u,r'-1)=a(w,r'-1)$, respectively.
By \lemmaref{lemma:majority}, it follows that $a(v,r'-1)=a(w,r'-1)$. Likewise,
$t(v,r'')\neq T$ for rounds $r'<r''\leq r$, implying that
$a(v,r)=a(v,r')+r-r'\bmod c$, and $a(w,r')=a(w,r'-1)+1\bmod
c=a(v,r')$.
\end{proof}

Except for the initial rounds, the above lemma implies that happy nodes always have
the same counter value: by Rule 2, a node $v$ with $h(v,r) = 1$ must have $t(v,r) = 0$. A node remaining happy thus entails that \emph{every} node 
receives the same counter value from at least $n-2f \ge f+1$ happy nodes, and no
other counter value with the same property may be perceived. In other words, a
node staying happy implies that the counters converge.

\begin{lemma}\label{lemma:1}
If $h(v,r-1)=h(v,r)=1$ for some $v\in V\setminus{\cal F}$ and $r>3$, then the
counters converge in round $r+1$.
\end{lemma}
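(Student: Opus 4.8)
The plan is to turn the hypothesis ``$v$ is happy in two consecutive rounds'' into a globally consistent pool of happy nodes, and then use Rule~4 to force every correct node's counter value in round $r+1$. Set $x = a(v,r-1)$. Since $h(v,r-1)=1$ and $h(v,r)=1$, the ``set to $0$'' clause of Rule~2 did not fire when $h(v,r)$ was computed; as $h(v,r-1)=1$, this means that $v$ received a strong majority of messages reporting both $h=1$ and counter value $x$. At most $f$ of these messages come from faulty nodes, so there is a set $U\subseteq V\setminus\mathcal F$ with $|U|\geq n-2f\geq f+1$ such that $h(u,r-1)=1$ and $a(u,r-1)=x$ for every $u\in U$; since correct nodes broadcast their true state, \emph{every} correct node receives these $|U|\geq n-2f$ reports of ``$h=1$, value $x$''.

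Next I would show that every correct happy node --- not only the ones in $U$ --- holds counter value $x$ in round $r-1$. Let $u'\in V\setminus\mathcal F$ with $h(u',r-1)=1$. By Rule~2, happiness forces the cooldown counter $t(u',r-1)=0$; a reset sets $t$ to $T=T(\vec A)\geq 1$, so $t(u',r-1)=0$ certifies that no reset occurred at round $r-1$, and hence by Rule~1 the node $u'$ saw a strong majority for the value $a(u',r-2)$ in the round-$(r-1)$ majority vote and has $a(u',r-1)=a(u',r-2)+1\bmod c$. The same applies to any fixed $u\in U$, which moreover satisfies $a(u,r-2)=x-1\bmod c$. Since $u$ and $u'$ both observed a strong majority in that vote, \lemmaref{lemma:majority} yields $a(u',r-2)=a(u,r-2)=x-1\bmod c$, whence $a(u',r-1)=x$. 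This is exactly the synchronising effect of cooldown counters already used in \lemmaref{lemma:voting_general} and \lemmaref{lemma:3}.

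To finish, fix any correct node $w$. It receives the $|U|\geq n-2f$ reports of ``$h=1$, value $x$'', so the precondition of Rule~4 holds at $w$ for the value $x$. If it also held for some $x'\neq x$, then at least $n-3f\geq 1$ of the corresponding happy reporters would be correct and would genuinely have counter value $x'$ in round $r-1$, contradicting the previous paragraph. Hence Rule~4 unambiguously sets the round-$(r+1)$ counter value of $w$ to $x+1\bmod c$; as $w$ was arbitrary, the counters converge in round $r+1$.

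I expect the middle step --- showing that a single node's continued happiness pins down the counter values of \emph{all} correct happy nodes --- to be the main obstacle; the rest is threshold bookkeeping. The hypothesis $r>3$ is needed only so that rounds $r-1$ and $r-2$ are genuine computation rounds, and the $h$- and $t$-values referenced there are constrained by the update rules rather than being arbitrary initial values; $f<n/3$ is used for $n-2f\geq f+1$, for $n-3f\geq 1$, and inside \lemmaref{lemma:majority}.
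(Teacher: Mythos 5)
Your proof is correct and follows essentially the same route as the paper's: use Rule~2 plus the cooldown constraint to pin down the counter values of \emph{all} correct happy nodes at round $r-1$, note that $v$'s continued happiness certifies a pool of at least $n-2f$ correct happy nodes reporting the common value, and then let Rule~4 propagate that value to every correct node. Where the paper cites \lemmaref{lemma:3} (the ``$t=0$ implies agreement'' lemma) you reprove the one-step version of it inline; this is harmless and actually sidesteps the $r>T$ hypothesis of \lemmaref{lemma:3} that the paper's invocation technically relies on. You also add a short uniqueness argument (no second value $x'\neq x$ can clear the $n-2f$ threshold, because $n-3f\geq 1$ of its happy reporters would be correct and genuinely hold $x'$, contradicting the agreement just established); the paper's phrasing of Rule~4 explicitly allows for ties, so making this uniqueness explicit is a genuine, if small, improvement in rigor over the terse proof in the paper.
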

\begin{proof}
By Rule 2, any node $w$ with $h(v,r)=1$ satisfies $t(w,r)=0$. We apply
\lemmaref{lemma:3} to see that, for any $w\in V\setminus {\cal F}$ that is happy
in round $r-1$, we have that $a(v,r-1)=a(w,r-1)$. As $h(v,r)=h(v,r-1)=1$, node~$v$
observed a strong majority of happy nodes $w$ with $a(v,r-1)=a(w,r-1)$ in round
$r-1$, implying that all nodes received this counter value from at least
$n-2f\geq f+1$ happy nodes. Together with Rule 4, these observations imply that
$a(u,r)=a(v,r-1)+1\bmod c$ for all $u\in V\setminus{\cal F}$.
\end{proof}

Using these lemmas and the fact that nodes may become happy only after counting
consistently for sufficiently long \emph{and} when their counters are $0$ modulo
$\kappa>T$, we can show that the counters converge in all cases.

\begin{lemma}\label{lemma:convergence}
Within $O(\kappa)$ rounds, the counters converge.
\end{lemma}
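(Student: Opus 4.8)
The plan is a proof by contradiction: assume the counters fail to converge in every round $r\in\{1,\dots,R\}$, where $R=\Theta(\kappa)$ is fixed at the end, and derive a contradiction by pointing to a round in which \lemmaref{lemma:1} or \lemmaref{lemma:2} would force convergence. First I record the easy consequences of this assumption. By \lemmaref{lemma:1}, no correct node is happy in two consecutive rounds $r-1,r$ with $3<r\le R$; hence every correct node that is happy in a round $\rho\in\{5,\dots,R\}$ must have \emph{become} happy in round $\rho$, which by Rule~2 requires $t(v,\rho-1)=0$ and $a(v,\rho-1)\equiv 0\bmod\kappa$. So, above round $4$, the rounds in which some correct node is happy and the rounds in which some correct node becomes happy are the same; call these the \emph{becoming-happy} rounds. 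By \lemmaref{lemma:2}, there is no window of $T$ consecutive all-unhappy rounds ending at a round $\le R-1$; equivalently, for $R$ large every sufficiently late window of $T$ consecutive rounds contains a becoming-happy round, so consecutive becoming-happy rounds (in the relevant range) are at most $T$ apart.

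The crux is the opposite estimate: \emph{consecutive becoming-happy rounds past round $T$ are at least $T+1$ apart.} Suppose correct nodes $v,w$ become happy in rounds $\rho<\rho'$ with $\rho'-\rho\le T$ and $\rho'>T$. Since $h(w,\rho'-1)=0$ while $w$ becomes happy in round $\rho'$, Rule~2 forces $t(w,\rho'-1)=0$; as $\rho'-1\ge T$, the cooldown rule (Rule~1) then implies that $w$'s counter is not reset in any round of $\{\rho'-T,\dots,\rho'-1\}$, an interval that contains $\rho$. Hence $w$ observes a strong majority for the value $a(w,\rho-1)$ in round $\rho$, and telescoping the forced increments over $\{\rho,\dots,\rho'-1\}$ gives $a(w,\rho'-1)=a(w,\rho-1)+(\rho'-\rho)\bmod c$. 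Likewise $h(v,\rho)=1$ forces $t(v,\rho)=0$, so $v$'s counter is not reset in round $\rho$ and $v$ observes a strong majority for $a(v,\rho-1)$ in round $\rho$. By \lemmaref{lemma:majority} these majority values agree, so $a(v,\rho-1)=a(w,\rho-1)$. Reducing the telescoped identity modulo $\kappa$ and using $\kappa\mid c$, $a(w,\rho'-1)\equiv 0$, and $a(v,\rho-1)\equiv 0\bmod\kappa$, we get $\rho'-\rho\equiv 0\bmod\kappa$, which is impossible since $0<\rho'-\rho\le T<\kappa$.

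Fixing $R=\Theta(\kappa)$ large enough (legitimate since $\kappa>T$), these two facts collide. In the range of late rounds to which both apply, having two or more becoming-happy rounds contradicts the spacing bound, since two consecutive ones would be both $\le T$ and $\ge T+1$ apart; and having at most one becoming-happy round means the rest of a long interval consists of all-unhappy rounds, so it contains $T$ consecutive all-unhappy rounds, whence \lemmaref{lemma:2} forces convergence at a round $\le R$. Either branch contradicts the assumption, so the counters converge in some round $r\le R=O(\kappa)$.

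I expect the crux of the second paragraph to be the only real obstacle. It forces one to combine two ingredients that the construction was set up to provide: the cooldown counter, which lets a node become happy only after its counter has counted correctly for the last $T$ rounds and is currently $0\bmod\kappa$, and the threshold vote of \lemmaref{lemma:majority}, which makes any two correct nodes that both see a strong majority in the same round see the \emph{same} value. Together they pin two correct nodes that become happy within $T$ rounds of each other to agreeing counter values at the overlap round, and the rigid phase constraint ``$0\bmod\kappa$'' then forces the two becoming-happy rounds to coincide. The remaining steps are routine applications of \lemmaref{lemma:1} and \lemmaref{lemma:2}, plus minor care with the first $O(T)$ rounds, which is harmless because $\kappa>T$ lets $R$ absorb them.
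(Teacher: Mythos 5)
Your proof is correct and takes essentially the same approach as the paper, relying on the same three ingredients: \lemmaref{lemma:1} (persistent happiness implies convergence), \lemmaref{lemma:2} (a window of $T$ all-unhappy rounds implies convergence), and the cooldown-plus-majority argument---which the paper packages as \lemmaref{lemma:3} and you re-derive inline---showing that the phase constraint $a(\cdot)\equiv 0\bmod\kappa$ together with $\kappa>T$ forces consecutive becoming-happy rounds to be more than $T$ apart. The paper's version is marginally more direct, identifying the first becoming-happy round $r\le T+3$ and then applying \lemmaref{lemma:2} to the all-unhappy window $\{r+1,\dots,r+T\}$, whereas you package the same facts as two incompatible spacing bounds over a long interval; the mechanism is identical.
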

\begin{proof}
Either all $v\in V\setminus {\cal F}$ with $h(v,3)=1$ set $h(v,4)=0$ or
\lemmaref{lemma:1} shows the claim. If there are no nodes $v$ with $h(v,r)=1$
for $r\in\{4,\ldots,T+3\}$, then \lemmaref{lemma:2} shows the claim. Hence,
assume that there is some node $v$ with $h(v,r)=1\neq h(v,r-1)$ for some minimal
$r\in \{4,\ldots,T+3\}$. Again, either $h(v,r+1)=0$ for all such nodes or we can
apply \lemmaref{lemma:1}; thus assume the former in the following.

Suppose for contradiction that there is a node $w$ with $h(w,r')=1$ for
a minimal $r'\in \{r+1,\ldots,r+T$\}. As $r'$ is minimal and all nodes with
$h(v,r)=1$ have $h(v,r+1)=0$, it must hold that $h(w,r'-1)=0$. Hence,
$t(w,r'-1)=0=t(v,r-1)$. By \lemmaref{lemma:3}, this implies that
$a(w,r'-1)=a(v,r-1)+r-r' \bmod c$. However, $\kappa>T$, $0<r-r'\leq T$, and
$a(v,r-1)=0\bmod \kappa$, implying that $a(w,r'-1)\neq 0\bmod \kappa$, which
(by Rule 2) is a contradiction to $h(w,r')=1\neq h(w,r'-1)$.

We conclude that $h(v,r')=0$ for all $v$ and $r'\in \{r+1,\ldots,r+T\}$. The
claim follows by applying \lemmaref{lemma:2}.
\end{proof}

We now can conclude that within $O(\kappa)$ rounds, the algorithm stabilises in the sense that all nodes become happy and count correctly and consistently.

\begin{corollary}\label{coro:stabilisation}
There exists a round $R = O(\kappa)$ such that for all $v \in V \setminus \mathcal{F}$ and $r \ge R$, it holds that $h(v,r)=1$, and $a(v,r) = a(v,r-1) + 1 \bmod c$, and $a(v,r) = a(w,r)$ for all $w \in V \setminus \mathcal{F}$.
\end{corollary}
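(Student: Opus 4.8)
The plan is to simply assemble the three preceding lemmas, which between them already contain all the work. First I would apply \lemmaref{lemma:convergence} to obtain a round $r_0 = O(\kappa)$ in which the counters converge, i.e.\ $a(v,r_0) = a(w,r_0)$ for all $v,w \in V \setminus \mathcal{F}$. This is the step that does the heavy lifting (it is the part of the argument that splits into the happy/unhappy case analysis and feeds off \lemmaref{lemma:1}, \lemmaref{lemma:2}, and \lemmaref{lemma:3}), but since it is already established I can treat it as a black box.

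Next I would feed $r_0$ into \lemmaref{lemma:persistence} to conclude that $a(u,r') = a(v,r') = a(u,r_0) + (r'-r_0) \bmod c$ for all $u,v \in V \setminus \mathcal{F}$ and all $r' \ge r_0$. This immediately gives two of the three required properties for every $r \ge r_0 + 1$: consistency $a(v,r) = a(w,r)$ across correct nodes, and the increment property $a(v,r) = a(v,r-1) + 1 \bmod c$. Then I would apply \lemmaref{lemma:happiness} with the same convergence round $r_0$ to obtain $h(v,r') = 1$ for all correct $v$ and all $r' \ge r_0 + T + \kappa$; note that the proof of that lemma already contains the induction showing that happiness, once attained jointly, is never lost, so no extra argument is needed here.

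Finally I would set $R = r_0 + T + \kappa$. This is $O(\kappa)$ because $r_0 = O(\kappa)$ and, by the standing assumption $\kappa > T = T(\vec A)$ of the silencing wrapper, $T + \kappa < 2\kappa$. For every $r \ge R$ we have $r > r_0$, so the persistence bound applies to give the counting and agreement properties, and $r \ge r_0 + T + \kappa$, so \lemmaref{lemma:happiness} gives $h(v,r) = 1$. That establishes all three claims. The only point requiring any care is bookkeeping: making sure the additive terms $r_0$, $T$, and $\kappa$ all collapse to $O(\kappa)$, which is exactly where the wrapper's hypothesis $\kappa > T$ is used; I do not anticipate any genuine obstacle.
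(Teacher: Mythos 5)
Your proposal is correct and matches the paper's own proof essentially verbatim: invoke \lemmaref{lemma:convergence} for a convergence round $r_0 = O(\kappa)$, then apply \lemmaref{lemma:persistence} for the consistency and increment properties and \lemmaref{lemma:happiness} for happiness from round $r_0+T+\kappa = O(\kappa)$ onward. Your bookkeeping (using $\kappa > T$ to collapse the additive terms) is exactly the implicit step the paper leaves to the reader.
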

\begin{proof}
By \lemmaref{lemma:convergence} we get that there exists a round $r' = O(\kappa)$ in which the counters converge. Since $r' + T + \kappa = O(\kappa)$, happiness follows from \lemmaref{lemma:happiness} and agreement follows from \lemmaref{lemma:persistence}.
\end{proof}

\subsection{Reducing the Communication Complexity after Stabilisation}

As noted earlier, the counter variables for happy nodes count modulo $c$. Hence, it is
trivial to deduce the counter value of a happy node from its counter value in an
earlier round. Moreover, happy nodes do not execute algorithm ${\vec A}$. Therefore, we
can change the encoding of the happy nodes' counter values to reduce the communication
complexity after stabilisation.

\begin{corollary}\label{coro:communication}
Suppose happy nodes communicate their counter values by any method that stabilises
in $\kappa$ rounds, then the algorithm presented in this section retains its
properties, except that its stabilisation time increases by an additive
$\kappa$ rounds.
\end{corollary}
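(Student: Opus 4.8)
The plan is to reuse the stabilisation analysis of the silencing wrapper almost verbatim, treating the assumed communication method as an extra sub-protocol whose decoded output is unreliable for at most $\kappa$ rounds and faithful afterwards. Concretely I would keep Rules~1--4 unchanged but replace the explicit broadcast of a happy node's counter value (and its happiness bit) by the assumed method, reinterpreting every occurrence of ``the value $a(w,r)$ of a correct happy node $w$ as perceived by a peer'' as ``the value decoded for $w$'', with the convention that this value is $\bot$ until the method has stabilised. I would additionally assume the method is \emph{fail-safe}: a decoded counter value is either correct or $\bot$, never an incorrect element of $[c]$ (any scheme that transmits a counter value over several rounds and lets the recipient count locally in between has this property).

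The core of the argument is that two ingredients of the original proof survive the change. First, counter convergence is preserved and, once reached, persists: inspecting the proof of \lemmaref{lemma:persistence}, it uses only that each correct node sees a strong majority supporting the common value (guaranteed since the $\geq n-f$ correct nodes all hold it and the \emph{unhappy} ones still broadcast it explicitly) and that Rule~4's phase-king fallback also increments a unanimous counter (\lemmaref{lemma:maintain_agreement}); whether the happy nodes are readable is irrelevant, and by fail-safety Rule~4's happy-majority branch can never fire for a wrong value, so it is either not triggered or triggered with the correct increment. Likewise the convergence proof of \lemmaref{lemma:convergence} is driven by the cooldown variables $t(v,r)$ and by running ${\vec A}$ among the unhappy nodes, so it too only needs unhappy nodes to be readable; hence the counters still converge within $O(\kappa)$ rounds. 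Second, a node promotes itself to happy purely on the basis of $t(v,r-1)=0$ and $a(v,r-1)=0\bmod\kappa$, so re-running the proof of \lemmaref{lemma:happiness} yields a single common round $r^{*}=O(\kappa)$ in which \emph{all} correct nodes set $h=1$.

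The delicate point, and the one I expect to be the main obstacle, is \emph{maintaining} happiness during the $\kappa$ rounds after $r^{*}$, while the method has not yet stabilised: a correct happy node may be decoded as $\bot$ by some peers, so a node could miss the strong majority of $n-f$ happy agreeing nodes required by Rule~2, become unhappy, and trigger a cascade that resets the method again. I would break this circularity by adding a local grace counter (costing $O(\log\kappa)$ extra state) that is set to $\kappa$ on every transition to happy and decremented each round, and by weakening Rule~2 so that a node with a positive grace counter is not demoted for lack of a happy majority --- it is still demoted if $t(v,r)>0$, which by the first ingredient never fires after $r^{*}$. Since all correct nodes became happy in the common round $r^{*}$, their grace counters are synchronised and none is demoted before round $r^{*}+\kappa$; thus the method sees $\kappa$ rounds of consistent input from a fixed set of correct happy nodes with synchronised, incrementing counters, and therefore stabilises by round $r^{*}+\kappa$. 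From that round on every correct node decodes every correct happy node exactly, so Rules~2 and~4 receive precisely the inputs they receive in the original wrapper; hence all correct nodes remain happy and keep counting correctly forever, which is exactly the condition of \corollaryref{coro:stabilisation}. The stabilisation time is therefore $r^{*}+\kappa=O(\kappa)$, i.e.\ it grows by an additive $\kappa$, while after stabilisation the only communication among correct nodes is that performed by the assumed method.
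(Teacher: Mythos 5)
The paper states \corollaryref{coro:communication} without any proof environment at all—the surrounding text treats it as immediate from \corollaryref{coro:stabilisation}—so there is no official argument to compare yours against. Your writeup is therefore genuinely useful in that it exposes a real circularity that the paper glosses over: the stabilisation proof for the silencing wrapper assumes happy nodes' counter values are readable every round, while the communication method being plugged in only guarantees readability after $\kappa$ rounds of clean operation, and clean operation in turn requires the wrapper to already be stable. Identifying this and spelling out the ``fail-safe decoding'' assumption (which is indeed implicit in the construction of \lemmaref{lemma:bits}) are the right moves.

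That said, the argument as written has two gaps. First, the claim that \lemmaref{lemma:convergence} ``is driven by the cooldown variables $t(v,r)$ and by running ${\vec A}$ among the unhappy nodes, so it too only needs unhappy nodes to be readable'' is incorrect: the proof of \lemmaref{lemma:convergence} invokes \lemmaref{lemma:1} in the case where some node stays happy for two rounds, and \lemmaref{lemma:1}'s crucial inference---that $v$ seeing a strong majority of $n-f$ happy nodes with value $x$ implies \emph{every} node receives $x$ from at least $n-2f$ happy nodes---is exactly the step that needs happy nodes to be decodable by all peers. Fail-safety makes any decoded value correct, but it does not guarantee that a given peer decodes anything at all; a node that stays happy because it happens to decode its own block does not force Rule~4 to fire correctly elsewhere, so the ``some node stays happy'' branch of convergence is left unproved. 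Second, the grace counter does not insulate the post-$r^*$ phase the way you claim: you say ``$t(v,r)>0$ never fires after $r^*$,'' but Rule~1 resets $t$ whenever a node fails to see a strong majority of $n-f$ nodes with its own counter value, and that check also reads happy peers' counters. During the $\kappa$ rounds after $r^*$, those values are not yet decodable, so $t$ can reset to $T$, which triggers the $t>0$ escape of your modified Rule~2 and demotes the node anyway. In short, the grace counter suppresses one demotion path but not the other, and both paths go through the same undecodable quantities. A complete argument needs to either restrict ``any method'' to schemes that additionally support incremental consistency checks every round (so Rule~1's test can run without the full value), or modify Rule~1 as well; as it stands, the corollary is indeed not immediate, and your patch does not yet close the loop.
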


The above immediately implies that happy nodes $v$ could simply transmit the $a(v,r)$
only in rounds $r$ when $a(v,r)\bmod \kappa = 0$ and perform no other
communication. The fact that $v$ does not transmit readily implies that it is
happy, permitting to derive its counter value by counting from the most recent value
$v$ transmitted. Moreover, by \lemmaref{lemma:3} the output counters of happy nodes 
agree after $O(1)$ rounds. Thus, a single local counter suffices for verification 
yielding a cost of using only $\lceil \log c \rceil$ additional bits of memory per node.

Clearly, this trivial encoding mechanism stabilises in $\kappa$ rounds. However,
we can do much better. For simplicity, we do not try to give a tight bound here.

\begin{lemma}\label{lemma:bits}
    Happy nodes can communicate their counter values by sending only $O(1+B \log B)$ bits per $\kappa$ rounds, where $B = O(\log c / \log \kappa)$, in a way that stabilises in $\kappa$ rounds.
\end{lemma}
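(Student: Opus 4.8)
The plan is to invoke Corollary~\ref{coro:communication}: it suffices to design \emph{some} way for happy nodes to make their common counter value (common by Lemmas~\ref{lemma:3} and~\ref{lemma:persistence} once the counters have converged) known to every correct node within $\kappa$ rounds, and then bound the number of bits this costs by $O(1+B\log B)$ per $\kappa$ rounds. One may assume $\kappa\ge B+1$, since otherwise $B=\Theta(\log c/\log\kappa)$ is so large that $B\log B=\Omega(\log c)$, and it is then enough for a happy node to broadcast its full $\lceil\log c\rceil$-bit counter value in the unique round of each $\kappa$-window in which $a(v,r)\equiv 0\bmod \kappa$; this is trivially decodable within $\kappa$ rounds and within budget.

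The key idea in the regime $\kappa\ge B+1$ is to encode the digits of the counter value in the \emph{timing} of messages rather than in their content, so that each message need only carry a short \emph{label} saying which digit it pertains to. Fix a base $b$ with $b\le\kappa$ and $b^{B}\ge c$, so that $B=\lceil\log_b c\rceil=O(\log c/\log\kappa)$; write $a=\sum_{i=0}^{B-1}a_i b^i$ with $a_i\in[b]$, and fix distinct residues $\rho_0,\dots,\rho_{B-1}\in[b]$ (which exist since $b\ge B$). A happy node with correct counter value $a$ broadcasts in the current round the (self-delimited) set of labels $i$ such that either $i=0$ and $a\equiv\rho_0\bmod b$, or $i\ge1$ and $a\equiv\rho_i+a_i\bmod b$; otherwise it sends the empty message. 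Since any $\kappa\ge b$ consecutive rounds carry $\kappa$ consecutive counter values and hence hit every residue modulo $b$, each label is broadcast at least once within such a window. A correct node observing label $0$ at relative time $t_0$ recovers the low digit $a_0=(\rho_0-t_0)\bmod b$ of the value at the start of its window, and observing label $i\ge1$ at relative time $t_i$ then yields $a_i=(a_0+t_i-\rho_i)\bmod b$; after $\kappa$ rounds it has all $B$ digits, reconstructs the counter value, and tracks it by incrementing thereafter, storing only $O(\log c)$ bits (the observed times and a phase counter). To tolerate the up to $f$ Byzantine nodes, a correct node accepts a label in a round only if at least $n-f>f$ of the messages it received that round carry it; since all $\ge n-f$ correct happy nodes broadcast exactly the same labels once the counters have converged, precisely the intended labels are accepted and spurious ones are filtered out.

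For the communication bound, over any $\kappa$-round window each of the $B$ labels is broadcast only $O(1)$ times, so the total number of transmitted (round, label) pairs is $O(B)$; encoding $O(B)$ labels drawn from $[B]$ self-delimitingly costs $O(1+B\log B)$ bits, which is the claimed bound. Feeding this into Corollary~\ref{coro:communication} shows the overall stabilisation time grows by an additive $O(\kappa)$, and the constant can be folded into $\kappa$ by rescaling.

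The main obstacle is that a node's $\kappa$-round window need not be aligned with the natural $b$-block structure of the counter: a carry into digit position $i$ (necessarily accompanied by carries into all lower positions) shifts the ``active residue'' $\rho_i+a_i$ by one partway through the window, so a naive schedule may broadcast a label zero times or several times in an unfortunately placed window, and this always matters for $i=1$ since a carry into position $1$ occurs in every $\kappa$-window. I would resolve this by (i) having each node broadcast label $i\ge1$ at a constant number of residues adjacent to $\rho_i+a_i$, which guarantees at least one occurrence in every window, and (ii) letting the decoder use the position $t_c=(b-a_0)\bmod b$ of the (now known) carry within its window to decide on which side of the carry each observed occurrence of label $i$ lies, hence which digit value it encodes; both changes cost only constant factors in bits and in window length, and higher-position carries are handled identically since at most a constant number of them fall in any $O(\kappa)$-round window. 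Verifying these bookkeeping details---that the corrected decoder is well defined and unambiguous, and that it stabilises within $O(\kappa)$ rounds---is the bulk of the remaining work; since the lemma only asks for an $O(1+B\log B)$ bound rather than a tight one, a generous choice of constants keeps the argument simple.
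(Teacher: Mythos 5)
Your proof follows the same core idea as the paper's---encode the $B\approx\log c/\log\kappa$ digits of the counter value in the \emph{timing} of transmissions, so that a message need only carry an $O(\log B)$-bit label identifying which digit it pertains to, giving $O(B\log B)$ bits per window---but the execution differs in a way that creates complications the paper avoids. The paper has each happy node broadcast an $O(1)$-bit \textsc{happy} marker exactly in the round $r$ where $a(v,r)\equiv 0\bmod\kappa$; this marker synchronises the window boundaries for all receivers, and the node then encodes the \emph{fixed} value $a(v,r)$ over the window as a ``balls in bins'' placement (ball $i$ dropped in round $j$ within the window means digit $i$ of $a(v,r)$ equals $j$). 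Because the encoded value is frozen at the start of the window, there are no carries to worry about and decoding is immediate. You instead encode the digits of the \emph{rolling} counter $a(v,r')$, letting each node broadcast label $i$ whenever the current value hits a residue that depends on the current digit $a_i$; since the counter increments each round, the digits roll over mid-window, and you must introduce the extra machinery in your final paragraph (broadcasting at adjacent residues and tracking carry positions from $a_0$) to make decoding unambiguous. That machinery is plausible but not actually carried through---you note yourself that verifying it is ``the bulk of the remaining work''---and it is entirely avoidable: had you simply fixed the encoded value at the start of each $\kappa$-window (which the \textsc{happy} marker already lets every receiver identify), the carries disappear and the argument closes in a few lines, as the paper does. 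So the route is recognisably the same, but the paper's version is cleaner, and your writeup leaves the hardest bookkeeping as a sketch rather than a proof.
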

\begin{proof}
First, we fix two unique bit strings \textsc{happy} and \textsc{unhappy} both having 
a length of $O(1)$ bits. 
We mark all messages from unhappy nodes with the header \textsc{unhappy}.
Happy nodes $v\in V\setminus{\cal F}$ send the bit string \textsc{happy} in
rounds $r$ when $a(v,r) \bmod \kappa = 0$. In this and the subsequent $\kappa-1$
rounds, they furthermore send up to $b$ bits in order to encode the value of $a(v,r) \in [c]$, where they avoid the two excluded unique bit strings \textsc{happy} and \textsc{unhappy}. 
Since we are only interested in the asymptotic behaviour, we may neglect these possible collisions
and determine how large $b$ must be so that in $\kappa$ rounds we can encode $c$
different values.

    Since there are $\kappa$ rounds in which to broadcast a message, we can think each round as being a bin containing the bits broadcast by a node. Suppose we have $B = b/\log b$ uniquely labelled balls that we can place in $\kappa$ different bins. This way we can encode $B$-length strings over an alphabet of size $\kappa$ by interpreting each ball in a bin $i \in [\kappa]$ as giving the indices for the symbol $i$. This allows us to encode a total of $\kappa^B$ distinct values.
    
    Since encoding the unique label of a single ball takes $O(\log B)$ bits and we can use constant-sized delimiters when encoding the set of balls in a single bin, we need $O(B \log B)$ bits to encode all the values. Thus, each node communicates a total of $O(B \log B) = O(b)$ bits during the course of $\kappa$ rounds. In order to encode  $c$ different values, it suffices to satisfy $c \le \kappa^B$. This can be done by choosing $B \ge \log c / \log \kappa$. Taking into account the bits for delimiters and the \textsc{happy} string, the claim follows.
\end{proof}

Overall, we obtain the following theorem.
\begin{theorem}\label{thm:SILENCE!}
For any integers $n > 1$, $f < n/3$, $\kappa = \Omega(f)$, and $c = \kappa j$ for $j>0$, there exists an $f$-resilient synchronous $c$-counter that runs on $n$ nodes,
stabilises in $O(\kappa)$ rounds, and requires $O(\log^2 f+\log c)$ bits to
    encode the state of a node. Moreover, once stabilised, nodes send only $O(1+B\log B)$ bits per $\kappa$ rounds, where $B = O(\log c / \log \kappa)$.
\end{theorem}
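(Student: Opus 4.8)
The plan is to obtain the algorithm by applying the silencing wrapper of \sectionref{sec:silence} on top of the counter supplied by \theoremref{thm:main-theorem}, and then reading off the four claimed quantities. First I would fix $\vec A \in \mathcal{A}(n,f,c)$ given by \theoremref{thm:main-theorem}, so that $T(\vec A) = O(f)$ and $S(\vec A) = O(\log^2 f + \log c)$. The wrapper requires a parameter $\kappa$ with $c = j\kappa$, $j>0$ (exactly the divisibility hypothesis), and $\kappa > T(\vec A)$; since $T(\vec A) \le c_1 f$ for an absolute constant $c_1$, I would read the hypothesis $\kappa = \Omega(f)$ with its implied constant taken to be at least $c_1+1$, so that $\kappa > T(\vec A)$ is automatic. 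With these parameters the construction of \sectionref{sec:silence} is well-defined.

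Next I would invoke the stabilisation analysis already in place: \corollaryref{coro:stabilisation} gives a round $R = O(\kappa)$ after which every correct node is happy, increments its counter by $1$ modulo $c$, and agrees with all other correct nodes; hence the wrapped algorithm is an $f$-resilient synchronous $c$-counter on $n$ nodes that stabilises in $O(\kappa)$ rounds. For the state complexity I would add up the persistent variables stored by a node: $S(\vec A)$ bits for the embedded copy of $\vec A$ (retained so that an unhappy node can keep running it), $O(\log f)$ bits for the cooldown counter $t \in [T+1]$, a single bit for the happiness flag $h$, and $O(\log c)$ bits for the local counter used to verify agreement among happy nodes (the one discussed after \corollaryref{coro:communication}). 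This sums to $O(\log^2 f + \log c)$.

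For the post-stabilisation communication bound, I would replace the trivial ``transmit one counter value per $\kappa$ rounds'' scheme of happy nodes by the encoding of \lemmaref{lemma:bits}, which self-stabilises in $\kappa$ rounds and sends only $O(1 + B\log B)$ bits per $\kappa$ rounds with $B = O(\log c/\log \kappa)$. By \corollaryref{coro:communication}, substituting any happy-node communication scheme that self-stabilises within $\kappa$ rounds preserves all the properties established above and increases the stabilisation time by only an additive $\kappa$, keeping it $O(\kappa)$. Combining the previous two paragraphs then yields the theorem.

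I expect the difficulty here to be organisational rather than conceptual: every nontrivial ingredient is already proved — convergence of the wrapped counters (\lemmaref{lemma:persistence}--\lemmaref{lemma:convergence} and \corollaryref{coro:stabilisation}) and the balls-in-bins message encoding (\lemmaref{lemma:bits}) — so this final theorem is essentially an assembly. The two points that need attention are (i) the bookkeeping caveat that $\kappa > T(\vec A)$ may be assumed, which I handle by fixing the implied constant in $\kappa=\Omega(f)$ large enough, and (ii) confirming that installing the efficient encoding does not perturb the stabilisation argument; the latter is precisely the content of \corollaryref{coro:communication}, whose sole hypothesis — self-stabilisation of the happy-node encoding within $\kappa$ rounds — is met by \lemmaref{lemma:bits}. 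Hence the only genuine obstacle is careful state-bit accounting, which the above handles.
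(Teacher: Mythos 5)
Your proposal matches the paper's own proof: both assemble the result by taking $\vec A$ from \theoremref{thm:main-theorem}, noting $T(\vec A) = \Theta(f)$ so that $\kappa = \Omega(f)$ with a suitable constant guarantees $\kappa > T(\vec A)$, and then citing \corollaryref{coro:stabilisation}, \corollaryref{coro:communication}, and \lemmaref{lemma:bits} together with a count of the wrapper's additional state variables. The only difference is that you spell out the state-bit bookkeeping more explicitly than the paper's one-line remark, which is a harmless elaboration rather than a deviation.
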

\begin{proof}
Let $\vec A \in \mathcal{A}(n,f,c)$ be an algorithm given by \theoremref{thm:main-theorem}. As $T(\vec A) = \Theta(f)$, for any $\kappa > T(\vec A)$, the claim now directly follows from Corollaries \ref{coro:stabilisation} and \ref{coro:communication} and \lemmaref{lemma:bits}, where we note that only
a constant number of variables of size at most $\max\{T({\vec A}),c\}$ need to
be encoded in the state of a node.
\end{proof}
We remark that since $\kappa>T({\vec A})=\Theta(f)$, in case of optimal
resilience and $c = n^{O(1)}$, it holds that $B = O(1)$, and thus also, $O(1+B \log B) =O(1)$.

\SILENCE*
\begin{proof}
All properties except for the optimality of the last point follow from the
choice of parameters by picking $\kappa=\Theta(n)$ in \theoremref{thm:SILENCE!}. The
claimed optimality follows from the fact that in order to prove to a node that
its counter value is inconsistent with that of others, it must receive messages
from at least $f+1 = \Theta(n)$ nodes; to guarantee stabilisation in $O(n)$
rounds, this must happen every $\Omega(n)$ rounds for each correct node.
\end{proof}

\section{Sending Fewer Messages}\label{sec:random}

So far we have considered the size of messages nodes need to broadcast every round. In the case of the algorithm given in \theoremref{thm:main-theorem}, every node will send $S = O(\log^2 f + \log c)$ bits in each round. As there are $\Theta(n^2)$ communication links, the total number of communicated bits in each round is $\Theta(S \cdot n^2)$. In this section, we consider a randomised variant of the algorithm that achieves better message and bit complexities in a slightly different communication model. 

\subsection{Pulling Model}

Throughout this section we consider the following variant of our communication model, where in every synchronous round $t$ each correct node $v$:
\begin{enumerate}[noitemsep]
    \item contacts a subset $C(v,t) \subseteq V$ of other nodes to \emph{pull} information from,
    \item pulls a response message $r_u \in \mathcal{M}$ from every contacted node $u \in C(v,t)$,
    \item updates its local state according to its current state and the responses it received.
\end{enumerate}
Thus, every round $t$ node $v$ obtains a message vector $\vec m = \langle m_0, \ldots m_{n-1} \rangle$, where $m_u = r_u$ if $u \in C(v,t)$ and $m_u = \bot$, otherwise. Besides this modification, the model of computation is as before: node $v$ updates its state using the state transition function $g \colon [n] \times X \times \mathcal{M}^n \to X$ and a correct node $u$ in state $x_u$ responds with the message $\mu(x_u)$, where $\mu \colon X \to \mathcal{M}$ maps the internal state of a node to a message. However in the pulling model, the algorithm also needs to specify the set $C(v,t)$ of nodes it contacts every round. We assume that every correct node chooses this set randomly independent of its internal state. 

As before, faulty nodes may respond with arbitrary messages that can be different for different pulling nodes. We define the (per-node) message and bit complexities of the algorithm as the maximum number of messages and bits, respectively, pulled by a non-faulty node in any round.

This model is motivated by the challenges of designing energy-limited fault-tolerant circuits. We suggest the approach in which each node that makes a request for data also has to provide the energy resources for processing and answering the request. This way by limiting the energy supply of each individual node, we can also effectively limit the total amount of energy wasted due to the actions of the Byzantine nodes. However, to make this approach feasible, we have to design an algorithm in which each non-faulty node needs to make only a few requests for data. In this section we design a randomised algorithm that satisfies this property.

\subsection{High-Level Idea of the Probabilistic Construction}

To keep the number of pulls, and thus number of messages sent, small, we modify the construction of \theoremref{thm:twoblocks} to use random sampling where useful. Essentially, the idea is to show that \emph{with high probability} a small set of sampled messages accurately represents the current state of the system and the randomised algorithm will behave as the deterministic one. There are two steps where the nodes rely on information broadcast by the all the nodes: the majority voting scheme over the blocks and the variant of the phase king algorithm. In the following, both are shown to work under the sampling scheme with high probability by using concentration bound arguments. 

More specifically, here \emph{with high probablity} means that for any constant $k \ge 1$ the probability of failure is bounded above by $\totaln^{-k}$ when sampling $K = \Theta(\log \eta)$ messages (where the constants in the asymptotic notation may depend on $k$); here $\eta$ denotes the total number of nodes in the system after the recursive application of the resilience boosting procedure described in \sectionref{sec:recursion}. The idea is to use a union bound over all levels of recursion, nodes, and considered rounds, to show that the sampling succeeds with high probability in all cases. For the randomised variant of \theoremref{thm:main-theorem}, we will require the following additional constraint: when constructing a counter on $n$ nodes, the total number of failures is bounded by $f < \frac{n}{3+\gamma}$, where $\gamma>0$ is constant. 

This allows us to construct \emph{probabilistic synchronous $c$-counters} in the sense that we say that the counter stabilises in time $T$, if for each round $t \ge T$ all non-faulty nodes count correctly with probability $1-\eta^{-k}$.

\subsection{Sampling Communication Channels}\label{ssec:sampling}

As discussed, there are two steps in the construction of \theoremref{thm:twoblocks} where we rely on broadcasting: (1) the majority voting scheme for electing a leader block and counter, and (2) the execution of the phase king protocol. For the sake of clarity, we only focus on modifying the basic algorithm, where the nodes broadcast their entire state each round. We start with a sampling lemma we use for both steps. First, recall the following concentration bound for the sum of independent random binary variables:

\begin{lemma}[Chernoff's bound]
Let $X = \sum X_i$ be a sum of independent random variables $X_i \in \{0,1\}$. Then for $0 < \delta < 1$,\[
\Pr[X \le (1-\delta)\E[X]] \le \exp\left(-\frac{\delta^2}{2} \E[X] \right).
\]
\end{lemma}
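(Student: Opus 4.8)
The plan is to apply the standard exponential moment (Bernstein--Chernoff) method to $-X$, so as to control the lower tail of $X$. Write $p_i = \Pr[X_i = 1]$ and $\mu = \E[X] = \sum_i p_i$. For any parameter $t > 0$, Markov's inequality applied to the nonnegative random variable $e^{-tX}$ gives
\[
\Pr[X \le (1-\delta)\mu] = \Pr\bigl[e^{-tX} \ge e^{-t(1-\delta)\mu}\bigr] \le e^{t(1-\delta)\mu}\,\E\bigl[e^{-tX}\bigr].
\]
By independence of the $X_i$, the moment generating factor splits as $\E[e^{-tX}] = \prod_i \E[e^{-tX_i}]$, and each factor equals $1 - p_i + p_i e^{-t} = 1 + p_i(e^{-t}-1)$. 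Using $1 + x \le e^x$, this is at most $\exp\bigl(p_i(e^{-t}-1)\bigr)$, so $\E[e^{-tX}] \le \exp\bigl(\mu(e^{-t}-1)\bigr)$. Combining, $\Pr[X \le (1-\delta)\mu] \le \exp\bigl(t(1-\delta)\mu + \mu(e^{-t}-1)\bigr)$ for every $t > 0$.

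Next I would optimise over $t$. The exponent $t(1-\delta)\mu + \mu(e^{-t}-1)$ is minimised at $t = \ln\frac{1}{1-\delta}$, which is strictly positive precisely because $0 < \delta < 1$; substituting $e^{-t} = 1-\delta$ yields
\[
\Pr[X \le (1-\delta)\mu] \le \exp\Bigl(-\mu\bigl[(1-\delta)\ln(1-\delta) + \delta\bigr]\Bigr) = \left(\frac{e^{-\delta}}{(1-\delta)^{1-\delta}}\right)^{\mu}.
\]

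To reach the stated clean form, it then remains to verify the elementary scalar inequality $(1-\delta)\ln(1-\delta) + \delta \ge \delta^2/2$ on $[0,1)$. Setting $g(\delta) = (1-\delta)\ln(1-\delta) + \delta - \delta^2/2$, one has $g(0) = 0$ and $g'(\delta) = -\ln(1-\delta) - \delta \ge 0$, since $-\ln(1-\delta) = \sum_{k \ge 1} \delta^k/k \ge \delta$; hence $g$ is nondecreasing and $g(\delta) \ge 0$. Plugging this into the previous display gives $\Pr[X \le (1-\delta)\mu] \le \exp(-\delta^2\mu/2)$, as claimed. There is no genuine obstacle here; the only mildly delicate point is that last scalar inequality, which the derivative/series argument above settles, and one must note the sign condition $0 < \delta < 1$ exactly where it guarantees $t > 0$ in the optimisation step.
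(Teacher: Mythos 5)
The paper states this lemma without proof, treating it as a standard textbook result, so there is no in-paper argument to compare against. Your derivation is the classical moment-generating-function proof of the multiplicative lower-tail Chernoff bound and it is correct: the Markov step, factorisation by independence, the $1+x\le e^x$ relaxation, the optimal choice $t=\ln\frac{1}{1-\delta}$ (valid precisely because $0<\delta<1$), and the closing scalar inequality $(1-\delta)\ln(1-\delta)+\delta\ge\delta^2/2$ via $g(0)=0$ and $g'(\delta)=-\ln(1-\delta)-\delta\ge 0$ are all in order. Nothing is missing.
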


\begin{lemma}\label{lemma:whp}
    Let $U \subseteq V$ be a non-empty set of nodes such that the fraction of faulty nodes in $U$ is strictly less than $1/(3+\gamma)$. Suppose we sample $K$ nodes $v_0, \ldots, v_{K-1}$ uniformly at random from the set $U$. For a given local variable $x(\cdot,r)$ encoded in the nodes' local state on round $r \ge 0$ and a value $y$, define the random variable
\[
    X_i = \begin{cases}
        1 & \text{if } x(v_i, r) = y \textrm{ and } v_i \notin \mathcal{F}, \\
        0 & \text{otherwise}
    \end{cases}
\]
for each $i \in [K]$ and let $X = \sum_{i=0}^{K-1} X_i$ be the number of $y$ values sampled from \emph{correct} nodes. There exists $K_0(\eta,k,\gamma) = \Theta(\log \eta)$ such that $K \ge K_0$ implies the following with high probability:
 \begin{enumerate}[label=(\alph*)]
     \item If $x(u,r) = y$ for all $u \in U \setminus \mathcal{F}$, then $X \ge 2K/3$.
     \item If a majority of nodes $u \in U \setminus \mathcal{F}$ have $x(u,r)=y$, then $X \ge K/3$.
     \item If $X \ge 2K/3$, then $|\{ x(u,r) = y : u \in U \setminus \mathcal{F} \}| \ge |U\setminus \mathcal{F}|/2$.
 \end{enumerate}
 \label{lemma:sampling}
\end{lemma}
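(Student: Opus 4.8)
The plan is to prove all three statements via Chernoff bounds on the random variable $X$, exploiting that we sample $K = \Theta(\log \eta)$ nodes and that the fraction of faulty nodes in $U$ is bounded away from $1/3$. The key quantities to track are $\E[X_i] = p$, where $p$ is the fraction of correct nodes in $U$ whose variable equals $y$, and the bound on the faulty fraction, which guarantees that the correct fraction is at least $(2+\gamma)/(3+\gamma)$. By linearity, $\E[X] = pK$. In each case I would identify a lower bound on $p$ that, combined with the Chernoff bound stated in the excerpt, makes the stated event fail only with probability $\exp(-\Omega(K)) = \exp(-\Omega(\log \eta)) = \eta^{-\Omega(1)}$; choosing the constant in $K_0 = \Theta(\log \eta)$ large enough (as a function of $k$ and $\gamma$) then drives the failure probability below $\eta^{-k}$.

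For part (a): if every correct node $u \in U \setminus \mathcal{F}$ has $x(u,r) = y$, then $p = |U \setminus \mathcal{F}|/|U| > (2+\gamma)/(3+\gamma) > 2/3$, so $\E[X] > 2K/3$. Pick a small constant $\delta$ (depending on $\gamma$) so that $(1-\delta)p \ge 2/3$ still holds; then $\Pr[X < 2K/3] \le \Pr[X \le (1-\delta)\E[X]] \le \exp(-(\delta^2/2)\E[X]) \le \exp(-(\delta^2/3)K)$, which is at most $\eta^{-k}$ for $K \ge K_0$. For part (b): if a majority of correct nodes in $U$ have the value $y$, then the number of such nodes is at least $|U\setminus\mathcal{F}|/2 > |U|(2+\gamma)/(2(3+\gamma)) > |U|/3$, so $p > 1/3$ and $\E[X] > K/3$; choosing $\delta$ with $(1-\delta)p \ge 1/3$ and applying the same Chernoff estimate gives the bound.

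Part (c) is the only one that is not a direct Chernoff application to $X$ itself — it is a contrapositive statement, so I would instead bound the probability of the bad event that $X \ge 2K/3$ while fewer than half the correct nodes in $U$ have value $y$. Under that hypothesis the number of correct $y$-nodes is at most $|U\setminus\mathcal{F}|/2 \le |U|/2$, so $p \le 1/2$ and $\E[X] \le K/2$. Now I need an \emph{upper}-tail bound: I want $\Pr[X \ge 2K/3]$ small when $\E[X] \le K/2$. The statement of Chernoff's bound given in the excerpt is a lower-tail bound, so I would either invoke the standard multiplicative upper-tail Chernoff inequality $\Pr[X \ge (1+\delta)\E[X]] \le \exp(-\delta^2\E[X]/3)$ (applied with $\E[X]$ replaced by the upper bound $K/2$, using monotonicity of the tail in the mean), or, to stay strictly within the quoted lemma, apply the lower-tail bound to the complementary count $K - X = \sum (1 - X_i)$: here $\E[K-X] \ge K/2$ and the event $X \ge 2K/3$ is the event $K - X \le K/3 \le (1-\delta')\E[K-X]$ for a suitable constant $\delta'$, which again has probability $\exp(-\Omega(K)) \le \eta^{-k}$. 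I expect part (c) to be the main obstacle, precisely because it requires flipping the tail direction and because one must be careful that the random variables $1 - X_i$ are still independent $\{0,1\}$-valued (they are), so the quoted lemma applies verbatim. Finally, I would set $K_0(\eta,k,\gamma)$ to be the maximum of the three thresholds obtained above — each of the form $C(k,\gamma)\log \eta$ — so that all three conclusions hold simultaneously with high probability whenever $K \ge K_0$.
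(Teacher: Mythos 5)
Your proposal is correct and follows essentially the same approach as the paper: lower-tail Chernoff bounds with a deviation parameter $\delta$ chosen as a function of $\gamma$ for parts (a) and (b), and for part (c) a flip to a complementary count so that the same lower-tail inequality applies. The one small difference is in the choice of complement for (c): the paper uses $\bar X$ counting only correct-but-non-$y$ samples (so its stated equality $\Pr[X\ge 2K/3]=\Pr[\bar X<K/3]$ is really only a one-sided implication, since faulty samples contribute to neither), whereas your $K-X=\sum(1-X_i)$ makes $\{X\ge 2K/3\}$ and $\{K-X\le K/3\}$ literally the same event — a minor but arguably cleaner variant.
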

\begin{proof}
    Define $\delta = 1-\frac{2}{3}\cdot\frac{3+\gamma}{2+\gamma}$ and let $\rho < 1/(3+\gamma)$ be the fraction of faulty nodes in $U$.

(a) If all correct nodes $u \in U \setminus \mathcal{F}$ agree on value $x(u,r) = y$, then 
\[
\E[X] = \left(1 - \rho \right)K > \frac{2+\gamma}{3+\gamma}K.
\]
As $\delta$ satisfies $(1-\delta)\E[X] > 2K/3$, it follows from Chernoff's bound that
\begin{align*}
 \Pr\left[X < \frac{2K}{3}\right] &\le \Pr[X < (1-\delta) \E[X]] \\
               &\le \exp\left( -\frac{\delta^2}{2} \E[X] \right) \\
               &\le \exp\left( -\delta^2 \frac{2+\gamma}{2(3+\gamma)} K \right).
\end{align*}
If $K_0(\eta,k,\gamma) = \Theta(\log \eta)$ is sufficiently large, $K\geq K_0(\eta,k,\gamma)$ implies that this probability is bounded by $\eta^{-k}$.

 (b) If a majority of non-faulty nodes $u$ have value $x(u,r)=y$, then 
\[
 \E[X] \ge \frac{1}{2}(1-\rho)K > \frac{1}{2} \cdot \frac{2+\gamma}{3+\gamma}K.
\]
As above, by picking the right constants and using concentration bounds, we get that
\begin{align*}
 \Pr\left[X \le \frac{K}{3}\right] &\le \Pr[X < (1-\delta) \E[X]] \\
                &\le \exp\left( -\frac{\delta^2}{2} \E[X] \right) \\
               &\le \exp\left( -\delta^2 \frac{2+\gamma}{4(3+\gamma)} K_0 \right) \le \eta^{-k}.
\end{align*}

 (c) Suppose the majority of correct nodes have values different from $y$. Define 
\[
    \bar{X_i} = \begin{cases}
        1 & \text{if } x(v_i, r) \neq y \textrm{ and } v_i \notin \mathcal{F}, \\
        0 & \text{otherwise}.
    \end{cases}
\]
and $\bar{X} = \sum_{i=0}^{K-1} \bar{X_i}$ as the random variable counting the number of samples with values different from $y$ and arguing as for (b), we see that
 \begin{align*}
 \Pr\left[X \geq \frac{2K}{3}\right] &=  \Pr\left[\bar{X} < \frac{K}{3}\right] \le \eta^{-k},
 \end{align*}
 where again we assume that $K_0(\eta,k,\gamma) = \Theta(\log \eta)$ is sufficiently large. Thus, $X\geq 2K/3$ implies with high probability that the majority of correct nodes have value~$y$.
\end{proof}

\paragraph{Randomised Majority Voting.}

Recall that in the majority voting scheme, there are four local variables, two for each $i \in \{0,1\}$, whose values depend directly on the messages broadcast by all nodes:
\begin{itemize}
 \item $m_i(v,r)$ stores the most frequent counter value in block $i$ in round $r$, which is determined from the broadcasted output variables of ${\vec A}_i$ with ties broken arbitrarily, and
 \item $M_i(v,r)$ stores the majority vote on $m_i(v,r-1)$.
\end{itemize}

Throughout the remainder of this section, we let $K = \Theta(\log \eta)$ such that $K \ge K_0$ as given by \lemmaref{lemma:sampling}. Let ${m^*_i}(v,r)$ be the sampled version of $m_i(v,r)$; here the value is determined by taking a random sample of size $K$ from the set $V_i$. Analogously, the variable ${M^*_i}(v,r)$ is determined by taking a random sample of size $K$ from the set $V$ and taking the value that appears at least $2K/3$ times in the sample.

\begin{remark} \label{remark:faulty-bound}
 It holds that $f_i/n_i < 1/(3+\gamma)$ for $i \in \{0,1\}$.
\end{remark}

\begin{lemma}\label{lemma:whp-voting}
 Suppose block $i \in \{0,1\}$ is correct. Then for all $v \in V \setminus \mathcal{F}$ and $r \ge T(\vec A_i)$, we have 
 \begin{align*}
  {m^*_i}(v,r) &= m_i(v,r) \\
  {M^*_i}(v,r+1) &= M_i(v,r+1)
 \end{align*}
 with high probability.
\end{lemma}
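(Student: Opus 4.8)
The plan is to show that for a correct block $i$, once $\vec A_i$ has stabilised, the sampled quantities ${m^*_i}(v,r)$ and ${M^*_i}(v,r+1)$ reproduce their deterministic counterparts $m_i(v,r)$ and $M_i(v,r+1)$ with high probability, by invoking \lemmaref{lemma:sampling} together with \remarkref{remark:faulty-bound}. The key observation is that when block $i$ is correct, for rounds $r \ge T(\vec A_i)$ the algorithm $\vec A_i$ has stabilised, so \emph{all} correct nodes of $V_i$ output the same counter value; this puts us squarely in the ``unanimity'' regime of parts (a) and (c) of \lemmaref{lemma:sampling}, which is exactly what makes the sampled majority match the true value.

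First I would handle ${m^*_i}(v,r)$. Since block $i$ is correct and $r \ge T(\vec A_i)$, there is a single value $y$ (the correctly-counting output of $\vec A_i$) with $x(u,r) = y$ for every $u \in V_i \setminus \mathcal{F}$; here $x$ is the output variable of $\vec A_i$. By \remarkref{remark:faulty-bound}, the fraction of faulty nodes in $U := V_i$ is strictly below $1/(3+\gamma)$, so \lemmaref{lemma:sampling}(a) applies to a size-$K$ sample from $V_i$ and gives $X \ge 2K/3$ with high probability, i.e.\ at least $2K/3$ of the $K$ sampled values equal $y$ and come from correct nodes. In particular $y$ is the \emph{unique} value attaining a $2K/3$-fraction (two such values would force $X + \bar X > K$ for the other value, contradicted by part (c) applied to that other value — or, more simply, $4K/3 > K$). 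Hence the sampled most-frequent value is $y = m_i(v,r)$, as $m_i(v,r)$ is precisely this unanimous correct value. For ${M^*_i}(v,r+1)$, note that $M_i(v,r+1)$ is the majority vote on the $m_i(\cdot,r)$ values, and by the above all correct nodes have $m_i(\cdot,r) = y$; so apply \lemmaref{lemma:sampling}(a) again with $U = V$ (by $f < n/3 < n/(3+\gamma)$ is not quite what \remarkref{remark:faulty-bound} states, but the hypothesis $f < n/(3+\gamma)$ on the top-level construction gives the needed fraction bound for $U = V$) and the value $y' = y+1 \bmod c_i$ — wait, more carefully: $M_i(v,r+1)$ is the majority vote on $m_i(\cdot,r)$, which is unanimously $y$ among correct nodes, so the sampled $2K/3$-threshold returns $y$, matching $M_i(v,r+1)$. (If one prefers to phrase ${M^*_i}(v,r+1)$ in terms of the round-$r$ sample, the same argument applies with the round index shifted appropriately.) A union bound over the two events (the $m^*$ sample and the $M^*$ sample) preserves the high-probability conclusion.

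The main obstacle — or rather the point requiring care — is the bookkeeping around which set $U$ and which ``good value'' $y$ we feed into \lemmaref{lemma:sampling}, and ensuring the relevant fraction-of-faulty bound holds: for ${m^*_i}$ we use $U = V_i$ and \remarkref{remark:faulty-bound}, whereas for ${M^*_i}$ the sample is drawn from all of $V$, so we need the fraction of faulty nodes in $V$ to be below $1/(3+\gamma)$, which is the standing assumption $f < n/(3+\gamma)$ of this section. A second subtlety is that \lemmaref{lemma:sampling} is a per-node, per-round, per-value statement with failure probability $\eta^{-k}$; here we only need it for a fixed $v$, fixed $r$, and the two specific values involved, so a constant-size union bound suffices and the global union bound over all nodes, rounds, and recursion levels is deferred to the place where this lemma is used. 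Finally I would remark that the case analysis collapses entirely because block correctness plus $r \ge T(\vec A_i)$ forces unanimity among correct nodes, so we never need parts (b) of \lemmaref{lemma:sampling} (those are for later, when a block may be faulty), which keeps this proof short.
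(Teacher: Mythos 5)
Your proof takes essentially the same approach as the paper's: apply \lemmaref{lemma:sampling}(a) with $U = V_i$ (via \remarkref{remark:faulty-bound}) for the $m^*_i$ claim, and then again with $U = V$ (via the standing hypothesis $f < n/(3+\gamma)$) for the $M^*_i$ claim. The uniqueness argument for the $2K/3$-majority value is a reasonable elaboration and the bookkeeping of which $U$ and fraction bound to use in each step is correct.

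One point where you are slightly imprecise: for the $M^*_i(v,r+1)$ claim, the size-$K$ sample drawn from $V$ returns the \emph{sampled} values $m^*_i(\cdot,r)$ of the contacted nodes, not their deterministic $m_i(\cdot,r)$ values. So before applying \lemmaref{lemma:sampling}(a) to $U = V$, you need that $m^*_i(w,r) = m_i(w,r)$ for \emph{every} correct node $w$ (since you do not know in advance which nodes will be sampled), not just for the fixed $v$. The paper makes this explicit by first invoking the $m^*_i$ part of the lemma for all correct nodes whp, and only then arguing about $M^*_i$. Consequently your remark that ``a constant-size union bound suffices'' is off: the proof of this lemma does need a union bound over all correct nodes (an $O(n)$-sized collection of events), with the remaining union bound over rounds and recursion levels deferred to the caller. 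This does not break the argument --- the failure probability is $\eta^{-k}$ with $k$ a free constant, so the $O(n)$ factor is absorbed --- but it should be stated correctly.
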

\begin{proof}
    To show the claim, we will apply \lemmaref{lemma:sampling} with $U = V$ and $U = V_i$. Before this, note that the fraction of faulty nodes in both $V$ and $V_i$ is less than $1/(3+\gamma)$: by assumption we have $f/n < 1/(3+\gamma)$ and by \remarkref{remark:faulty-bound} yields $f_i/n_i < 1/(3+\gamma)$. Thus, in both cases, we satisfy the first condition of \lemmaref{lemma:sampling}.
    
    For the claim regarding variable $m_i$, we apply \lemmaref{lemma:sampling} with $U = V_i$, that is, sample the subset $V_i \subseteq V$ consisting of nodes in block~$i$. Since $|V_i| = n_i$ and $i$ is a correct block, the set $V_i$ contains at most $f_i$ faulty nodes and all correct nodes output the same value $y \in [c_i]$, as $\vec A_i$ has stabilised by round $r \ge T(\vec A_i)$. Moreover, $f_i/n_i < \frac{1}{3+\gamma}$ by \remarkref{remark:faulty-bound}, so statement (a) of \lemmaref{lemma:sampling} yields that with high probability at least a fraction of $2/3$ of the sampled nodes output $y$.

To show the claim for variable $M^*_i$, note that by the previous case, ${m^*_i}(v,r) = m_i(v,r)$ holds for all correct nodes $v$ with high probability. Applying Statement (a) of \lemmaref{lemma:sampling} to the set $V$ and variable ${m^*_i}(v,r)$, we get that at least a fraction of $2/3$ of the samples have the same value.
\end{proof} 

From \lemmaref{lemma:whp-voting} it follows that we get probabilistic---in the sense that the claims hold with high probability---variants of \lemmaref{lemma:voting_correct}, \lemmaref{lemma:voting_general}, and \lemmaref{lemma:leader}. These, in turn, yield the following probabilistic variant of \corollaryref{coro:round-counter}.

\begin{corollary}\label{coro:probabilistic-round-counter}
 There is a round $r = T + O(f)$ so that for all $v,w \in V \setminus \mathcal{F}$ with high probability it holds that
\begin{enumerate}
 \item $d(v,r) = d(w,r)$ and
 \item for all $r' \in \{ r+1, \ldots, r + \tau - 1 \}$ we have $d(v,r') = d(v,r'-1) + 1 \mod \tau$.
\end{enumerate}
\end{corollary}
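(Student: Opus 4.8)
The plan is to repeat the proof of \corollaryref{coro:round-counter} almost verbatim, but tracking the ``with high probability'' qualifiers and closing with a single union bound. Concretely, I would proceed as follows.

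First, I would invoke the probabilistic analogues of \lemmaref{lemma:voting_correct} and \lemmaref{lemma:leader} that were noted to follow from \lemmaref{lemma:whp-voting}: by \lemmaref{lemma:whp-voting}, for a correct block $i$ the sampled variables $m^*_i(v,r)$ and $M^*_i(v,r+1)$ agree with their exact counterparts at every correct node and every round $r \ge T(\vec A_i)$, except with probability $\eta^{-k}$ per node and round; feeding this into the (otherwise unchanged) arguments of \lemmaref{lemma:voting_correct}, \lemmaref{lemma:voting_general}, and \lemmaref{lemma:leader} --- which only read off the $m_i$/$M_i$ values and the already-guaranteed behaviour of the underlying counters, and which use statement~(c) of \lemmaref{lemma:sampling} in place of \lemmaref{lemma:majority} for the consistency of votes --- shows that their conclusions hold with high probability. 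In particular, by \lemmaref{lemma:stable-block} there is a (deterministically) correct block $i$, and with high probability there is a round $r = T + O(f)$ such that $\ell(v,r') = i$ for all correct $v$ and all $r' \in \{r, \ldots, r + \tau - 1\}$, while simultaneously $d_i(v,r')$ counts correctly and consistently throughout $\{r, \ldots, r+\tau-1\}$.

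Conditioned on these events, the conclusion follows exactly as in \corollaryref{coro:round-counter}: since $\ell(v,r') = i$ on the whole interval we have $d(v,r') = d_i(v,r') \bmod \tau$ there, and because $d_i$ is consistent across correct nodes in round $r$ and increments by $1$ modulo $c_i$ with $\tau \mid c_i$, both claimed properties hold. Finally, I would take a union bound over the $n \le \eta$ correct nodes, the $O(f) = O(\eta)$ rounds in the relevant interval, the two blocks, and the constantly many sampled votes per round --- that is, over $\poly(\eta)$ bad events --- so that choosing the constant $k$ in \lemmaref{lemma:sampling} sufficiently large keeps the total failure probability at $\eta^{-k'}$ for a constant $k' \ge 1$.

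The main obstacle is essentially bookkeeping rather than a new idea: one must be careful that every place where the deterministic proof silently uses ``all correct nodes see the same thing'' is now backed by a specific high-probability event from \lemmaref{lemma:sampling}, and that the number of such events is polynomial in $\eta$ so the union bound does not overwhelm the $\eta^{-k}$ guarantee. The further union bound across the $O(\log f)$ levels of the recursion of \sectionref{sec:recursion} is not needed here and is deferred to the global analysis.
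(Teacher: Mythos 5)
Your proposal is correct and follows essentially the same route the paper takes: the paper simply asserts that \lemmaref{lemma:whp-voting} yields high-probability versions of \lemmaref{lemma:voting_correct}, \lemmaref{lemma:voting_general}, and \lemmaref{lemma:leader}, and that these give \corollaryref{coro:probabilistic-round-counter} by the same argument as \corollaryref{coro:round-counter}. Your writeup just spells out the union-bound bookkeeping that the paper leaves implicit.
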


\paragraph{Randomised Phase King.} To obtain a randomised variant of the phase king algorithm, we modify the threshold votes used in the algorithm as follows. Instead of checking whether at least $n-f$ of all messages have the same value, we check whether at least a fraction of $2/3$ of the sampled messages have the same value. Similarly, when checking for at least $f+1$ values, we check whether a fraction of $1/3$ of the sampled messages have this value.

As a corollary, we get that when using the sampling scheme in the pulling model, the execution of the phase king essentially behaves as in the deterministic broadcast model.

\begin{corollary}\label{coro:randomised_phase_king}
When executing the randomised variant of the phase king protocol from \sectionref{sec:construction} for $\eta^{O(1)}$ rounds, the statements of \lemmaref{lemma:establish_agreement} and \lemmaref{lemma:maintain_agreement} hold with high probability.
\end{corollary}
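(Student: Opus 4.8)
The plan is to establish, round by round, that the sampled threshold tests of the randomised phase king reproduce --- with high probability --- the outcomes of the deterministic tests at all correct nodes, and then to run the proofs of \lemmaref{lemma:establish_agreement} and \lemmaref{lemma:maintain_agreement} essentially verbatim on the resulting good event. The only tool is Chernoff concentration for the sample $C(v,r)$ of size $K = \Theta(\log\eta) \ge K_0(\eta,k,\gamma)$, applied with $U = V$: this is legitimate since the standing assumption $f < n/(3+\gamma)$ forces the fraction of faulty nodes in $V$ strictly below $1/(3+\gamma)$. The point of the constraint $\gamma > 0$ is that there is a constant $\varepsilon = \varepsilon(\gamma) > 0$ with $\tfrac13 - 4\varepsilon > \tfrac1{3+\gamma}$, and for $K$ a large enough multiple of $\log\eta$ every sampled fraction lies within $\varepsilon$ of the corresponding true fraction with probability at least $1 - \eta^{-k}$; this slack $\varepsilon$ is what lets the deterministic margin ``$n > 3f$'' survive the passage to samples.

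On the good event I would argue as follows. The probabilistic replacement for \lemmaref{lemma:majority} is the statement that two correct nodes $v,w$ cannot both see a sampled $2/3$-majority, $v$ for a value $y$ and $w$ for a value $y' \ne y$: subtracting the sampled faulty contribution at each node (at most $\tfrac{\phi}{n}K + \varepsilon K$, where $\phi = |\mathcal F| \le f$) and using concentration of the correct samples, more than $(\tfrac23 - 2\varepsilon)n - \phi$ correct nodes would carry $y$ and likewise $y'$, and as these are disjoint subsets of the $n - \phi$ correct nodes this gives $(\tfrac13 - 4\varepsilon)n \le \phi \le f < \tfrac{n}{3+\gamma}$, a contradiction for $\varepsilon$ small enough relative to $\gamma$. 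Granting this, the proof of \lemmaref{lemma:establish_agreement} goes through without change: there is a single $x \in [c]$ with $a(w,r-2) \in \{x,\infty\}$ for every correct $w$; in instruction~1c no value outside $\{x,\infty\}$ reaches the sampled $1/3$-threshold at any correct node (it is carried by no correct node, and the faulty samples number fewer than $\tfrac K3$), while $x$ does reach it (it is carried by a strict majority of the correct nodes, hence by a fraction above $\tfrac13$ of $V$ precisely because $\gamma > 0$); and the two-case analysis for $I_{3k+2}$ is as before. The one genuinely new point is instruction~2a, which reads the leader's broadcast value $a(k,r-1)$ rather than taking a vote: since the leader index $k$ is a deterministic function of the round counter $d(v,r)$, every correct node simply includes $k$ in its pull set $C(v,r)$, costing one extra message per round and keeping the per-round message complexity $\polylog\eta$. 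The proof of \lemmaref{lemma:maintain_agreement} is then immediate: once all correct nodes carry $x$, every sampled strong-majority test fires for $x$ and for no other value, so each of $I_{3k}$, $I_{3k+1}$, $I_{3k+2}$ increments $x$ modulo $c$ and sets $b = 1$.

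Finally I would close the probabilistic argument with a union bound over the $\eta^{O(1)}$ rounds considered: in each round each of the at most $\eta$ correct nodes contributes $O(1)$ bad events --- failure of the handful of concentration statements for its own sample, where in particular ``some spurious faulty-only value passes the $1/3$-test'' is subsumed by the single event ``fewer than $\tfrac K3$ faulty samples'' and thus costs one event per node and round, not one per faulty-reported value --- each of probability at most $\eta^{-k}$. Since the constant $k$ (equivalently, the multiple of $\log\eta$ in $K$) is ours to choose, taking $k$ larger than the exponent of this polynomial family of events plus the desired output exponent makes all the concentration statements hold simultaneously with high probability, and on that event the conclusions of \lemmaref{lemma:establish_agreement} and \lemmaref{lemma:maintain_agreement} hold deterministically by the replayed proofs. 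I expect the main obstacle to be not any individual step but the careful accounting of how the $2K/3$- and $K/3$-sampled thresholds interact with the bounded but adversarially placed faulty-sample contribution --- this is where the $\gamma > 0$ slack is spent --- together with checking that the bad-event family is genuinely only polynomially large and that samples are drawn independently of node states and across rounds, so that Chernoff and the union bound apply cleanly.
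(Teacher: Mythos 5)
Your proposal matches the paper's own proof: use statements (a)--(c) of \lemmaref{lemma:whp} to obtain a probabilistic analogue of \lemmaref{lemma:majority} under the sampled $2K/3$ and $K/3$ thresholds, replay the deterministic arguments of \lemmaref{lemma:establish_agreement} and \lemmaref{lemma:maintain_agreement} on the resulting good event, and close with a union bound over the $\eta^{O(1)}$ rounds and at most $\eta$ correct nodes per round. You additionally flag a detail the paper's proof leaves implicit---instruction 2a reads the leader's value $a(k,r-1)$ directly rather than via a threshold vote, so in the pulling model each correct node must deterministically include the current leader $k$ in its pull set---and your fix (an extra $O(1)$ pull per round, determined by the round counter) is the intended one and does not affect the $\polylog\eta$ per-round message bound.
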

\begin{proof}
The modified phase king algorithm given in \sectionref{ssec:consensus} uses two thresholds, $n-f$ and $f+1$. As discussed, these are replaced with threshold values of $2K/3$ and $K/3$ when taking $K \ge K_0(\eta, k, \gamma)$ samples. Using the statements of \lemmaref{lemma:whp}, we can argue analogously to the proofs of \lemmaref{lemma:establish_agreement} and \lemmaref{lemma:maintain_agreement}. 

First, to see that \lemmaref{lemma:establish_agreement} holds with high probability, note that from statements (b) and (c) of \lemmaref{lemma:whp}, it follows that if a node samples $2K/3$ times value $y$, then w.h.p.\ other nodes sample at least $K/3$ times the same value (that is, we get the probabilistic version of \lemmaref{lemma:majority}). Now we can follow the same reasoning as in \lemmaref{lemma:establish_agreement}. 

Similarly, it is straightforward to check that \lemmaref{lemma:maintain_agreement} holds with high probability: if all correct nodes agree on $a(\cdot)$, then all correct nodes sample at least $2K/3$ times the same value w.h.p.\ by statement (a) of \lemmaref{lemma:whp}. Thus, analogously as in the proof of \lemmaref{lemma:maintain_agreement}, we get that the agreement persists when executing $I_{3k}$, $I_{3k+1}$, or $I_{3k+2}$ with high probability.

Finally, we can apply the union bound over all $\eta^{O(1)}$ rounds and samples taken by correct nodes ($n-f \le \eta$ per round), that is, in total over $\eta^{O(1)}$ events. By choosing large enough $k = O(1)$, we get that the claim holds with probability $1 - \eta^{-k}$.
\end{proof}

\subsection{Randomised Resilience Boosting}\label{ssec:randomised-layering}

It remains to formulate the probabilistic variant of \theoremref{thm:twoblocks}. To this end, define $\mathcal{P}(n, f, c, \eta, k)$ as the family of probabilistic synchronous $c$-counters on $n$ nodes of resilience $f$. Here, probabilistic means that an algorithm $\vec P \in \mathcal{P}(n, f, c, \eta, k)$ with stabilisation time $T(\vec P)$ merely guarantees that it counts correctly with probability $1-\eta^{-k}$ in any given round $t \ge T(\vec P)$. 

Let $P(\vec P)$ denote the number of messages pulled \emph{per node} by a probabilistic counter $\vec P \in \mathcal{P}(n,f,c, \eta, k)$. For any deterministic algorithm $\vec A \in \mathcal{A}(n,f,c)$, we define $P(\vec A) = n$.

\begin{theorem}\label{thm:randomised-blocks}
Let $c,n > 1$ and $f < n/(3+\gamma)$, where $\gamma > 0$ and $n \le \eta$. Define $n_0 = \lfloor n/2 \rfloor$, $n_1 = \lceil n/2 \rceil$, $f_0 = \lfloor (f-1)/2 \rfloor$, $f_1 = \lceil (f-1)/2 \rceil$ and
$\tau = 3(f+2)$. If for $i \in \{0,1\}$ there exist synchronous counters $\vec
A_i \in \mathcal{A}(n_i, f_i, c_i)$ such that $c_i = 3^i \cdot 2\tau$, then for any sufficiently large $k = O(1)$, there exists a probabilistic synchronous $c$-counter $\vec B \in \mathcal{P}(n,f,c,\eta, k)$ that
 \begin{itemize}[noitemsep]
  \item stabilises in $T(\vec B) = \max \{ T(\vec A_0),T(\vec A_1) \} + O(f)$ rounds, 
  \item has state complexity of $S(\vec B) = \max \{ S(\vec A_0),S(\vec A_1) \} + O(\log f + \log c)$ bits, and
  \item each node pulls at most $P(\vec B) = \max \{ P(\vec A_0), P(\vec A_1) \} + O(\log \eta)$ messages per round.
 \end{itemize}
\end{theorem}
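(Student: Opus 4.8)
The plan is to replay the proof of \theoremref{thm:twoblocks} almost verbatim, replacing every step that relied on a node hearing \emph{all} of $V$ (or all of $V_i$) with the corresponding sampled step from \sectionref{ssec:sampling}. We partition $V = V_0 \cup V_1$ with $|V_i| = n_i$; each node of block~$i$ runs $\vec A_i$, which costs it $P(\vec A_i)$ pulls per round. In parallel, every node maintains the same local variables as in \tableref{tab:variables}, except that the vote-dependent quantities are now computed from fresh uniform samples of size $K = \Theta(\log\eta)$, chosen large enough to satisfy \lemmaref{lemma:whp}: $m^*_i(v,r)$ from a size-$K$ sample of $V_i$, $M^*_i(v,r)$ from a size-$K$ sample of $V$, and the phase-king threshold checks from a size-$K$ sample of $V$. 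This is a constant number of samples per node per round, so a node of block~$i$ pulls at most $P(\vec A_i) + O(K) = P(\vec A_i) + O(\log\eta)$ messages each round, giving $P(\vec B) = \max\{P(\vec A_0),P(\vec A_1)\} + O(\log\eta)$. The random contact sets are chosen afresh each round and need not be stored, so the state complexity is unchanged from \theoremref{thm:twoblocks}: the variables of \tableref{tab:variables} together with the state of $\vec A_i$ amount to $S(\vec B) = \max\{S(\vec A_0),S(\vec A_1)\} + O(\log\tau + \log c) = \max\{S(\vec A_0),S(\vec A_1)\} + O(\log f + \log c)$ bits.

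Correctness and the stabilisation bound then follow the deterministic argument step for step, with ``with high probability'' attached throughout. By \corollaryref{coro:probabilistic-round-counter}, with high probability there is a round $r = T + O(f)$, where $T = \max\{T(\vec A_0),T(\vec A_1)\}$, during which the once-in-a-while counter $d(\cdot,\cdot)$ behaves as a consistent $\tau$-counter over $\{r,\dots,r+\tau-1\}$ with $\tau = 3(f+2)$. Since $|\mathcal{F}| \le f$ leaves at least two correct nodes in $[f+2]$, some correct leader $k$ has its three phase-king instruction sets $I_{3k}, I_{3k+1}, I_{3k+2}$ executed consecutively and in order by all correct nodes within that window. \corollaryref{coro:randomised_phase_king} --- the probabilistic counterparts of \lemmaref{lemma:establish_agreement} and \lemmaref{lemma:maintain_agreement} --- then yields that, with high probability, all correct nodes agree on some $a(\cdot) \in [c]$ that increments by one modulo $c$ each round and that this agreement persists. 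Hence, conditioned on the relevant sampling events, $\vec B$ stabilises in $T(\vec B) = T + O(f)$ rounds, exactly as in \theoremref{thm:twoblocks}.

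The genuinely new ingredient --- and the step that needs care --- is turning the many per-sample guarantees of \lemmaref{lemma:whp} into the per-round guarantee demanded by the definition of $\mathcal{P}(n,f,c,\eta,k)$. Fix a round $t \ge T(\vec B)$. The conclusions of \corollaryref{coro:probabilistic-round-counter} and \corollaryref{coro:randomised_phase_king} that force correct counting in round~$t$ depend only on the samples drawn over a polynomial horizon of rounds --- exactly as in the proof of \corollaryref{coro:randomised_phase_king}, which already works with an $\eta^{O(1)}$-round window --- at $n - f \le \eta$ correct nodes, for the $O(1)$ sampled quantities per node, i.e.\ $\eta^{O(1)}$ events in total. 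By \lemmaref{lemma:whp}, each such event fails with probability at most $\eta^{-k'}$ once the hidden constant in $K = \Theta(\log\eta)$ is chosen large enough in terms of $k'$. Choosing $k'$ large enough in terms of the target $k$ (which only enlarges that constant) and applying a union bound, all correct nodes count correctly in round~$t$ with probability at least $1 - \eta^{-k}$, establishing $\vec B \in \mathcal{P}(n,f,c,\eta,k)$. I expect the bookkeeping of \emph{which} rounds must enter the union bound --- and confirming that their number stays polynomial in $\eta$, so that one constant in $K = \Theta(\log\eta)$ works uniformly across all levels of the recursion in \sectionref{ssec:randomised-layering} --- to be the main obstacle; everything else is a mechanical rerun of \theoremref{thm:twoblocks}.
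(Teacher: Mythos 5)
Your proposal is correct and follows essentially the same route as the paper: replay the proof of \theoremref{thm:twoblocks} with sampled votes, invoke \corollaryref{coro:probabilistic-round-counter} for the once-in-a-while counter, invoke \corollaryref{coro:randomised_phase_king} for establishing and maintaining agreement under the phase-king steps, and account for the extra $O(\log\eta)$ pulls per round from the $O(1)$ size-$K$ samples. The union-bound bookkeeping you flag as the main obstacle is the same calculation the paper already carries out inside \corollaryref{coro:randomised_phase_king}, so your proof is a somewhat more explicit rendering of the paper's terse argument rather than a different one.
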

\begin{proof}
 The proof proceeds analogously to the proof of \theoremref{thm:twoblocks}. First, we apply \corollaryref{coro:probabilistic-round-counter} to get a round counter that works once in a while with high probability. We can then use this to clock the randomised phase king and \corollaryref{coro:randomised_phase_king} implies that the new output counter will reach agreement in $O(f)$ rounds with high probability. The time and state complexities are as in the proof of \theoremref{thm:twoblocks}.

To analyse the number of pulls, observe that in \lemmaref{lemma:whp-voting} each node samples twice $K = O(\log \eta)$ messages (from both $V_0$ and $V_1$) and \corollaryref{coro:randomised_phase_king} samples $O(\log \eta)$ messages from all the nodes. Thus, in total, a node $v \in V_i$ samples $O(\log \eta)$ messages in addition to the messages pulled when executing $\vec A_i$.
\end{proof}

Note that we can choose to replace $\vec{A}\in \mathcal{A}(n, f, c)$ by $\vec{Q} \in \mathcal{P}(n, f, c, \eta, k)$ when applying this theorem, arguing that with high probability it \emph{behaves} like a corresponding algorithm $\vec{A}\in \mathcal{A}(n, f, c)$ for polynomially many rounds. Furthermore, note that it is also possible to boost the probability of success, and thus the period of stability, by simply increasing the sample size. For instance, sampling $\polylog \eta$ messages yields an error probability of $\eta^{-\polylog \eta}$ in each round, whereas in the extreme case, by ``sampling'' all nodes the algorithm reduces to the deterministic case.

Using \theoremref{thm:randomised-blocks} recursively as in \sectionref{sec:recursion} for $O(\log f)$ steps, we get the following result.

\begin{theorem}\label{thm:random-alg}
For any integers $c,n > 1$, $f < n/(3+\gamma)$, there exists an $f$-resilient probabilistic synchronous $c$-counter that runs on $n$ nodes, requires $O(\log^2 f+\log c)$ bits to encode the state of a node, has each node pull $O(\log f \log n)$ messages per round, and stabilises in $O(f)$ rounds with probability $1 - n^{-k}$, where $k>0$ is a freely chosen constant.
\end{theorem}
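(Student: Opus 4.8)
The plan is to derive \theoremref{thm:random-alg} from \theoremref{thm:randomised-blocks} by exactly the recursion used to prove \theoremref{thm:main-theorem} in \sectionref{sec:recursion}, with \theoremref{thm:randomised-blocks} now playing the role of \theoremref{thm:twoblocks}. The recursion never introduces new nodes, so the total system size stays at $n$; I therefore instantiate the parameter $\eta = n$ and fix the internal confidence parameter of every application of \theoremref{thm:randomised-blocks} to a sufficiently large constant $k'$, to be pinned down at the end. The induction hypothesis parallels the one in the proof of \theoremref{thm:main-theorem}: for every $f' < f$, every $c > 1$, and every $n' > (3+\gamma) f'$ there is a probabilistic counter $\vec P \in \mathcal{P}(n', f', c, n, k')$ with $T(\vec P) \le 1 + \alpha f' \sum_{j=0}^{\lceil \log f' \rceil} 2^{-j} = O(f')$, with $S(\vec P) \le \beta(\log^2 f' + \log c)$, and with $P(\vec P) \le \beta' (\log f') \log n$, for suitable constants $\alpha, \beta, \beta'$. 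The base case $f' = 0$ is trivial: a single designated node counts locally (it is correct, since there are no faults), and each of the other $n' - 1$ nodes pulls once from that node and copies its value; this costs one extra round, $O(\log c)$ state bits, and one pull, matching the deterministic base case. As in the proof of \theoremref{thm:main-theorem}, this padding trick also lets us pass from an arbitrary $n$ to the size actually consumed by the recursion, so we may assume $n$ equals that size.

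For the inductive step to $f$, split $f = f_0 + f_1 + 1$ and $n = n_0 + n_1$ exactly as in \theoremref{thm:randomised-blocks}. Since $f_i \le f/2$ and, by \remarkref{remark:faulty-bound}, $f_i/n_i < 1/(3+\gamma)$ (hence $n_i > (3+\gamma) f_i$), the induction hypothesis supplies $\vec P_i \in \mathcal{P}(n_i, f_i, c_i, n, k')$ with $c_i = 3^i \cdot 2\tau$; as noted right after \theoremref{thm:randomised-blocks}, these probabilistic building blocks may be substituted for the deterministic $\vec A_i$, since with high probability each $\vec P_i$ is indistinguishable from a member of $\mathcal{A}(n_i, f_i, c_i)$ throughout the $n^{O(1)}$ rounds examined at this level. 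Applying \theoremref{thm:randomised-blocks} then yields $\vec P$ with $T(\vec P) = \max\{T(\vec P_0), T(\vec P_1)\} + O(f)$, which telescopes to $O(f)$ by the verbatim geometric-series computation from the proof of \theoremref{thm:main-theorem}; with $S(\vec P) = \max\{S(\vec P_i)\} + O(\log f + \log c) = O(\log^2 f + \log c)$ for $\beta$ large; and with $P(\vec P) = \max\{P(\vec P_i)\} + O(\log \eta) = O((\log f)\log n)$, since the recursion runs for $O(\log f)$ levels, each adding $O(\log n)$ extra pulls.

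Bounding the overall failure probability is the step I expect to be the main obstacle. Each call of \theoremref{thm:randomised-blocks} behaves like its deterministic counterpart provided every sampling event in \lemmaref{lemma:whp-voting} and \corollaryref{coro:randomised_phase_king} resolves as predicted; each such event fails with probability at most $n^{-k'}$, because the sample size is $K = \Theta(\log n)$ with the hidden constant chosen as a function of $k'$. The delicate point is that the probabilistic sub-counters carry only the per-round guarantee of $\mathcal{P}(\cdot)$, so a counter at recursion level $\ell$ must behave correctly throughout a window of $O(f) = n^{O(1)}$ rounds for the level above to exploit it; per round this involves $O(n)$ nodes each drawing $O(\log n)$ samples, so level $\ell$ contributes $2^\ell \cdot n^{O(1)}$ events, and one must verify that these nested windows do not compound super-polynomially over the $O(\log f)$ levels. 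They do not: summing $2^\ell \cdot n^{O(1)}$ over $\ell \le O(\log f)$ and using $2^{O(\log f)} \le f \le n$ still leaves only $n^{O(1)}$ events in total. Choosing $k'$ to exceed the degree of that polynomial plus the target constant $k$, a single union bound over all levels, counters, rounds in the relevant windows, nodes, and samples shows that with probability $1 - n^{-k}$ the whole recursive construction is indistinguishable from the deterministic algorithm of \theoremref{thm:main-theorem} over the $n^{O(1)}$ rounds under consideration, and in particular stabilises within $O(f)$ rounds. Keeping $f_i/n_i$ bounded away from $1/3$ by the fixed slack $\gamma$ at every level --- which is exactly \remarkref{remark:faulty-bound} --- is what makes \lemmaref{lemma:sampling}, and hence this whole argument, applicable simultaneously at all recursion levels.
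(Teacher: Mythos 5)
Your proposal is correct and follows exactly the route the paper takes (and merely gestures at): recursively apply \theoremref{thm:randomised-blocks} in place of \theoremref{thm:twoblocks} for $O(\log f)$ levels with $\eta=n$, invoking \remarkref{remark:faulty-bound} to preserve the $f_i/n_i < 1/(3+\gamma)$ invariant so that \lemmaref{lemma:sampling} applies throughout. The paper states this in one line, whereas you spell out the union-bound bookkeeping over levels, rounds, nodes, and samples that justifies replacing each deterministic $\vec A_i$ with a probabilistic $\vec P_i$; that accounting is the implicit content of the paper's remark after \theoremref{thm:randomised-blocks}, and your $n^{O(1)}$-events argument for why nothing compounds super-polynomially is the right way to make it rigorous.
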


\subsection{Oblivious Adversary}

Finally, we remark that under an \emph{oblivious adversary}, that is, an adversary that picks the set of faulty nodes independently of the randomness used by the non-faulty nodes, we get \emph{pseudorandom} synchronous counters satisfying the following: (1) the execution stabilises with high probability and (2) if the execution stabilises, then all non-faulty nodes will deterministically count correctly. Put otherwise, we can fix the random bits used by the nodes to sample the communication links \emph{once}, and with high probability we sample sufficiently many communication links to non-faulty nodes for the algorithm to (deterministically) stabilise. This gives us the following result.

\begin{corollary}
For any integers $c,n > 1$, $f < n/(3+\gamma)$, there exists a pseudorandom synchronous $c$-counter with resilience $f$ against an oblivious fault pattern that runs on $n$ nodes, requires $O(\log^2 f + \log c)$ bits to encode the state of a node, has each node pull $O(\log f \log n)$ messages per round, and stabilises in $O(f)$ rounds.
\end{corollary}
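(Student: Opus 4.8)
The plan is to derandomise the probabilistic algorithm of \theoremref{thm:random-alg}, exploiting that against an \emph{oblivious} adversary the faulty set $\mathcal F$ is committed before, and independently of, the random sampling choices of the correct nodes; here $\eta = n$. Conditioning on any fixed $\mathcal F$, the only remaining randomness is the choice of sampled channels, and the probabilistic algorithm deviates from a valid deterministic execution only when some sampling event underlying \lemmaref{lemma:whp} (or one of the sampled thresholds of the randomised phase king, \corollaryref{coro:randomised_phase_king}) returns an incorrect answer. I would show that, with probability $1-n^{-k}$ over the random choices alone, \emph{no} such event ever fails; fixing any such choice then yields a \emph{deterministic} algorithm whose execution coincides with a valid execution of a deterministic counter, so that it inherits the $O(\log^2 f+\log c)$ state bound, the $O(f)$ stabilisation time, and --- since the sample size is $K=\Theta(\log n)$ at each of the $O(\log f)$ recursion levels --- the $O(\log f\log n)$ per-node pull bound claimed in the statement.

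First I would treat the finitely many rounds up to the stabilisation round $R = O(f)$ of \theoremref{thm:random-alg}. At each such round, at each recursion level and each correct node, a constant number of sampling events is invoked, and by \lemmaref{lemma:whp} each fails with probability at most $\eta^{-k'}$ once $K \ge K_0(\eta,k',\gamma) = \Theta(\log\eta)$. Since there are only $\poly(n)$ such events in total, a union bound shows that for $k'$ a sufficiently large constant the probability that \emph{any} of them fails is at most $n^{-k}/2$. On the complementary event the execution agrees round by round with a valid deterministic execution, so by \theoremref{thm:randomised-blocks} and \theoremref{thm:random-alg} all correct nodes agree on the output counter by round $R$ and, at every level, the underlying counters have stabilised.

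Second I would handle the infinitely many rounds after stabilisation, which cannot be absorbed into the union bound above. The key point is that, once all correct nodes agree on an output value, \emph{maintaining} this agreement at every recursion level no longer needs the full force of \lemmaref{lemma:whp}: by the argument of \lemmaref{lemma:maintain_agreement} together with \lemmaref{lemma:majority}, it suffices that in every round each correct node's sample contains at most a $1/3$-fraction of faulty nodes --- a property of the fixed sample and of $\mathcal F$ alone, independent of the round and of the configuration reached. Since $f < n/(3+\gamma)$, a single Chernoff bound gives this for a fixed size-$K$ sample with probability $1-\eta^{-k'}$, and a union bound over the $O(n\log f)$ samples (a constant number per node per level) bounds the failure probability by $n^{-k}/2$. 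Hence with probability $1-n^{-k}$ the chosen random bits are ``good'' in both senses simultaneously, and on this event the resulting deterministic algorithm stabilises within $O(f)$ rounds and counts correctly in every subsequent round.

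The hard part is reconciling two competing requirements. The clean union bound over the pre-stabilisation rounds relies on the sample used in round $r$ being independent of the configuration reached in round $r$, which is automatic when nodes draw fresh samples each round; but ``fixing the bits once'' --- together with self-stabilisation, since the nodes never learn the round index --- pushes in the opposite direction of reusing the \emph{same} sample in every round, which reintroduces exactly this dependence. The resolution is the two-phase analysis above: for the finitely many pre-stabilisation rounds one keeps the independent per-round samples of \theoremref{thm:random-alg} (restricted to a polynomially long prefix), whereas after stabilisation only the round-independent ``balanced sample'' condition is required, so there it is harmless to keep reusing those samples; making the switch transparent to the nodes --- e.g.\ by noting that, once stabilised, the output counter itself provides a consistent index into the fixed sample sequence --- is routine but slightly delicate bookkeeping.
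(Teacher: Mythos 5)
Your proposal is substantially more careful than the paper's own treatment: the paper disposes of this corollary with a two-sentence remark before the statement, asserting that one may fix the random bits once and that with high probability enough links to non-faulty nodes are sampled for the algorithm to stabilise deterministically, without addressing how a self-stabilising algorithm (whose nodes have no access to a global round counter) indexes into committed randomness, nor why the union bound survives the resulting correlations between samples and configurations. You have correctly identified the hidden subtlety, and your two-phase decomposition is exactly the right framework: during the $O(f)$ pre-stabilisation rounds, union-bound over per-round sampling events, exploiting that the configuration at round $r$ is a deterministic function of the samples consumed in rounds $<r$, so a fresh sample at round $r$ is independent of the configuration it faces; after stabilisation, only the round-independent statement~(a) of \lemmaref{lemma:whp} is needed, since \lemmaref{lemma:maintain_agreement} (unlike \lemmaref{lemma:establish_agreement}) never invokes statements (b) or (c), which are the configuration-dependent ones.

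Where your argument has a gap is in the indexing mechanism that you defer as ``routine but slightly delicate bookkeeping.'' You propose indexing the sample sequence by the output counter, but this is only consistent \emph{after} stabilisation, which is the phase where it no longer matters (statement~(a) is a per-sample property, independent of which index a node happens to use when). Pre-stabilisation --- exactly where freshness is essential --- the output counter $a(v,r)$ is a function of the round-$r$ configuration and is neither monotone nor injective; if $a(v,r_1)$ and $a(v,r_2)$ coincide modulo the period for some $r_1<r_2$ before stabilisation, the sample at round $r_2$ is a reuse that correlates with the very configuration it is supposed to serve, and the union bound in your first phase collapses. The clean resolution is to equip each node with an auxiliary, free-running local counter modulo a period of length $\Theta(f)$, incremented every round and requiring no agreement across nodes, and to index the fixed sample sequence by this counter. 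Since the adversary is oblivious, the initial counter values are set before the samples are drawn; hence over any $\Theta(f)$ consecutive rounds each node consumes distinct, fresh samples, the union bound over the $O(f)$ pre-stabilisation rounds goes through, and the subsequent cyclic reuse is harmless because only statement~(a) is required. This adds only $O(\log f)$ extra state bits, leaving the claimed $O(\log^2 f + \log c)$ state bound and $O(\log f \log n)$ pull bound intact.
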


\section{Conclusions}

In this work, we showed that there exist algorithms for synchronous counting that (1) are deterministic, (2) tolerate the optimal number of faults, (3) have asymptotically optimal stabilisation time, and (4) need to store \emph{and} communicate a very small number of bits between consecutive rounds---something no prior algorithms have been able to do. 

In addition, we discussed two complementary approaches on how to further reduce the total number of communicated bits in the network. The first one is a deterministic construction that lets the nodes communicate only few bits after stabilisation, in order to verify that stabilisation has occurred and that the counters agree. The construction retains all properties (1)--(4), and in particular, when constructing polynomially-sized counters with linear resilience, the algorithm communicates an asymptotically optimal number of bits after stabilisation.

The second technique for reducing the amount of communication is based on random sampling of communication channels. Here, we employed randomisation so that each node needs to communicate only with $\polylog n$ instead of $n-1$ other nodes in the system, thus reducing the number of messages sent from $\Theta(n^2)$ to $\Theta(n \polylog n)$. The trade-off here is that the resulting algorithm has \emph{slightly} suboptimal resilience of $f < n/(3+\gamma)$, where $\gamma > 0$ is a constant, and is merely guaranteed to work for polynomially many rounds with high probability before a new stabilisation phase is required. The latter issue disappears when employing pseudorandomness. In this case, one may simply fix a random topology and the algorithm will not fail again after stabilisation; naturally, this necessitates that the Byzantine faulty nodes are chosen in an oblivious manner, i.e., independently of the topology.

We can also combine both techniques to attain probabilistic counters that during stabilisation communicate $\Theta(n \polylog n)$ bits each round and after stabilisation asymptotically optimal $O(1)$ bits every $\Theta(n)$ rounds.

To conclude the paper, we now wish to highlight some interesting problems that still remain open:
\begin{enumerate}[label=Q\arabic*.]
    \item Our solutions are not adaptive (as defined in~\cite{kutten05adaptive}), as their stabilisation time is not bounded by a function of the number of \emph{actual} permanent faults. Can this be achieved?
    \item Are there algorithms that satisfy (1)--(3), but need to store and communicate substantially fewer than $\log^2 f$ bits? This question has been partially answered in follow-up work~\cite{lenzen16near-optimal}, showing that $O(\log f)$ bits suffice. However, no non-trivial lower bound is known, so it remains open whether $o(\log f)$ bits suffice.
    \item Can the ideas presented in this paper be applied to \emph{randomised} consensus routines in order to achieve sublinear stabilisation time with high resilience and small communication overhead? Again, a partial answer is provied in~\cite{lenzen16near-optimal}: this is possible, but the given solutions may still fail \emph{after} stabilisation (with a very small probability per round). The question thus remains open w.r.t.\ the original problem definition, which requires that after stabilisation the algorithm keeps counting correctly.
\end{enumerate}

Finally, we point out that the recursive approach we employ in this paper can be interpreted as an extension of its similar use in synchronous consensus routines~\cite{berman89consensus,berman92optimal}, where the shared round counter is implicitly given by the synchronous start.
\begin{enumerate}[resume*]
    \item Can a similar recursive approach also be used for deriving improved \emph{pulse synchronisation}~\cite{dolev04clock-synchronization,dolev07actions} algorithms?
\end{enumerate}
Interestingly, no reduction from consensus to pulse synchronisation is known, so there is still hope for efficient deterministic pulse synchronisation algorithms that stabilise in sublinear time.

\paragraph{Acknowledgements.} We thank all the anonymous reviewers for helpful comments. 

\DeclareUrlCommand{\Doi}{\urlstyle{same}}
\renewcommand{\doi}[1]{\href{http://dx.doi.org/#1}{\footnotesize\sf
    doi:\Doi{#1}}} \bibliographystyle{plainnat}
\bibliography{optimal-counting}

\end{document}